\tikzset{
  circ/.style = {circle,draw,fill,inner sep=1pt},
  lcirc/.style = {circle,draw,fill,inner sep=4pt},
  scirc/.style = {circle,draw,fill,inner sep=.5pt},
  mcirc/.style = {circle,draw,fill,inner sep=2pt},
  invisible/.style = {circle,draw=none,inner sep=0pt,font=\tiny}
}
\tikzset{clause/.pic={
	\node[lcirc] (q1) at (4,0) [] {};
	\node[lcirc] (p2) at (8,0) [] {};
	\node[lcirc] (p1) at (2,3) [] {};
	\node[lcirc] (o) at (6,3) [] {};
	\node[lcirc] (q2) at (10,3) [] {};
	\node[lcirc] (q3) at (4,6) [] {};
	\node[lcirc] (p3) at (8,6) [] {};
    \draw[-]
    (q1) -- (o) -- (p2) -- (q2) -- (o) -- (p3) -- (q3) -- (o) -- (p1) -- (q1)
    (p1) -- (0,4)
    (q1) -- (4,-2)
    (p2) -- (8,-2)
    (q2) -- (12,4)
    (p3) -- (10,7)
    (q3) -- (2,7);
	\draw
	(0,4) circle (0.4) circle (0.6)
	(4,-2) circle (0.4) circle (0.6)
	(8,-2) circle (0.4) circle (0.6)
	(12,4) circle (0.4) circle (0.6)
	(10,7) circle (0.4) circle (0.6)
	(2,7) circle (0.4) circle (0.6);
}}
\tikzset{clausebis/.pic={
	\node[mcirc] (q1) at (4,0) [] {};
	\node[mcirc] (p2) at (8,0) [] {};
	\node[mcirc] (p1) at (2,3) [] {};
	\node[mcirc] (o) at (6,3) [] {};
	\node[mcirc] (q2) at (10,3) [] {};
	\node[mcirc] (q3) at (4,6) [] {};
	\node[mcirc] (p3) at (8,6) [] {};
    \draw[-]
    (q1) -- (o) -- (p2) -- (q2) -- (o) -- (p3) -- (q3) -- (o) -- (p1) -- (q1)
    (p1) -- (0,4)
    (q1) -- (4,-2)
    (p2) -- (8,-2)
    (q2) -- (12,4)
    (p3) -- (10,7)
    (q3) -- (2,7);
	\draw
	(0,4) circle (0.25) circle (0.4)
	(4,-2) circle (0.25) circle (0.4)
	(8,-2) circle (0.25) circle (0.4)
	(12,4) circle (0.25) circle (0.4)
	(10,7) circle (0.25) circle (0.4)
	(2,7) circle (0.25) circle (0.4);
}}
\tikzset{clause_terminals/.pic={
	\path pic (c) {clause};
	\node[lcirc, cyan] (-t1) at (0,0) [] {};
	\node[lcirc, cyan] (-t2) at (12,0) [] {};
	\node[lcirc, cyan] (-t3) at (6,9) [] {};
    \draw[cyan]
    (2,3) -- (-t1) -- (4,0)
    (8,0) -- (-t2) -- (10,3)
    (8,6) -- (-t3) -- (4,6);
}}
\tikzset{variable/.pic={
	\node[lcirc] (a) at (6,2) [] {};
	\node[lcirc] (b) at (6,6) [] {};
	\node[lcirc] (c) at (2,6) [] {};
	\node[lcirc] (d) at (2,2) [] {};
	\node[lcirc] (e) at (4,4) [] {};
    \draw[-]
    (7,1) -- (a) -- (b) -- (c) -- (d) -- (a) -- (e) -- (b) -- (7,7)
    (1,7) -- (c) -- (e) -- (d) -- (1,1);
    \draw
	(1,1) circle (0.4) circle (0.6)
	(1,7) circle (0.4) circle (0.6)
	(7,1) circle (0.4) circle (0.6)
	(7,7) circle (0.4) circle (0.6);
}}
\tikzset{variablebis/.pic={
	\node[mcirc] (a) at (6,2) [] {};
	\node[mcirc] (b) at (6,6) [] {};
	\node[mcirc] (c) at (2,6) [] {};
	\node[mcirc] (d) at (2,2) [] {};
	\node[mcirc] (e) at (4,4) [] {};
    \draw[-]
    (7,1) -- (a) -- (b) -- (c) -- (d) -- (a) -- (e) -- (b) -- (7,7)
    (1,7) -- (c) -- (e) -- (d) -- (1,1);
    \draw
	(1,1) circle (0.25) circle (0.4)
	(1,7) circle (0.25) circle (0.4)
	(7,1) circle (0.25) circle (0.4)
	(7,7) circle (0.25) circle (0.4);
}}
\tikzset{variable_terminals/.pic={
	\path pic (v) {variable};
	\node[lcirc, cyan] (-t1) at (0,4) [] {};
	\node[lcirc, cyan] (-t2) at (4,0) [] {};
	\node[lcirc, cyan] (-t3) at (8,4) [] {};
    \draw[cyan] (2,6) -- (-t1) -- (2,2) -- (-t2) -- (6,2) -- (-t3) -- (6,6);
}}
\tikzset{clause_oppose/.pic={
    \draw[thick,<->,>=stealth] (0,2) -- (2,2) -- (2,0) node[near start,above] {};
    \draw[thick,->,>=stealth] (-1,1) -- (1,1) -- (1,2) node[near start,above] {};
    \draw[thick,->,>=stealth] (0,3) -- (0,1) node[near start,above] {};
    \draw[thick,->,>=stealth] (3,-1) -- (3,1) -- (2,1) node[near start,above] {};
    \draw[thick,->,>=stealth] (1,0) -- (3,0) node[near start,above] {};
    \draw[thick,->,>=stealth] (2,4) -- (2,2) node[near start,above] {};
    \draw[thick,->,>=stealth] (4,2) -- (2,2) node[near start,above] {};
    
    \draw[red,thick,<->,>=stealth] (2,3) -- (3,3) -- (3,2);
    \draw[red,thick,<->,>=stealth] (-3,0) -- (0,0) -- (0,1);
    \draw[red,thick,<->,>=stealth] (6,0) -- (3,0);
    
    \draw
    (-1,1) circle (0.2) circle (0.3)
    (0,3) circle (0.2) circle (0.3)
    (3,-1) circle (0.2) circle (0.3)
    (1,0) circle (0.2) circle (0.3)
    (2,4) circle (0.2) circle (0.3)
    (4,2) circle (0.2) circle (0.3);
}}
\tikzset{clause_other/.pic={
    \draw[thick,<->,>=stealth] (0,-2) -- (0,0) -- (2,0) node[near start,above] {};
    \draw[thick,->,>=stealth] (-1,-2) -- (1,-2) node[near start,above] {};
    \draw[thick,->,>=stealth] (1,-3) -- (1,-1) -- (0,-1) node[near start,above] {};
    \draw[thick,->,>=stealth] (2,-1) -- (2,1) node[near start,above] {};
    \draw[thick,->,>=stealth] (3,1) -- (1,1) -- (1,0) node[near start,above] {};
    \draw[thick,->,>=stealth] (0,2) -- (0,0) node[near start,above] {};
    \draw[thick,->,>=stealth] (-2,0) -- (0,0) node[near start,above] {};
    
    \draw[red,thick,<->,>=stealth] (-1,0) -- (-1,1) -- (0,1);
    \draw[red,thick,<->,>=stealth] (2,4) -- (2,1);
    \draw[red,thick,<->,>=stealth] (5,-2) -- (1,-2);
    
    \draw
    (-1,-2) circle (0.2) circle (0.3)
    (1,-3) circle (0.2) circle (0.3)
    (2,-1) circle (0.2) circle (0.3)
    (3,1) circle (0.2) circle (0.3)
    (0,2) circle (0.2) circle (0.3)
    (-2,0) circle (0.2) circle (0.3);
}}
\tikzset{001_oppose/.pic={
    \draw[thick,<->,>=stealth] (0,1) -- (1,1) node[near start,above] {};
    \draw[thick,->,>=stealth] (-2,2) -- (1,2) -- (1,1) node[near start,above] {};
    \draw[thick,->,>=stealth] (-2,1) -- (0,1) -- (0,2) node[near start,above] {};
    \draw[thick,->,>=stealth] (3,1) -- (1,1) -- (1,0) node[near start,above] {};
    \draw[thick,->,>=stealth] (3,0) -- (0,0) -- (0,1) node[near start,above] {};
    
    \draw[red,thick,<->,>=stealth] (-1,1) -- (-1,0) -- (0,0);
    \draw[red,thick,<->,>=stealth] (1,2) -- (2,2) -- (2,1);
    \draw[red,thick,<->,>=stealth] (0,5) -- (0,2);
    
    \draw
    (-2,2) circle (0.2) circle (0.3)
    (-2,1) circle (0.2) circle (0.3)
    (3,1) circle (0.2) circle (0.3)
    (3,0) circle (0.2) circle (0.3);
}}
\tikzset{001_other/.pic={
    \draw[thick,<->,>=stealth] (1,0) -- (1,1) node[near start,above] {};
    \draw[thick,->,>=stealth] (0,-2) -- (0,1) -- (1,1) node[near start,above] {};
    \draw[thick,->,>=stealth] (1,-2) -- (1,0) -- (0,0) node[near start,above] {};
    \draw[thick,->,>=stealth] (1,3) -- (1,1) -- (2,1) node[near start,above] {};
    \draw[thick,->,>=stealth] (2,3) -- (2,0) -- (1,0) node[near start,above] {};
    
    \draw[red,thick,<->,>=stealth] (1,-1) -- (2,-1) -- (2,0);
    \draw[red,thick,<->,>=stealth] (0,1) -- (0,2) -- (1,2);
    \draw[red,thick,<->,>=stealth] (-3,0) -- (0,0);
    
    \draw
    (0,-2) circle (0.2) circle (0.3)
    (1,-2) circle (0.2) circle (0.3)
    (1,3) circle (0.2) circle (0.3)
    (2,3) circle (0.2) circle (0.3);
}}
\tikzset{110_oppose/.pic={
    \draw[thick,<->,>=stealth] (1,0) -- (1,1) node[near start,above] {};
    \draw[thick,->,>=stealth] (0,-1) -- (0,1) -- (1,1) node[near start,above] {};
    \draw[thick,->,>=stealth] (1,-1) -- (1,0) -- (0,0) node[near start,above] {};
    \draw[thick,->,>=stealth] (1,3) -- (1,1) -- (2,1) node[near start,above] {};
    \draw[thick,->,>=stealth] (2,3) -- (2,0) -- (1,0) node[near start,above] {};
    
    \draw[red,thick,<->,>=stealth] (0,1) -- (0,2) -- (1,2);
    \draw[red,thick,<->,>=stealth] (-3,0) -- (0,0);
    \draw[red,thick,<->,>=stealth] (6,1) -- (2,1);
    
    \draw
    (0,-1) circle (0.2) circle (0.3)
    (1,-1) circle (0.2) circle (0.3)
    (1,3) circle (0.2) circle (0.3)
    (2,3) circle (0.2) circle (0.3);
}}
\tikzset{110_other/.pic={
    \draw[thick,<->,>=stealth] (1,0) -- (1,1) node[near start,above] {};
    \draw[thick,->,>=stealth] (0,-2) -- (0,1) -- (1,1) node[near start,above] {};
    \draw[thick,->,>=stealth] (1,-2) -- (1,0) -- (0,0) node[near start,above] {};
    \draw[thick,->,>=stealth] (1,2) -- (1,1) -- (2,1) node[near start,above] {};
    \draw[thick,->,>=stealth] (2,2) -- (2,0) -- (1,0) node[near start,above] {};
    
    \draw[red,thick,<->,>=stealth] (1,-1) -- (2,-1) -- (2,0);
    \draw[red,thick,<->,>=stealth] (-3,0) -- (0,0);
    \draw[red,thick,<->,>=stealth] (3,5) -- (3,1) -- (2,1);
    
    \draw
    (0,-2) circle (0.2) circle (0.3)
    (1,-2) circle (0.2) circle (0.3)
    (1,2) circle (0.2) circle (0.3)
    (2,2) circle (0.2) circle (0.3);
}}
\theoremstyle{plain}
\newtheorem{theorem}[equation]{Theorem}
\newtheorem{corollary}[equation]{Corollary}
\newtheorem{lemma}[equation]{Lemma}
\newtheorem{observation}{Observation}{\normalfont\bfseries}{\itshape}
\newtheorem{Claim}{Claim}{\normalfont\bfseries}{\itshape}
\newtheorem{fact}{Fact}
\newenvironment{cproof}[1]{\par\indent{\textit{Proof.}}\space#1}{\hfill $\diamondsuit$}
\begin{document}
\begin{frontmatter}

\title{CPG graphs: Some structural and hardness results}

\author[1]{Nicolas Champseix}
\author[2]{Esther Galby}\ead{esther.galby@unifr.ch}
\author[3]{Andrea Munaro}\ead{a.munaro@qub.ac.uk}
\author[2]{Bernard Ries}\ead{bernard.ries@unifr.ch}

\address[1]{\'Ecole Normale Sup\'erieure de Lyon, France}
\address[2]{University of Fribourg, Switzerland}
\address[3]{Queen's University Belfast, United Kingdom}


\begin{abstract}
In this paper we continue the systematic study of \textit{Contact graphs of Paths on a Grid} (\textit{CPG graphs}) initiated in \cite{cpg}. A CPG graph is a graph for which there exists a collection of pairwise interiorly disjoint paths on a grid in one-to-one correspondence with its vertex set such that two vertices are adjacent if and only if the corresponding paths touch at a grid-point. If every such path has at most $k$ bends for some $k \geq 0$, the graph is said to be $B_k$-CPG. 

We first show that, for any $k \geq 0$, the class of $B_k$-CPG graphs is strictly contained in the class of $B_{k+1}$-CPG graphs even within the class of planar graphs, thus implying that there exists no $k \geq 0$ such that every planar CPG graph is $B_k$-CPG. The main result of the paper is that recognizing CPG graphs and $B_k$-CPG graphs with $k \geq 1$ is $\mathsf{NP}$-complete. Moreover, we show that the same remains true even within the class of planar graphs in the case $k \geq 3$. We then consider several graph problems restricted to CPG graphs and show, in particular, that {\sc Independent Set} and {\sc Clique Cover} remain $\mathsf{NP}$-hard for $B_0$-CPG graphs. Finally, we consider the related classes $B_k$-EPG of edge-intersection graphs of paths with at most $k$ bends on a grid. Although it is possible to optimally color a $B_0$-EPG graph in polynomial time, as this class coincides with that of interval graphs, we show that, in contrast, {\sc 3-Colorability} is $\mathsf{NP}$-complete for $B_1$-EPG graphs. 
\end{abstract}

\begin{keyword} CPG graphs \sep EPG graphs \sep Planar graphs \sep Recognition \sep $\mathsf{NP}$-hardness  
\end{keyword}

\end{frontmatter}

\section{Introduction}

\citet{golumbic} introduced the class of \textit{Edge intersection graphs of Paths on a Grid} (\textit{EPG} for short) that is, graphs for which there exists a collection of paths on a grid in one-to-one correspondence with their vertex set such that two vertices are adjacent if and only if the corresponding paths intersect on at least one grid-edge. Since every graph is EPG \cite{golumbic}, a natural restriction which was forthwith considered consists in limiting the number of \textit{bends} (90 degree turns at a grid-point) that the paths may have: a graph is $B_k$-EPG, for some integer $k \geq 0$, if it has an EPG representation in which each path has at most $k$ bends. 

Later on, \citet{asinowski} considered the closely related class of \textit{Vertex intersection graphs of Paths on a Grid} (\textit{VPG} for short). Likewise, the vertices of such graphs may be represented by paths on a grid but two vertices are adjacent if and only if the corresponding paths intersect on at least one grid-point. In fact, this class coincides with that of intersection graphs of curves in the plane, also known as string graphs \citep{asinowski}. Similarly to EPG, they then defined the class $B_k$-VPG, for $k \geq 0$, consisting of those VPG graphs admitting a VPG representation where each path has at most $k$ bends. 

Since their introduction, EPG and VPG graphs have been extensively studied (see, e.g., \citep{alcon,NCA,asinowski,biedl,bonomo,cohen,epstein,felsner,francis,heldt1,heldt2,PR17}). 
One of the main considered problems consists in determining the \textit{bend-number with respect to EPG} (resp. \textit{VPG}) \textit{representations} of a given graph class $\mathcal{G}$: the bend number of $\mathcal{G}$ with respect to EPG (resp. VPG) representations is the minimum integer $k \geq 0$ such that every graph in $\mathcal{G}$ is $B_k$-EPG (resp. $B_k$-VPG). For instance, \citet{gonc} showed that planar graphs have a bend number with respect to VPG representations of $1$; the bend number of planar graphs with respect to EPG representations has yet to be determined (it is either $3$ or $4$ by~\cite{heldt1}).

In this paper, we are interested in the contact counterpart of these graph classes, namely the class of \textit{Contact graphs of Paths on a Grid} (\textit{CPG} for short): a graph $G=(V,E)$ is a \textit{CPG graph} if there exists a collection $\mathcal{P}$ of pairwise interiorly disjoint paths on a grid $\mathcal{G}$ in one-to-one correspondence with $V$ such that two vertices are adjacent if and only if the corresponding paths touch. If every such path has at most $k$ bends, for some $k \geq 0$, then the graph is \textit{$B_k$-CPG}. The pair $\mathcal{R} = (\mathcal{G}, \mathcal{P})$ is a \textit{CPG representation of $G$} and, more specifically, a \textit{$k$-bend CPG representation of $G$} if every path in $\mathcal{P}$ has at most $k$ bends. The \textit{bend number with respect to CPG representations} is defined mutatis mutandis. Clearly, any $B_k$-CPG graph is also a $B_k$-VPG graph and a $B_{k+2}$-EPG graph. 

Although CPG graphs have been considered in the literature (see, e.g., \cite{chaplick12,chaplick13,felsner,planbip,gonc,kobourov}), a systematic study of this class was first initiated in~\cite{cpg}. We note that \citet{aerts} studied the family of graphs admitting a \textit{Vertex Contact representation of Paths on a Grid}, which constitutes a subclass of CPG graphs as paths in such representations are not allowed to share a common endpoint. Such graphs are easily seen to be planar and in fact this class is strictly contained in that of planar CPG graphs \cite{cpg}. It is then natural to ask whether every planar graph is CPG and whether there exists $k \geq 0$ such that every planar CPG graph is $B_k$-CPG. The former was answered in the negative in \cite{cpg}. The latter is settled in \Cref{sec:planar} by providing, for any $k \geq 0$, a planar graph which is $B_{k+1}$-CPG but not $B_k$-CPG. Since every planar graph is $B_1$-VPG \cite{gonc}, it follows that for any $k \geq 1$, $B_k\mbox{-CPG} \subsetneq B_k\mbox{-VPG}$ even within the class of planar graphs (and in fact, it is easy to see that this remains true for $k=0$). 

Great attention has been devoted to the complexity of recognizing (subclasses of) VPG and EPG graphs. For each fixed $k \geq 0$, recognizing $B_{k}$-VPG graphs is $\mathsf{NP}$-complete \citep{CJKV12,Kra94}. Similarly, recognizing VPG graphs is $\mathsf{NP}$-hard \citep{Kra91} and in $\mathsf{NP}$ \citep{SSS03}. Notice that membership in $\mathsf{NP}$ remained elusive for a while as, contrary to the case of fixed $k$ where as a polynomial certificate it is enough to provide a list of coordinates corresponding to endpoints and bend-points of paths, there exist string graphs on $O(n)$ vertices whose every string representation contains at least $2^{n}$ intersection points between the curves \citep{KM91}. Since $B_{0}$-EPG graphs are exactly interval graphs, they can be recognized in polynomial time \citep{BL76}. On the other hand, recognizing $B_{k}$-EPG graphs is $\mathsf{NP}$-complete for $k = 1, 2$ \citep{heldt2,PR17} and it is an open problem to settle the case $k \geq 3$. In \cite{cpg}, the authors showed that \textsc{Recognition} is $\mathsf{NP}$-complete for $B_0$-CPG graphs. In \Cref{sec:recognition}, we completely settle the remaining cases by showing that \textsc{Recognition} is $\mathsf{NP}$-complete for $B_k$-CPG graphs with $k \geq 1$. For $k \geq 3$, we show that this remains true even within the class of planar graphs. As a corollary of these results, we show that recognizing CPG graphs and planar CPG graphs are $\mathsf{NP}$-complete problems.

\citet{asinowski} provided several complexity results for problems restricted to $B_0$-VPG graphs. More specifically, they showed that {\sc Independent Set}, {\sc Hamiltonian Cycle}, {\sc Hamiltonian Path} and {\sc Colorability} are all $\mathsf{NP}$-complete, whereas {\sc Clique} is polynomial-time solvable. It is then natural to consider the complexity of such problems when restricted to $B_0$-CPG graphs. The fact that every planar bipartite graph is $B_0$-CPG \cite{planbip} immediately implies that problems which are $\mathsf{NP}$-complete for this class, such as {\sc Dominating Set}, {\sc Feedback Vertex Set}, {\sc Hamiltonian Cycle} and {\sc Hamiltonian Path} (see, e.g., \cite{clark,munaro1}), remain $\mathsf{NP}$-complete in $B_0$-CPG. It was furthermore shown in \cite{cpg} that {\sc 3-Colorability} is $\mathsf{NP}$-complete in $B_0$-CPG. In \Cref{sec:complexity}, we consider {\sc Independent Set} and {\sc Clique Cover} and show that they are $\mathsf{NP}$-complete in $B_0$-CPG.

It was shown in \cite{epstein} that \textsc{Colorability} is $\mathsf{NP}$-complete in $B_1$-EPG. On the other hand, the problem is polynomial-time solvable in $B_0$-EPG as this class coincides with that of interval graphs. In \Cref{sec:b1epg}, we provide a complexity dichotomy for the related {\sc 3-Colorability} problem restricted to EPG graphs. The problem is $\mathsf{NP}$-complete in $B_k$-EPG, for each $k \geq 2$, since it is $\mathsf{NP}$-complete in $B_0$-CPG \citep{cpg}. We complete the picture by showing that {\sc 3-Colorability} is $\mathsf{NP}$-complete in $B_1$-EPG.

We remark that all our complexity results hold even if a CPG (resp. EPG) representation of the graph is given.

\section{Preliminaries}
\label{sec:prel}

Throughout the paper, the considered graphs are undirected, finite and simple. For any graph theoretical notion not defined here, we refer the reader to \cite{diestel}.

Let $G$ be a graph with vertex set $V$ and edge set $E$. The \textit{degree} of a vertex $v\in V$ is the size of its neighborhood and a vertex is \textit{cubic} if it has degree 3. For $k \geq 1$, the \textit{$k$-subdivision} of an edge $uv \in E$ consists in replacing $uv$ by a path $uv_1\cdots v_kv$, where $v_1, \ldots, v_k$ are new vertices; the $k$-subdivision of $G$ is the graph obtained by $k$-subdividing every edge of $G$. The \textit{line graph} $L(G)$ of $G$ has vertex set $E$ and two vertices $e$ and $e'$ are adjacent if and only if they have a common endvertex in $G$. 

A graph is \textit{subcubic} if every vertex has degree at most 3. A graph is \textit{planar} if it admits an embedding in the plane where no two edges cross. Given a graph $H$, we say that a graph is \textit{H-free} if it contains no induced subgraph isomorphic to $H$. The \textit{triangle} is the complete graph on three vertices and the \textit{diamond} is the graph obtained by removing an edge from the complete graph on four vertices.

A subset $S \subseteq V$ is an \textit{independent set} of $G$ if any two vertices in $S$ are nonadjacent; the maximum size of an independent set of $G$ is denoted by $\alpha(G)$. A subset $V' \subseteq V$ is a \textit{vertex cover} of $G$ if each edge of $G$ is incident to at least one vertex in $V'$; the minimum size of a vertex cover of $G$ is denoted by $\beta(G)$. A subset $K \subseteq V$ is a \textit{clique} if any two vertices in $K$ are adjacent. A \textit{clique cover} of $G$ is a set of cliques such that each vertex of $G$ belongs to at least one of them; the minimum number of cliques in a clique cover of $G$ is denoted by $\theta(G)$. 

Let $G=(V,E)$ be a CPG graph and $\mathcal{R} = (\mathcal{G},\mathcal{P})$ be a CPG representation of $G$. The grid-point lying on row $x$ and column $y$ is denoted by $(x, y)$. The path in $\mathcal{P}$ representing some vertex $u \in V$ is denoted by $P_u$. A \textit{segment} of a path is either a vertical or horizontal line segment in the polygonal curve constituting the path. We say that a path $P$ \textit{touches} a path $P'$ if one endpoint of $P$ belongs to $P'$, and that $P$ and $P'$ \textit{touch} if $P$ and $P'$ have a common grid-point. We denote by $\partial(P)$ the endpoints of a path $P$. An \textit{interior point} of a path $P$ is a point belonging to $P$ and different from its endpoints; the \textit{interior} of $P$, denoted by $\mathring{P}$, is the set of all its interior points. A grid-point is said to be a \textit{contact point} if it belongs to more than one path, and a \textit{$k$-contact point} if it belongs to exactly $k$ paths in $\mathcal{R}$. We denote by $T(\mathcal{R})$ the set of contact points in $\mathcal{R}$. A grid-point is of \textit{type II.a} if it is a 3-contact point, an interior point of one path but a bend-point for no path; a grid-point is of \textit{type II.b} if it is a 3-contact point and the bend-point of some path (see \Cref{intpoint}). 

\begin{figure}[ht]

\centering

\begin{subfigure}[b]{.45\textwidth}

\centering

\begin{tikzpicture}[scale=.7]

\node at (0,0) (p) [label=below left:{\scriptsize $p$}] {};

\draw[thick,-,>=stealth] (0,1)--(0,-1);

\draw[thick,dotted] (0,1.2)--(0,-1.2);

\draw[thick,<-,>=stealth] (0,0)--(1,0);

\draw[thick,dotted] (1,0)--(1.2,0);

\draw[thick,<-,>=stealth] (0,0)--(-1,0);

\draw[thick,dotted] (-1,0)--(-1.2,0);

\end{tikzpicture}

\caption{Type II.a}

\label{intpointI}

\end{subfigure}
\hspace*{1cm}
\begin{subfigure}[b]{.45\textwidth}

\centering

\begin{tikzpicture}[scale=.7]

\node at (3,0) (p) [label=below left:{\scriptsize $p$}] {};

\draw[thick,-,>=stealth] (3,1)--(3,0)--(4,0);

\draw[thick,dotted] (3,1)--(3,1.2);

\draw[thick,dotted] (4,0)--(4.2,0);

\draw[thick,<-,>=stealth] (3,0)--(3,-1);

\draw[thick,dotted] (3,-1)--(3,-1.2);

\draw[thick,<-,>=stealth] (3,0)--(2,0);

\draw[thick,dotted] (1.8,0)--(2,0);

\end{tikzpicture}

\caption{Type II.b}

\label{intpointII}

\end{subfigure}

\caption{Type II grid-points.}

\label{intpoint}

\end{figure}
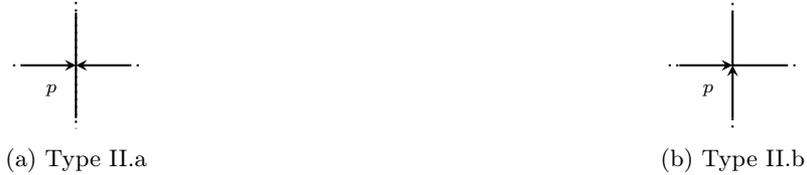

We say that an endpoint of a path is \textit{free} in the representation $\mathcal{R}$ if it does not belong to any other path in $\mathcal{R}$. For a vertex $u \in V$, the \textit{weight of $u$ with respect to} $\mathcal{R}$, denoted by $w_u$, is defined as follows. Let $q_u^i$ ($i=1,2$) be the endpoints of the corresponding path $P_u$ and consider, for $i=1,2$, 
\[w_u^i = |\{P \in \mathcal{P} : q_u^i \in \mathring{P}\}| + \frac{1}{2} \cdot |\{P \in \mathcal{P} : P \neq P_u \text{ and } q_u^i \in \partial(P)\}|.\]
Then $w_u = w_u^1 + w_u^2$. It is easy to see that $w_u^i \in \{0,\frac{1}{2},1,\frac{3}{2}\}$, for any $u \in V$ and $i=1,2$, and that $|E| \leq \sum_{u \in V} w_u$ (see \cite{cpg}). 

Given a $B_k$-CPG graph $G$ and a $k$-bend representation $\mathcal{R} = (\mathcal{G}, \mathcal{P})$ of $G$, the grid $\mathcal{G}$ and the paths in $\mathcal{P}$ are encoded as follows. For the grid $\mathcal{G}$, we only keep track of the \textit{grid-step $\sigma$}, that is, the length of a grid-edge. A path $P \in \mathcal{P}$ is given by a sequence $s(P)$ of grid-points $(x_1,y_1), (x_2,y_2), \dots, (x_{\ell_P},y_{\ell_P})$, where $(x_1,y_1)$ and $(x_{\ell_P},y_{\ell_P})$ are the endpoints of $P$ and all the other points are the bend-points of $P$ in their order of appearance while traversing $P$ from $(x_1,y_1)$ to $(x_{\ell_P},y_{\ell_P})$ i.e., for any $i \in [\ell_P -1]$, $[(x_i,y_i),(x_{i+1},y_{i+1})]$ is a segment of $P$. Clearly, $\ell_P \leq k + 2$. 

The \textit{length} of a path in $\mathcal{P}$ is the number of grid-edges it contains. The \textit{refinement of a grid $\mathcal{G}$} having grid-step $\sigma$ is the operation adding a new column (resp. row) between any pair of consecutive columns (resp. row) in $\mathcal{G}$ and setting the grid-step to $\sigma/2$. Notice that this operation does not change the sequences $s(P)$. 


\section{Planar CPG graphs}
\label{sec:planar}

In this section we show that, for any $k \geq 0$, the class of $B_{k+1}$-CPG graphs strictly contains that of $B_k$-CPG graphs, even within the class of planar graphs. To this end, for $k \geq 0$, consider the planar graph $G_k$ depicted in \Cref{fig:Gk}. We refer to the vertices $\alpha_i$, for $1 \leq i \leq 20$, as the \textit{secondary vertices}, and to the vertices $u_j^i$, for $1 \leq j \leq k+2$ and a given $1 \leq i \leq 19$, as the \textit{$(i,i+1)$-sewing vertices}. A $(k+1)$-bend CPG representation of $G_k$ is given in \Cref{fig:representation} (the blue paths correspond to sewing vertices and the red paths correspond to secondary vertices).

\begin{figure}[htb]
\begin{center}
\begin{subfigure}{.6\textwidth}
\centering
\begin{tikzpicture}[scale=0.35]
\fill[green!20!white] (-4.75,0) ellipse (1.25cm and 1cm);
\fill[green!20!white] (-2.25,0) ellipse (1.25cm and 1cm);
\fill[green!20!white] (4.75,0) ellipse (1.25cm and 1cm);
\node at (-4.75,0) {$H_1$};
\node at (-2.25,0) {$H_2$};
\node at (4.75,0) {$H_{19}$};

\node[circ] (a) at (0,3) [label=above:{\tiny $a$}] {};
\node[circ] (b) at (0,-3) [label=below:{\tiny $b$}] {};

\node[circ] (a1) at (-6,0) [label=below left:{\tiny $\alpha_1$}] {};
\node[circ] (a2) at (-3.5,0) [label=below right:{\tiny $\alpha_2$}] {};
\node[circ] (a3) at (-1,0) [label=below right:{\tiny $\alpha_3$}] {};
\node[circ] (a19) at (3.5,0) [label=below left:{\tiny $\alpha_{19}$}] {};
\node[circ] (a20) at (6,0) [label=below right:{\tiny $\alpha_{20}$}] {};

\draw (a) .. controls (-6,2) .. (a1);
\draw (a) .. controls (-3.5,2) .. (a2);
\draw (a) .. controls (-1,2) .. (a3);
\draw (a) .. controls (3.5,2) .. (a19);
\draw (a) .. controls (6,2) .. (a20);
\draw (a) .. controls (-10,4) and (-10,-4) .. (b);
\draw (b) .. controls (-6,-2) .. (a1);
\draw (b) .. controls (-3.5,-2) .. (a2);
\draw (b) .. controls (-1,-2) .. (a3);
\draw (b) .. controls (3.5,-2) .. (a19);
\draw (b) .. controls (6,-2) .. (a20);

\node[scirc] at (1,0) {};
\node[scirc] at (1.25,0) {};
\node[scirc] at (1.5,0) {};

\node[invisible] at (2,-5.5) {};
\node[invisible] at (2,4.8) {};
\end{tikzpicture}
\caption{The planar graph $G_k$.}
\label{fig:overall}
\end{subfigure}
\hspace*{1cm}
\begin{subfigure}{.3\textwidth}
\centering
\begin{tikzpicture}[scale=0.47]
\node[circ] (ai) at (0,0) [label=left:{\tiny $\alpha_i$}] {};
\node[circ] (ai1) at (4,0) [label=right:{\tiny $\alpha_{i+1}$}] {};

\node[circ] (u1) at (2,3) [label=above right:{\tiny $u_1^i$}] {};
\node[circ] (u2) at (2,2) [label=above right:{\tiny $u_2^i$}] {};
\node[circ] (u3) at (2,1) [label=above right:{\tiny $u_3^i$}] {};
\node[circ] (uk1) at (2,-2) [label=above right:{\tiny $u_{k+1}^i$}] {};
\node[circ] (uk2) at (2,-3) [label=below right:{\tiny $u_{k+2}^i$}] {};

\draw (ai) .. controls (0.5,3) .. (u1);
\draw (ai) .. controls (0.5,2) .. (u2);
\draw (ai) .. controls (0.5,1) .. (u3);
\draw (ai) .. controls (0.5,-2) .. (uk1);
\draw (ai) .. controls (0.5,-3) .. (uk2);
\draw (ai1) .. controls (3.5,3) .. (u1);
\draw (ai1) .. controls (3.5,2) .. (u2);
\draw (ai1) .. controls (3.5,1) .. (u3);
\draw (ai1) .. controls (3.5,-2) .. (uk1);
\draw (ai1) .. controls (3.5,-3) .. (uk2);

\draw[-]
(u1) -- (u2) -- (u3)
(uk1) -- (uk2);

\node[scirc] at (2,0) {};
\node[scirc] at (2,-0.5) {};
\node[scirc] at (2,-1) {};

\end{tikzpicture}
\caption{The gadget $H_i$.}
\label{fig:gadget}
\end{subfigure}
\caption{The construction of $G_k$.}
\label{fig:Gk}
\end{center}
\end{figure}
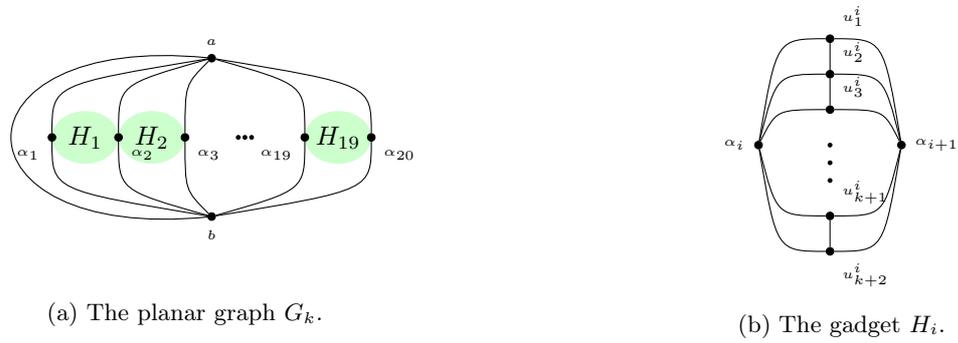

Given a CPG representation $\mathcal{R} = (\mathcal{G},\mathcal{P})$ of $G_k$, a path in $\mathcal{P}$ corresponding to a secondary vertex (resp. an $(i,i+1)$-sewing vertex) is called a \textit{secondary path} (resp. an \textit{$(i,i+1)$-sewing path}). A secondary path $P_{\alpha_i}$ is \textit{pure} if no endpoint of $P_a$ or $P_b$ belongs to $P_{\alpha_i}$. We now use the graph $G_k$ to prove the following.

\begin{figure}
\centering
\begin{subfigure}{.45\linewidth}
\centering
\begin{tikzpicture}[scale=0.3]
\draw[thick,<->,>=stealth] (12,-1) -- (1,-1) node[left] {\tiny $P_a$};
\draw[thick,<->,>=stealth] (12,-1) -- (12,10) node[above] {\tiny $P_b$};

\draw[red,thick,<-,>=stealth] (2,-1) -- (2,5) node[midway,left] {\tiny $P_{\alpha_1}$} -- (3,5) -- (3,6) -- (4,6) -- (4,6.5);
\draw[red,thick,<-,>=stealth] (3,-1) -- (3,4) -- (4,4) -- (4,5) -- (5,5) -- (5,5.5);
\draw[red,thick,<-,>=stealth] (4,-1) -- (4,3) -- (5,3) -- (5,4) -- (6,4) -- (6,4.5);
\draw[red,thick,<-,>=stealth] (6,-1) -- (6,1) -- (7,1) -- (7,2) -- (8,2) -- (8,2.5);
\draw[red,thick,<-,>=stealth] (7,-1) -- (7,0) -- (8,0) -- (8,1) node[midway,right] {\tiny $P_{\alpha_{20}}$} -- (9,1) -- (9,1.5);

\draw[red,thick,->,>=stealth] (5.5,8) -- (6,8) -- (6,9) -- (12,9);
\draw[red,thick,->,>=stealth] (6.5,7) -- (7,7) -- (7,8) -- (12,8);
\draw[red,thick,->,>=stealth] (7.5,6) -- (8,6) -- (8,7) -- (12,7);
\draw[red,thick,->,>=stealth] (9.5,4) -- (10,4) -- (10,5) -- (12,5);
\draw[red,thick,->,>=stealth] (10.5,3) -- (11,3) -- (11,4) -- (12,4);

\draw[cyan,thick,<->,>=stealth] (2,4) -- (3,4);
\draw[cyan,thick,<->,>=stealth] (3,5) -- (3,4);
\draw[cyan,thick,<->,>=stealth] (3,5) -- (4,5);
\draw[cyan,thick,<->,>=stealth] (4,6) -- (4,5);
\draw[cyan,thick,<-,>=stealth] (4,6) -- (4.5,6);
\draw[cyan,thick,<-,>=stealth] (6,8) -- (6,7.5);
\draw[cyan,thick,<->,>=stealth] (6,8) -- (7,8);
\draw[cyan,thick,<->,>=stealth] (7,9) -- (7,8);

\draw[cyan,thick,<->,>=stealth] (3,3) -- (4,3);
\draw[cyan,thick,<->,>=stealth] (4,4) -- (4,3);
\draw[cyan,thick,<->,>=stealth] (4,4) -- (5,4);
\draw[cyan,thick,<->,>=stealth] (5,5) -- (5,4);
\draw[cyan,thick,<-,>=stealth] (5,5) -- (5.5,5);
\draw[cyan,thick,<-,>=stealth] (7,7) -- (7,6.5);
\draw[cyan,thick,<->,>=stealth] (7,7) -- (8,7);
\draw[cyan,thick,<->,>=stealth] (8,8) -- (8,7);

\draw[cyan,thick,<-,>=stealth] (4,2) -- (4.5,2);
\draw[cyan,thick,<-,>=stealth] (5,3) -- (5,2.5);
\draw[cyan,thick,<-,>=stealth] (5,3) -- (5.5,3);
\draw[cyan,thick,<-,>=stealth] (6,4) -- (6,3.5);
\draw[cyan,thick,<-,>=stealth] (6,4) -- (6.5,4);
\draw[cyan,thick,<-,>=stealth] (8,6) -- (8,5.5);
\draw[cyan,thick,<-,>=stealth] (8,6) -- (8.5,6);
\draw[cyan,thick,<-,>=stealth] (9,7) -- (9,6.5);

\draw[cyan,thick,->,>=stealth] (5.5,1) -- (6,1);
\draw[cyan,thick,->,>=stealth] (6,1.5) -- (6,1);
\draw[cyan,thick,->,>=stealth] (6.5,2) -- (7,2);
\draw[cyan,thick,->,>=stealth] (7,2.5) -- (7,2);
\draw[cyan,thick,->,>=stealth] (9.5,5) -- (10,5);
\draw[cyan,thick,->,>=stealth] (10,5.5) -- (10,5);

\draw[cyan,thick,<->,>=stealth] (6,0) -- (7,0);
\draw[cyan,thick,<->,>=stealth] (7,1) -- (7,0);
\draw[cyan,thick,<->,>=stealth] (7,1) -- (8,1);
\draw[cyan,thick,<->,>=stealth] (8,2) -- (8,1);
\draw[cyan,thick,<-,>=stealth] (8,2) -- (8.5,2);
\draw[cyan,thick,<-,>=stealth] (10,4) -- (10,3.5);
\draw[cyan,thick,<->,>=stealth] (10,4) -- (11,4);
\draw[cyan,thick,<->,>=stealth] (11,5) -- (11,4);

\node[scirc] at (4.7,0) {};
\node[scirc] at (5,0) {};
\node[scirc] at (5.3,0) {};

\node[scirc] at (11,5.7) {};
\node[scirc] at (11,6) {};
\node[scirc] at (11,6.3) {};

\node[scirc] at (5.7,5.7) {};
\node[scirc] at (6,6) {};
\node[scirc] at (6.3,6.3) {};

\node[scirc] at (9,2.4) {};
\node[scirc] at (9.3,2.7) {};
\node[scirc] at (9.6,3) {};
\end{tikzpicture}
\caption{$k$ is even.}
\end{subfigure}
\hspace*{1cm}
\begin{subfigure}{.45\linewidth}
\centering
\begin{tikzpicture}[scale=0.3]
\draw[thick,<->,>=stealth] (13,-1) -- (1,-1) node[left] {\tiny $P_a$};
\draw[thick,<->,>=stealth] (13,-1) -- (13,10) -- (1,10) node[left] {\tiny $P_b$};

\draw[red,thick,<-,>=stealth] (2,-1) -- (2,5) node[midway,left] {\tiny $P_{\alpha_1}$} -- (3,5) -- (3,6) -- (4,6) -- (4,6.5);
\draw[red,thick,<-,>=stealth] (3,-1) -- (3,4) -- (4,4) -- (4,5) -- (5,5) -- (5,5.5);
\draw[red,thick,<-,>=stealth] (4,-1) -- (4,3) -- (5,3) -- (5,4) -- (6,4) -- (6,4.5);
\draw[red,thick,<-,>=stealth] (6,-1) -- (6,1) -- (7,1) -- (7,2) -- (8,2) -- (8,2.5);
\draw[red,thick,<-,>=stealth] (7,-1) -- (7,0) -- (8,0) -- (8,1) node[midway,right] {\tiny $P_{\alpha_{20}}$} -- (9,1) -- (9,1.5);

\draw[red,thick,->,>=stealth] (5.5,8) -- (6,8) -- (6,9) -- (7,9) -- (7,10);
\draw[red,thick,->,>=stealth] (6.5,7) -- (7,7) -- (7,8) -- (8,8) -- (8,10);
\draw[red,thick,->,>=stealth] (7.5,6) -- (8,6) -- (8,7) -- (9,7) -- (9,10);
\draw[red,thick,->,>=stealth] (9.5,4) -- (10,4) -- (10,5) -- (11,5) -- (11,10);
\draw[red,thick,->,>=stealth] (10.5,3) -- (11,3) -- (11,4) -- (12,4) -- (12,10);

\draw[cyan,thick,<->,>=stealth] (2,4) -- (3,4);
\draw[cyan,thick,<->,>=stealth] (3,5) -- (3,4);
\draw[cyan,thick,<->,>=stealth] (3,5) -- (4,5);
\draw[cyan,thick,<->,>=stealth] (4,6) -- (4,5);
\draw[cyan,thick,<-,>=stealth] (4,6) -- (4.5,6);
\draw[cyan,thick,<-,>=stealth] (6,8) -- (6,7.5);
\draw[cyan,thick,<->,>=stealth] (6,8) -- (7,8);
\draw[cyan,thick,<->,>=stealth] (7,9) -- (7,8);
\draw[cyan,thick,<->,>=stealth] (7,9) -- (8,9);

\draw[cyan,thick,<->,>=stealth] (3,3) -- (4,3);
\draw[cyan,thick,<->,>=stealth] (4,4) -- (4,3);
\draw[cyan,thick,<->,>=stealth] (4,4) -- (5,4);
\draw[cyan,thick,<->,>=stealth] (5,5) -- (5,4);
\draw[cyan,thick,<-,>=stealth] (5,5) -- (5.5,5);
\draw[cyan,thick,<-,>=stealth] (7,7) -- (7,6.5);
\draw[cyan,thick,<->,>=stealth] (7,7) -- (8,7);
\draw[cyan,thick,<->,>=stealth] (8,8) -- (8,7);
\draw[cyan,thick,<->,>=stealth] (8,8) -- (9,8);

\draw[cyan,thick,<-,>=stealth] (4,2) -- (4.5,2);
\draw[cyan,thick,<-,>=stealth] (5,3) -- (5,2.5);
\draw[cyan,thick,<-,>=stealth] (5,3) -- (5.5,3);
\draw[cyan,thick,<-,>=stealth] (6,4) -- (6,3.5);
\draw[cyan,thick,<-,>=stealth] (6,4) -- (6.5,4);
\draw[cyan,thick,<-,>=stealth] (8,6) -- (8,5.5);
\draw[cyan,thick,<-,>=stealth] (8,6) -- (8.5,6);
\draw[cyan,thick,<-,>=stealth] (9,7) -- (9,6.5);
\draw[cyan,thick,<-,>=stealth] (9,7) -- (9.5,7);

\draw[cyan,thick,->,>=stealth] (5.5,1) -- (6,1);
\draw[cyan,thick,->,>=stealth] (6,1.5) -- (6,1);
\draw[cyan,thick,->,>=stealth] (6.5,2) -- (7,2);
\draw[cyan,thick,->,>=stealth] (7,2.5) -- (7,2);
\draw[cyan,thick,->,>=stealth] (9.5,5) -- (10,5);
\draw[cyan,thick,->,>=stealth] (10,5.5) -- (10,5);
\draw[cyan,thick,->,>=stealth] (10.5,6) -- (11,6);

\draw[cyan,thick,<->,>=stealth] (6,0) -- (7,0);
\draw[cyan,thick,<->,>=stealth] (7,1) -- (7,0);
\draw[cyan,thick,<->,>=stealth] (7,1) -- (8,1);
\draw[cyan,thick,<->,>=stealth] (8,2) -- (8,1);
\draw[cyan,thick,<-,>=stealth] (8,2) -- (8.5,2);
\draw[cyan,thick,<-,>=stealth] (10,4) -- (10,3.5);
\draw[cyan,thick,<->,>=stealth] (10,4) -- (11,4);
\draw[cyan,thick,<->,>=stealth] (11,5) -- (11,4);
\draw[cyan,thick,<->,>=stealth] (11,5) -- (12,5);

\node[scirc] at (4.7,0) {};
\node[scirc] at (5,0) {};
\node[scirc] at (5.3,0) {};

\node[scirc] at (9.7,8.5) {};
\node[scirc] at (10,8.5) {};
\node[scirc] at (10.3,8.5) {};

\node[scirc] at (5.7,5.7) {};
\node[scirc] at (6,6) {};
\node[scirc] at (6.3,6.3) {};

\node[scirc] at (9,2.4) {};
\node[scirc] at (9.3,2.7) {};
\node[scirc] at (9.6,3) {};

\node[invisible] at (1,11) {};
\end{tikzpicture}
\caption{$k$ is odd.}
\end{subfigure}
\caption{A $(k+1)$-bend CPG representation of $G_k$ (the endpoints are marked by an arrow).}
\label{fig:representation}
\end{figure}
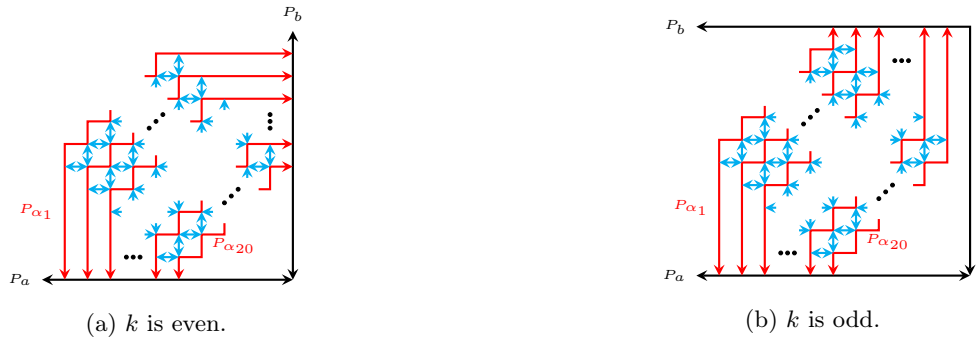

\begin{theorem}
\label{thm:unbound}
For any $k \geq 0$, there exists a planar graph which is $B_{k+1}$-CPG but not $B_k$-CPG.
\end{theorem}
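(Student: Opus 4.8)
The plan is to treat the two directions separately. Membership in $B_{k+1}$-CPG is witnessed directly by the representation exhibited in \Cref{fig:representation}, so for that direction it only remains to check that every blue path has at most $k$ bends, every red path at most $k+1$ bends, and that the touchings realise exactly the edges of $G_k$; this is a routine verification of the figure. The whole difficulty lies in showing that $G_k$ admits \emph{no} $k$-bend representation, which I would prove by contradiction: assume $\mathcal{R}=(\mathcal{G},\mathcal{P})$ is a $k$-bend CPG representation of $G_k$ and derive a path needing at least $k+1$ bends.

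First I would localise the argument to a single well-behaved gadget. Since $P_a$ is adjacent to $b$ and to all twenty secondary vertices, it has $21$ neighbours but only two endpoints, so at most two secondary paths can be touched by an endpoint of $P_a$; the same holds for $P_b$. Hence at most four secondary paths are non-pure, and since each non-pure pole belongs to at most two gadgets, at least $11$ of the $19$ gadgets have both poles pure. Fix such a gadget $H_i$. For a pure pole the adjacencies to $a$ and to $b$ must be realised by its own two endpoints (purity forbids an endpoint of $P_a$ or $P_b$ from lying on it), so both endpoints of $P_{\alpha_i}$ are pinned onto $P_a\cup P_b$ and $P_{\alpha_i}$ is left with no free endpoint. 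Consequently every one of the $k+2$ sewing paths of $H_i$ must touch $P_{\alpha_i}$ \emph{passively}, i.e.\ by placing one of its own endpoints on $P_{\alpha_i}$, and symmetrically on $P_{\alpha_{i+1}}$.

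Next I would extract the rigid combinatorial skeleton of the gadget. Each sewing path $P_{u_j}$ has exactly one endpoint $x_j\in P_{\alpha_i}$ and one endpoint $y_j\in P_{\alpha_{i+1}}$, since it must reach both poles and has only two endpoints. As $\alpha_i\not\sim\alpha_{i+1}$, the two poles are interiorly disjoint, so the adjacency $u_j\sim u_{j+1}$ can only be realised by a shared endpoint, forcing $x_j=x_{j+1}$ or $y_j=y_{j+1}$. Here the grid-degree bound enters: at an interior grid-point of a pole the pole already occupies two of the four incident grid-edges, leaving room for at most two sewing endpoints, so no three sewing paths can meet at a single pole-point. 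This rules out two consecutive shares on the same pole, whence the side on which consecutive sewing paths share strictly alternates between $P_{\alpha_i}$ and $P_{\alpha_{i+1}}$ along $u_1,\dots,u_{k+2}$. The result is a rigid alternating ``ladder'' with about $(k+2)/2$ meeting points on each pole, each meeting point being a $3$-contact point where the pole passes and two sewing paths emanate through the two remaining grid-edges.

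The crux — and the step I expect to be the main obstacle — is to convert this skeleton into a bend lower bound. Taken in isolation the alternating ladder can be drawn with straight poles, so the contradiction cannot come from merely counting contact points; it must exploit the grid rigidity together with the global pinning of the pole-endpoints on $P_a$ and $P_b$. The idea is that at each meeting point the two incident sewing paths are forced out along precisely the two grid-edges not used by the pole, so at most one of them can point towards the opposite pole while the other must turn back; tracking these forced turns along $u_1,\dots,u_{k+2}$, together with the non-crossing condition coming from interior-disjointness and the fact that the whole ladder is squeezed between two endpoint-pinned poles, should yield a nested configuration whose total number of direction changes grows linearly in $k+2$. Making this accounting precise — in particular ruling out ``shortcut'' routings and charging the accumulated turns to a single path — is the delicate part, and it is exactly here that one reads off that some pole or some sewing path of $H_i$ must have at least $k+1$ bends, contradicting the assumption and completing the proof.
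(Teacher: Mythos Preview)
Your setup is sound: the purity argument, the fact that each sewing path must spend one endpoint on each pole, and the alternation of the shared endpoints between the two poles are all correct and match the paper. The gap is in the final two steps.

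First, you are looking for the bends in the wrong place. You try to charge the contradiction to bends on the \emph{sewing} paths, arguing that at a type~II.a meeting point one of the two emanating sewing paths ``must turn back''. But this accounting cannot yield a linear-in-$k$ bound on a single path: each sewing path visits only two meeting points, so even if every meeting point forces one turn on some incident sewing path, no individual sewing path collects more than a bounded number of bends. The paper's key lemma goes the other way: it shows, via a Jordan-curve argument using that both endpoints of each pure pole lie on $P_a\cup P_b$, that a type~II.a meeting point is \emph{impossible}. Hence every shared endpoint $x_j$ is a type~II.b point, i.e.\ a \emph{bend-point of the pole it lies on}. The bends therefore accumulate on the secondary paths, not on the sewing paths. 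This topological step is precisely the ``delicate part'' you flag, and your sketch does not supply it.

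Second, even granting that lemma, restricting to a single gadget $H_i$ is not enough. From one gadget you only get $\lceil (k+1)/2\rceil$ forced bends on one pole and $\lfloor (k+1)/2\rfloor$ on the other, which does not contradict a $k$-bend representation. The paper instead finds \emph{four consecutive} pure secondary paths $\alpha_j,\alpha_{j+1},\alpha_{j+2},\alpha_{j+3}$ (this is why there are twenty of them, not just enough to guarantee one pure pair), so that each of the two middle poles receives about $(k+1)/2$ bends from its left gadget and the $(j+1,j+2)$-gadget contributes a further $\lceil (k+1)/2\rceil$ bends to one of them; these bend sets are disjoint because they are endpoints of sewing paths from different gadgets. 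That middle pole then carries at least $k+1$ bends, giving the contradiction. Your localisation to a single $H_i$ throws away exactly the doubling that makes the count reach $k+1$.
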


\begin{proof}
We show that in any CPG representation $\mathcal{R} = (\mathcal{G},\mathcal{P})$ of $G_k$, there exists a path with at least $k+1$ bends. Since $G_k$ is $B_{k+1}$-CPG (see \Cref{fig:representation}), this would conclude the proof. We first observe the following.

\begin{observation}
\label{pureendpoints}
If a path $P_{\alpha_i}$ is pure, then one endpoint of $P_{\alpha_i}$ belongs to $P_a$ and the other endpoint belongs to $P_b$.
\end{observation}

We next prove the following two claims.

\begin{Claim}
\label{bendpoint}
Let $P_{\alpha_i}$ and $P_{\alpha_{i+1}}$, with $1 \leq i \leq 19$,  be two pure paths and let $u_j^i$ and $u_{j+1}^i$, with $1 \leq j \leq k + 1$, be two $(i,i+1)$-sewing vertices. If a grid-point $x$ belongs to $P_{u_j^i} \cap P_{u_{j+1}^i}$, then $x$ corresponds to an endpoint of both $P_{u_j^i}$ and $P_{u_{j+1}^i}$, and a bend-point of either $P_{\alpha_i}$ or $P_{\alpha_{i+1}}$.
\end{Claim}

\begin{cproof}
It follows from Observation \ref{pureendpoints} and the fact that $u_j^i$ and $u_{j+1}^i$ are nonadjacent to both $a$ and $b$, that no endpoint of $P_{\alpha_i}$ or $P_{\alpha_{i+1}}$ belongs to $P_{u_j^i}$ or $P_{u_{j+1}^i}$. Consequently, one endpoint of $P_{u_j^i}$ (resp. $P_{u_{j+1}^i}$) belongs to $P_{\alpha_i}$ and the other endpoint belongs to $P_{\alpha_{i+1}}$. By definition, $x$ corresponds to an endpoint of at least one of $P_{u_j^i}$ and $P_{u_{j+1}^i}$, which implies that $x$ belongs to either $\mathring{P_{\alpha_i}}$ or $\mathring{P_{\alpha_{i+1}}}$. But then $x$ must be an endpoint of both $P_{u_j^i}$ and $P_{u_{j+1}^i}$; in particular, $x$ is a grid-point of either type II.a or type II.b. Without loss of generality, we may assume that $x \in \mathring{P_{\alpha_i}}$. We denote by $y_{u_j^i}$ (resp. $y_{u_{j+1}^i}$) the endpoint of $P_{u_j^i}$ (resp. $P_{u_{j+1}^i}$) belonging to $P_{\alpha_{i+1}}$. Suppose, to the contrary, that $x$ is of type II.a. The union of $P_{u_j^i}$, $P_{u_{j+1}^i}$ and the portion of $P_{\alpha_{i+1}}$ between $y_{u_j^i}$ and $y_{u_{j+1}^i}$ defines a closed curve $\mathcal{C}$ dividing the plane into two regions. Since $P_a$ and $P_b$ touch neither $P_{u_j^i}$ nor $P_{u_{j+1}^i}$, $P_a$ and $P_b$ lie entirely in one of these regions. Moreover, since $a$ and $b$ are adjacent and $P_{\alpha_{i+1}}$ is pure, $P_a$ and $P_b$ in fact belong to the same region. On the other hand, since one endpoint of $P_{u_j^i}$ (resp. $P_{u_{j+1}^i}$) belongs to $P_{\alpha_i}$ while the other endpoint belongs to $P_{\alpha_{i+1}}$, and both endpoints of $P_{\alpha_i}$ are in $P_a \cup P_b$, it follows that $x \in \mathcal{C}$ is the only contact point between $P_{u_j^i}$ (resp. $P_{u_{j+1}^i}$) and $P_{\alpha_i}$. But $\alpha_i$ and $\alpha_{i+1}$ are nonadjacent, which implies that $P_{\alpha_i}$ crosses $\mathcal{C}$ only once and has therefore one endpoint in each region. However, both endpoints of $P_{\alpha_i}$ belong to $P_a \cup P_b$, contradicting the fact that $P_a$ and $P_b$ lie in the same region. Hence, $x$ is of type II.b, thus concluding the proof.
\end{cproof}

\begin{Claim}
\label{bends}
If two paths $P_{\alpha_i}$ and $P_{\alpha_{i+1}}$ are pure, for some $1 \leq i \leq 19$, then one of them contains at least $\lfloor \frac{k+1}{2} \rfloor$ bends and the other contains at least $\lceil \frac{k+1}{2} \rceil$ bends. Moreover, all these bend-points belong to \textit{(i,i+1)}-sewing paths.
\end{Claim}

\begin{cproof}
For each $1 \leq j \leq k+1$, consider a point $x_j \in P_{u_j^i} \cap P_{u_{j+1}^i}$. By Claim \ref{bendpoint}, $x_j$ is a bend-point of either $P_{\alpha_i}$ or $P_{\alpha_{i+1}}$. Since $x_j$ and $x_{j+1}$ are the endpoints of $P_{u_{j+1}^i}$, one of them belongs to $P_{\alpha_i}$ while the other belongs to $P_{\alpha_{i+1}}$. Therefore, $\{x_j : 1 \leq j \leq k+1 \text{ and } j \equiv 0 \pmod 2\}$ is a subset of one of the considered secondary path and $\{x_j : 1 \leq j \leq k+1 \text{ and } j \equiv 1 \pmod 2\}$ is a subset of the other secondary path.
\end{cproof}

\medskip

Finally, we claim that there exists an index $1 \leq j \leq 17$ such that $P_{\alpha_j}$, $P_{\alpha_{j+1}}$, $P_{\alpha_{j+2}}$ and $P_{\alpha_{j+3}}$ are all pure. Indeed, if it weren't the case, there would be at least $\lfloor 20/4 \rfloor = 5$ secondary paths which are not pure. But at most $4$ secondary paths can contain endpoints of $P_a$ or $P_b$, a contradiction. It now follows from Claim \ref{bends} that $P_{\alpha_{j+1}}$ has at least $\lfloor \frac{k+1}{2} \rfloor$ bends (which belong to $(j,j+1)$-sewing paths) and that $P_{\alpha_{j+2}}$ has at least $\lfloor \frac{k+1}{2} \rfloor$ bends (which belong to $(j+2,j+3)$-sewing paths). Furthermore, one of $P_{\alpha_{j+1}}$ and $P_{\alpha_{j+2}}$ has at least $\lceil \frac{k+1}{2} \rceil$ bends which are endpoints of $(j+1,j+2)$-sewing paths. Therefore, there is a path with at least $\lfloor \frac{k+1}{2} \rfloor + \lceil \frac{k+1}{2} \rceil = k+1$ bends.
\end{proof}

\begin{corollary}
For any $k\geq 0$, $B_k$-CPG is strictly contained in $B_{k+1}$-CPG, even within the class of planar graphs.
\end{corollary}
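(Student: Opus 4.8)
The plan is to obtain the corollary immediately from Theorem~\ref{thm:unbound} together with one elementary containment observation, so that essentially all of the work has already been carried out. First I would record the (non-strict) inclusion $B_k\mbox{-CPG} \subseteq B_{k+1}\mbox{-CPG}$. This holds for a purely syntactic reason: a path having at most $k$ bends has, a fortiori, at most $k+1$ bends, so any $k$-bend CPG representation of a graph is in particular a $(k+1)$-bend CPG representation of the same graph. Hence every $B_k$-CPG graph is $B_{k+1}$-CPG. Since planarity is a property of the abstract graph and is independent of the chosen representation, intersecting both sides of this inclusion with the class of planar graphs preserves it, giving $(\text{planar}) \cap B_k\mbox{-CPG} \subseteq (\text{planar}) \cap B_{k+1}\mbox{-CPG}$.

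It then remains only to argue that this inclusion is \emph{strict}, and this is exactly the content of Theorem~\ref{thm:unbound}. For each $k \geq 0$ that theorem produces a planar graph $G_k$ which is $B_{k+1}$-CPG but not $B_k$-CPG; equivalently, $G_k$ lies in $(\text{planar}) \cap B_{k+1}\mbox{-CPG}$ but not in $(\text{planar}) \cap B_k\mbox{-CPG}$. This single witness certifies that the inclusion above is proper within the class of planar graphs, which is precisely the assertion of the corollary.

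I do not expect any genuine obstacle here: all the difficulty resides in Theorem~\ref{thm:unbound} itself, specifically in forcing some secondary path to carry $k+1$ bends by means of the sewing-path analysis of Claims~\ref{bendpoint} and~\ref{bends}, and the corollary is little more than a reformulation of that theorem in the language of class inclusions. The one point deserving a word of care is the phrase ``even within the class of planar graphs'': one must verify that restricting attention to planar graphs does not collapse the strictness. This is guaranteed precisely because the separating graph $G_k$ is itself planar, so it remains a valid separating example after intersecting both classes with the planar graphs.
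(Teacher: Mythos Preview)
Your proposal is correct and matches the paper's approach exactly: the paper states the corollary without proof, treating it as an immediate consequence of Theorem~\ref{thm:unbound} together with the trivial containment $B_k\text{-CPG} \subseteq B_{k+1}\text{-CPG}$. Your write-up simply makes explicit the two ingredients (the syntactic inclusion and the planar separating example $G_k$) that the paper leaves implicit.
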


Although not every planar graph is CPG \cite{cpg}, we may still define the bend number, with respect to CPG representations, of the class of planar graphs which are CPG. From Theorem \ref{thm:unbound}, we deduce the following.

\begin{corollary}
The class of planar CPG graphs has unbounded bend number with respect to CPG representations.
\end{corollary}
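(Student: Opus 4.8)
The plan is to derive the corollary directly from \Cref{thm:unbound} by a short contrapositive argument, since all the genuine work is already contained in the theorem. First I would recall the relevant definition: the bend number of the class of planar CPG graphs with respect to CPG representations is the least integer $k \geq 0$ such that every planar CPG graph is $B_k$-CPG, and asserting that this number is \emph{unbounded} means precisely that no finite $k$ has this property.

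Accordingly, I would argue by contradiction. Suppose the bend number were some finite value $k_0$, so that every planar CPG graph admits a CPG representation with at most $k_0$ bends per path. Applying \Cref{thm:unbound} with $k = k_0$ yields a planar graph $G_{k_0}$ that is $B_{k_0+1}$-CPG but not $B_{k_0}$-CPG. The only detail worth spelling out is that $G_{k_0}$ does belong to the class in question: it is planar by construction, and since it is $B_{k_0+1}$-CPG it is in particular a CPG graph, hence a planar CPG graph. This membership, together with the fact that $G_{k_0}$ is not $B_{k_0}$-CPG, contradicts the assumption that every planar CPG graph is $B_{k_0}$-CPG. Thus no finite bound can exist.

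I do not anticipate any real obstacle, as the theorem supplies, for each $k$, a single witness graph that simultaneously lies in the class and escapes $B_k$-CPG. The only point requiring care is the quantifier structure of the definition of the bend number of a class: to conclude unboundedness one must verify that the witnesses $\{G_k\}_{k \geq 0}$ are themselves planar CPG graphs, which is exactly what the preceding observation guarantees. Hence the corollary follows at once.
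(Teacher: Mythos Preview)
Your proposal is correct and takes essentially the same approach as the paper, which simply records the corollary as an immediate consequence of \Cref{thm:unbound} without spelling out the argument. Your only addition is making explicit that each witness $G_k$ is itself a planar CPG graph (being $B_{k+1}$-CPG), which is exactly the one-line justification needed.
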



\section{Recognition}
\label{sec:recognition}

In this section, we show that {\sc Recognition} is $\mathsf{NP}$-complete for $B_k$-CPG graphs with $k \geq 1$, as well as for planar $B_k$-CPG graphs with $k \geq 3$. As a corollary, we also show that {\sc Recognition} is $\mathsf{NP}$-complete for CPG graphs and remains $\mathsf{NP}$-complete for planar CPG graphs.

We begin by showing membership in $\mathsf{NP}$. This is trivial in the case of $B_{k}$-CPG graphs. Indeed, as a polynomial certificate it is enough to consider, for each path $P$, a sequence $s(P)$ of grid-points $(x_1,y_1), (x_2,y_2), \dots, (x_{\ell_P},y_{\ell_P})$, where $(x_1,y_1)$ and $(x_{\ell_P},y_{\ell_P})$ are the endpoints of $P$ and all the other points are the bend-points of $P$ in their order of appearance while traversing $P$ from $(x_1,y_1)$ to $(x_{\ell_P},y_{\ell_P})$ (notice that $\ell_P \leq k + 2$). In the case of CPG graphs the situation is more subtle and we need the following adaptation of \citep[Theorem~1]{Hlineny}.
 
\begin{theorem}\label{innp} Every CPG graph on $n$ vertices admits a CPG representation on a grid of area $O(n^2)$. In particular, recognizing CPG graphs belongs to $\mathsf{NP}$.
\end{theorem}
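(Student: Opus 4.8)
The plan is to adapt the normalization/compaction argument behind \citep[Theorem~1]{Hlineny} to the contact setting. The statement has two parts: the geometric claim that a quadratic-area grid always suffices, and the membership-in-$\mathsf{NP}$ consequence. The second part follows immediately from the first: given a CPG graph on $n$ vertices, a certificate consists of the grid-step together with the sequences $s(P)$ for all paths; if the grid has area $O(n^2)$, then every coordinate is an integer bounded by a polynomial in $n$, so the certificate has polynomial size, and verifying that the paths are pairwise interiorly disjoint and that two paths touch exactly when the corresponding vertices are adjacent can be done in polynomial time. Thus the real content is the first sentence, and that is where I would concentrate.

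\smallskip

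\textbf{Reduction of grid area.} Start from an arbitrary CPG representation $\mathcal{R}=(\mathcal{G},\mathcal{P})$ of $G$ on $n$ vertices, which I assume exists by hypothesis (a priori on a grid of unbounded size, and whose paths may have unboundedly many bends, since $G$ is merely CPG). The goal is to produce a combinatorially equivalent representation on a grid whose number of distinct rows and distinct columns is each $O(n)$. The key observation is that the only columns (resp. rows) that carry combinatorial information are those containing an endpoint, a bend-point, or a contact point of some path. I would first bound the total number of such \emph{relevant} horizontal and vertical coordinates. Every contact point is, by definition of a CPG representation, an endpoint of at least one path, so the set of contact points injects into the set of path-endpoints; hence there are at most $2n$ contact points. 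Endpoints likewise number at most $2n$. The subtlety is bounding bend-points, which in a general CPG graph are unbounded; the resolution is that bends which are \emph{not} relevant to any contact can be straightened. Concretely, I would argue that any maximal sub-path between two consecutive ``events'' (endpoints or contact points along a path) may be rerouted as a monotone staircase, or even a single axis-parallel detour, without introducing new contacts, since the interiors of the paths are pairwise disjoint and no other path witnesses those intermediate bends.

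\smallskip

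\textbf{Column/row sweep and compaction.} Once the number of combinatorially relevant coordinates is shown to be $O(n)$ in each direction, I would perform the standard sweep-and-compress step: collapse all empty columns (those containing no relevant point) by deleting them and identifying adjacent surviving columns, and symmetrically for rows. Formally, list the distinct $x$-coordinates carrying a relevant point as $x_1<\dots<x_p$ with $p=O(n)$, remap $x_i\mapsto i$, and do likewise for the $y$-coordinates; the refinement operation described in \Cref{sec:prel} (and the fact that it leaves the sequences $s(P)$ unchanged) guarantees that this retiming of coordinates preserves which points coincide and which segments are horizontal or vertical, hence preserves all touchings and all interior-disjointness relations. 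The resulting representation lives on a grid with $O(n)$ rows and $O(n)$ columns, i.e.\ of area $O(n^2)$, and represents the same graph $G$. The main obstacle, and the step deserving the most care, is the rerouting argument that controls the number of bends/relevant coordinates: one must verify that straightening or rerouting an ``inactive'' portion of a path can always be carried out within the strip between consecutive event-columns and event-rows without colliding with the interiors of other paths, which is exactly the planarity-style topological argument that \citep{Hlineny} makes for the EPG case and which transfers here because touchings occur only at endpoints. Everything after that bound is routine coordinate compression.
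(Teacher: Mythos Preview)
Your proposal has a genuine gap at the rerouting step, and it is not a detail that can be filled in locally. You claim that any maximal sub-path of $P$ between two consecutive events (endpoints or contact points) ``may be rerouted as a monotone staircase, or even a single axis-parallel detour, without introducing new contacts, since the interiors of the paths are pairwise disjoint and no other path witnesses those intermediate bends.'' But the fact that no other path \emph{touches} $P$ along that stretch does not mean the region is empty: other paths may pass through it without touching $P$, and your rerouted segment can collide with them. Concretely, take $P_1$ with $s(P_1)=(0,0),(0,10),(10,10),(10,0)$ and a short disjoint path $P_3$ with $s(P_3)=(5,-1),(5,1)$; neither touches the other, and $P_1$ has only its two endpoints as events, yet the $0$-bend reroute $(0,0)$--$(10,0)$ crosses the interior of $P_3$. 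Any attempt to detour around such obstacles reintroduces bends, and since the obstacles are themselves arbitrary CPG paths, you are back to the original problem. Rerouting one path at a time cannot work; what is needed is a \emph{simultaneous} re-embedding of all paths, which is a global orthogonal-drawing problem, not a local straightening. Coordinate compression afterwards is fine, but it presupposes that the number of bend-columns is already $O(n)$, which is precisely what you have not established.

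The paper avoids this difficulty entirely by abstracting away the geometry before re-embedding. From $\mathcal{R}$ it builds an auxiliary plane graph $G'$ whose vertices are the contact points together with the free endpoints (at most $2n$ of each, so $|V(G')|=O(n)$) and whose edges are the path-portions between consecutive such points; $G'$ has maximum degree at most $4$. Taking the $1$-subdivision $H$ yields a \emph{simple} planar graph of maximum degree $4$ with $O(n)$ vertices, and then a single invocation of Tamassia's orthogonal grid-embedding theorem \citep{tamassia} places $H$ on a grid of area $O(n^2)$ with edges drawn as interiorly disjoint grid-paths. Reading the images of the original paths off this embedding gives a CPG representation of $G$ on an $O(n^2)$ grid. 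The point is that Tamassia's theorem solves the global rerouting problem for you; your sketch tries to do that step by hand and does not succeed.
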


\begin{proof} Let $G$ be a CPG graph on $n$ vertices with CPG representation $\mathcal{R} = (\mathcal{G}, \mathcal{P})$. We first build a plane graph $G'$ as follows. The vertices of $G'$ are the contact points in $\mathcal{R}$ and the free endpoints of paths in $\mathcal{P}$. The edges of $G'$ are the portions of paths in $\mathcal{P}$ between the vertices. Since there are at most $2n$ contact points in $\mathcal{R}$, $|V(G')| = O(n)$. Since $G'$ has maximum degree at most $4$, it has $O(n)$ edges. Therefore, denoting by $H$ the $1$-subdivision of $G'$, we have that $H$ is a simple planar graph with maximum degree at most $4$ and $|V(H)| = O(n)$. 

By \citep{tamassia}, every planar graph on $n$ vertices with maximum degree at most $4$ admits an embedding on the grid where vertices are mapped to grid-points, edges are mapped to pairwise interiorly disjoint grid-paths connecting the two grid-points corresponding to the endvertices and the area of the smallest rectangle enclosing the embedding is $O(n^2)$. Therefore, $H$ admits such an embedding on a grid of area $O(n^2)$ and so $G$ admits a CPG representation on a grid of area $O(n^2)$. In this representation, every path $P$ has a polynomial (in $n$) number of bends and we provide the sequences $s(P)$ as a polynomial certificate. \end{proof}

We now show $\mathsf{NP}$-hardness of {\sc Recognition} for $B_k$-CPG graphs. Our hardness proof is based on that of \citep[Theorem~4]{Hlineny}. We reduce from \textsc{Planar Exactly 3-Bounded 3-Sat}, shown to be $\mathsf{NP}$-complete in~\cite{Dahlhaus}. This problem is defined as a special case of the classical \textsc{Planar 3-Sat} used in \citep[Theorem~4]{Hlineny}: the instance is a formula $\Phi$ with variable set $V$ and clause set $C$ such that each clause has size at most 3, each variable appears in exactly 3 clauses and the bipartite graph $H$ with vertex set $V \cup C$ and an edge between $x \in V$ and $c \in C$ if either $x \in c$ or $\neg x \in c$ is planar. Before turning to the proof, we first provide some preliminary results. Note that the following easy observation holds up to symmetry and rotation.

\begin{observation}
Let $G$ be a $B_1$-CPG graph and $\mathcal{R}$ be a 1-bend CPG representation of $G$. Suppose that there exist two distinct grid-points $p=(x_1,y)$ and $q=(x_2,y)$ with $x_1 < x_2$ and three paths $P$, $P'$ and $P''$ such that (see \Cref{fig:grid-points}):
\begin{itemize}
    \item $p$ is an endpoint of $P$ and $P$ uses the grid-edge above $p$;
    \item $p$ is an endpoint of $P'$ and $P'$ uses the grid-edge below $p$;
    \item $q$ is an endpoint of $P''$ and $P''$ does not use the grid-egde to the left of $q$.
\end{itemize}
Then one of the three cases depicted in \Cref{fig:grid-points} occurs. If (a) occurs, then either $P''$ and $P$ touch or $P''$ and $P'$ touch but not both; and if (b) (resp. (c)) occurs, then $P''$ cannot touch $P'$ (resp. $P$).
\label{obs:grid-points}
\end{observation}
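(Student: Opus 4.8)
The plan is to reduce the statement to a clean planar separation argument, exploiting that every path has at most one bend and hence consists of at most two axis-parallel segments. Throughout I would work up to rotation, as permitted by the remark preceding the statement, so I take $p$ and $q$ to lie on a common horizontal grid-line $L$ with $p$ to the left of $q$; the edge-usage hypotheses then read: $P$ leaves $p$ going \emph{up}, $P'$ leaves $p$ going \emph{down}, and $P''$ leaves $q$ along an edge that is not the leftward one. Since the paths are pairwise interiorly disjoint, two paths touch precisely when they share a grid-point, and such a shared point is an endpoint of at least one of them; I would use this freely.

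First I would record the basic confinement facts. Because $P$ starts at $p$ with an upward edge and has at most one bend, its only possible shapes are a single upward vertical segment, or an upward vertical segment followed by a horizontal one; in either case every point of $P$ other than $p$ lies strictly above $L$. Thus $P$ is contained in the closed upper half-plane bounded by $L$ and meets $L$ only at $p$. Symmetrically, $P'$ is contained in the closed lower half-plane and meets $L$ only at $p$. This is the heart of the argument: the line $L$ already separates $P \setminus \{p\}$ from $P' \setminus \{p\}$.

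Next I would carry out the trichotomy on the first edge of $P''$, which is exactly the three cases of the figure: rightward $(a)$, upward $(b)$, downward $(c)$. In case $(b)$ the first segment goes up and a possible second segment is horizontal at a row strictly above $L$, so $P''$ is contained in the closed upper half-plane and meets $L$ only at $q$; since $P'$ lies in the closed lower half-plane and meets $L$ only at $p \neq q$, the two paths are disjoint and $P''$ cannot touch $P'$. Case $(c)$ is the mirror image and yields that $P''$ cannot touch $P$. In case $(a)$ the first (horizontal) segment of $P''$ lies on $L$ at columns at least that of $q$, hence strictly to the right of $p$, so it meets neither $P$ nor $P'$ (both of which meet $L$ only at $p$); after its single bend $P''$ enters \emph{either} the closed upper \emph{or} the closed lower half-plane, and is therefore confined to one of them. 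Consequently $P''$ can meet at most one of $P$ (upper) and $P'$ (lower), which is the asserted ``but not both'', the side being determined by the direction of the bend.

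The only thing to get right — the ``hard part'', though it is genuinely easy here — is the bookkeeping that the at-most-one-bend constraint, together with the hypothesis that $P''$ avoids the leftward edge at $q$, keeps $P''$ inside a single closed half-plane and keeps its trace on $L$ away from the column of $p$. Once this confinement is in place, the conclusions follow from the trivial fact that the closed upper and closed lower half-planes meet only along $L$, where $P$, $P'$ and $P''$ occupy the distinct points $p$, $p$ and $q$ respectively. I would also note that interior-disjointness is used here only to justify that a touch is a shared grid-point, so that no Jordan-curve argument is needed, in contrast with the proof of \Cref{bendpoint}.
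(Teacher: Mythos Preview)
The paper gives no proof of this observation; it is introduced with ``Note that the following easy observation holds up to symmetry and rotation'' and then simply stated, the justification being the accompanying figure. Your proposal is therefore not comparable to a written argument in the paper but is a correct and natural expansion of what the authors leave implicit: the at-most-one-bend constraint confines each of $P$, $P'$, and (in cases (b), (c), or after the bend in case (a)) $P''$ to a closed half-plane bounded by the line $L$ through $p$ and $q$, with the traces on $L$ being the distinct points $p$ and $q$ respectively, so the asserted non-touchings follow immediately. One small point of phrasing: the clause ``either $P''$ and $P$ touch or $P''$ and $P'$ touch but not both'' is used in the paper only in the ``not both'' sense (see its applications in the proof of \Cref{lem:E1}), and indeed if $P''$ has no bend in case (a) it touches neither; you read it the same way, so there is no discrepancy.
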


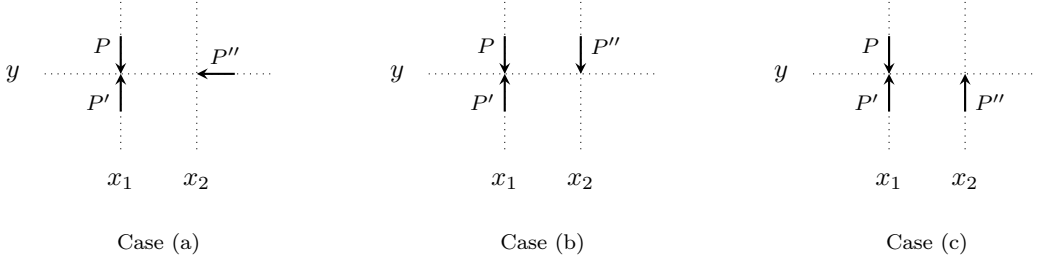
\begin{figure}[h]
\centering
\scalebox{1}
{
    \begin{subfigure}{.3\textwidth}
    \centering
    \begin{tikzpicture}[scale=1]
    
        \draw[thick,->,>=stealth] (-.5,.5) -- (-.5,0) node[near start,left] {\footnotesize $P$};
        \draw[thick,->,>=stealth] (-.5,-.5) -- (-.5,0) node[near start,left] {\footnotesize $P'$};
        \draw[thick,->,>=stealth] (1,0) -- (.5,0) node[near start,above] {\footnotesize $P''$};
        
        \draw[dotted]
            (-1.5,0) -- (1.5,0)
            (-.5,-1) -- (-.5,1)
            (.5,-1) -- (.5,1);
        
        \node[below] at (-.5,-1.2) {$x_1$};
        \node[below] at (.5,-1.2) {$x_2$};
        \node[left] at (-1.7,0) {$y$};

        \node[draw=none] at (0,-2.25) {\footnotesize Case (a)};

    \end{tikzpicture}
    \end{subfigure}
    
    \begin{subfigure}{.3\textwidth}
    \centering
    \begin{tikzpicture}[scale=1]
    
        \draw[thick,->,>=stealth] (-.5,.5) -- (-.5,0) node[near start,left] {\footnotesize $P$};
        \draw[thick,->,>=stealth] (-.5,-.5) -- (-.5,0) node[near start,left] {\footnotesize $P'$};
        \draw[thick,->,>=stealth] (.5,.5) -- (.5,0) node[near start,right] {\footnotesize $P''$};
        
        \draw[dotted]
            (-1.5,0) -- (1.5,0)
            (-.5,-1) -- (-.5,1)
            (.5,-1) -- (.5,1);
        
        \node[below] at (-.5,-1.2) {$x_1$};
        \node[below] at (.5,-1.2) {$x_2$};
        \node[left] at (-1.7,0) {$y$};
        
        \node[draw=none] at (0,-2.25) {\footnotesize Case (b)};

    \end{tikzpicture}
    \end{subfigure}
    
    \begin{subfigure}{.3\textwidth}
    \centering
    \begin{tikzpicture}[scale=1]
    
        \draw[thick,->,>=stealth] (-.5,.5) -- (-.5,0) node[near start,left] {\footnotesize $P$};
        \draw[thick,->,>=stealth] (-.5,-.5) -- (-.5,0) node[near start,left] {\footnotesize $P'$};
        \draw[thick,->,>=stealth] (.5,-.5) -- (.5,0) node[near start,right] {\footnotesize $P''$};
        
        \draw[dotted]
            (-1.5,0) -- (1.5,0)
            (-.5,-1) -- (-.5,1)
            (.5,-1) -- (.5,1);
        
        \node[below] at (-.5,-1.2) {$x_1$};
        \node[below] at (.5,-1.2) {$x_2$};
        \node[left] at (-1.7,0) {$y$};

        \node[draw=none] at (0,-2.25) {\footnotesize Case (c)};

    \end{tikzpicture}
    \end{subfigure}
}
\caption{$P$, $P'$ and $P''$ cannot pariwise touch in a 1-bend CPG representation.}
\label{fig:grid-points}
\end{figure}

The following is a variation of \citep[Lemma~4.1]{Hlineny}.

\begin{lemma}
\label{lem:freeendpoint}
Let $\mathcal{R}$ be a CPG representation of a graph $G$ containing $f$ free endpoints of paths and no 4-contact point. Then $\mathcal{R}$ must contain at least $|E(G)| - 2 \cdot |V(G)| + f$ contact points corresponding to pairwise edge-disjoint triangles in $G$.
\end{lemma}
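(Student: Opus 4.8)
The plan is to work with a \emph{maximum} collection $\mathcal{T}^\ast$ of pairwise edge-disjoint triangles, each realized at a $3$-contact point of $\mathcal{R}$, and to show that $\nu := |\mathcal{T}^\ast| \ge |E(G)| - 2\cdot|V(G)| + f$. Since $\mathcal{R}$ has no $4$-contact point, the only contact points lying on three paths are $3$-contact points, and the three pairwise-touching paths through such a point correspond to a triangle of $G$; these are exactly the contact points ``corresponding to a triangle''. Because the paths are pairwise interiorly disjoint, every edge $uv\in E(G)$ is realized at a contact point that is an endpoint of $P_u$ or of $P_v$, which is what lets me charge edges to path-endpoints. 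I would rewrite the target bound as $|E(G)| \le (2\cdot|V(G)| - f) + \nu$, and observe that $2\cdot|V(G)| - f$ is precisely the number of \emph{non-free} path-endpoints, each of which sits at a unique contact point. A direct global count via the weights $w_u$ of \Cref{sec:prel} bounds the total number of triangle-contact points from below by $|E(G)|-2\cdot|V(G)|+f$, but those triangles need not be edge-disjoint (they can share an edge when two paths meet at two distinct $3$-contact points), so such a count does not suffice; this is why I pass to a maximum packing.

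First I would record the easy half. The $\nu$ triangles of $\mathcal{T}^\ast$ are realized at $\nu$ \emph{distinct} $3$-contact points, since distinct triangles have disjoint edge sets and a single $3$-contact point determines only one triangle. Moreover each such point carries at least two path-endpoints, because at a $3$-contact point at most one of the three paths can have the point in its interior. Hence $\mathcal{T}^\ast$ uses a set $Q^\ast$ of at least $2\nu$ distinct non-free endpoints. It then suffices to injectively charge every edge \emph{not} covered by $\mathcal{T}^\ast$ to a non-free endpoint outside $Q^\ast$: this gives $|E(G)| - 3\nu \le (2\cdot|V(G)| - f) - 2\nu$, which rearranges to the desired inequality.

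The heart of the argument -- and the step I expect to be the main obstacle -- is constructing this injective charging. The key preliminary observation is that an uncovered edge is realized only at contact points that are \emph{not} the $3$-contact point of a packed triangle (otherwise all three edges of that triangle, the uncovered one included, would be covered); consequently any endpoint I charge to automatically avoids $Q^\ast$. For injectivity I would process one contact point at a time and assign to it the uncovered edges realized there. A $2$-contact point realizes a single edge and has an endpoint available to absorb it. At a $3$-contact point with triangle $\{u,v,w\}$, maximality of $\mathcal{T}^\ast$ forbids $uv,uw,vw$ from all being uncovered (else this triangle could be added to the packing), so at most two uncovered edges need placing; since at least two of $P_u,P_v,P_w$ end at the point, a short Hall-type distinct-representatives check shows the (at most two) uncovered edges can be sent to distinct endpoints located there. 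As endpoints at different contact points are distinct path-endpoints, the global charge is injective, which completes the proof. The delicate points to get right are precisely the two facts that ``uncovered edges avoid packed contact points'' and that the local distinct-representative assignment is always feasible, both of which hinge on the no-$4$-contact hypothesis and on the maximality of $\mathcal{T}^\ast$.
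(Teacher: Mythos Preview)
Your argument is correct, but it takes a noticeably different route from the paper's. The paper avoids the packing/maximality machinery entirely: it fixes once and for all a map $t\colon E(G)\to T(\mathcal{R})$ sending each edge to one contact point realizing it, and then partitions $T(\mathcal{R})$ according to the size of the $t$-preimage. The set $T_3=\{x:|t^{-1}(x)|=3\}$ automatically consists of $3$-contact points whose triangles are pairwise edge-disjoint, simply because $t$ is a function (two triangles sharing an edge $e$ would both have to equal $t(e)$). Two one-line counts then finish the proof: from the endpoint side $|T_1|+2|T_2|+2|T_3|\le 2|V(G)|-f$, and from the edge side $|E(G)|\le |T_1|+2|T_2|+3|T_3|$; subtracting gives $|T_3|\ge |E(G)|-2|V(G)|+f$.

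What you do instead is start from a \emph{maximum} edge-disjoint packing $\mathcal{T}^\ast$ of triangles realized at $3$-contact points and use maximality to ensure that at every non-packed $3$-contact point at most two of its three edges are uncovered; this is what makes your local charging go through. Your proof is a valid alternative, and the charging idea is natural, but the paper's trick of letting the single map $t$ manufacture edge-disjointness for free is both shorter and sidesteps the need for any Hall-type or case analysis at individual contact points. If you want to tighten your write-up, you should also make explicit that for each uncovered edge you first \emph{choose} one contact point realizing it before doing the local assignment; otherwise an edge realized at two contact points could in principle be processed twice.
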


\begin{proof}
We construct a map $t\colon E(G) \to T(\mathcal{R})$ that assigns to each $e \in E(G)$ a contact point of $\mathcal{R}$ representing the edge $e$. If an edge $e$ is represented by more than one contact point, we arbitrarily choose one of them. Then, for any 2-contact point $x$, $|t^{-1}(x)| \leq 1$; and for any 3-contact point $x$, $|t^{-1}(x)| \leq 3$. 

Let $T(\mathcal{R}) = T_1 \cup T_2 \cup T_3$ be a partition of the contact points in $\mathcal{R}$ where $T_1 = \{x\in T(\mathcal{R}): |t^{-1}(x)| \leq 1\}$, $T_2=\{x\in T(\mathcal{R}): |t^{-1}(x)| = 2\}$ and $T_3=\{x\in T(\mathcal{R}): |t^{-1}(x)| = 3\}$. Note that any contact point in $T_3$ corresponds to a triangle in $G$ and that two such triangles cannot share an edge by definition of $t$. Further observe that in any contact point in $T_1$, at least one path ends; and since any contact point in $T_2$ or $T_3$ is a 3-contact point, at least two paths end in it. Thus, by counting the number of endpoints of paths in $\mathcal{R}$, we obtain that $|T_1| + 2 |T_2| + 2 |T_3| \leq 2 |V(G)| - f$. But from the map $t$ we know that $|E(G)| \leq |T_1| + 2 |T_2| + 3 |T_3|$ and so $|E(G)| - 2 |V(G)| + f \leq |T_3|$, which concludes the proof.
\end{proof}

The rest of this section is organized as follows. In \Cref{subsec:endeat}, we introduce three gadgets, referred to as \textit{end-eating graphs}, which we use in subsequent constructions. In \Cref{subsec:reduction}, we describe the actual reduction.

\subsection{The end-eating graphs}
\label{subsec:endeat}

The end-eating graph $\mathcal{E}_1$ depicted in Figure \ref{fig:endeating1} will be used in the proof of the $\mathsf{NP}$-hardness of {\sc Recognition} restricted to $B_1$-CPG graphs; the end-eating graph $\mathcal{E}_2$ depicted in Figure \ref{fig:endeating2} will be used in the proof of the $\mathsf{NP}$-hardness of {\sc Recognition} restricted to $B_2$-CPG graphs; and the end-eating graph $\mathcal{E}_3$ depicted in Figure \ref{fig:endeating3} will be used in the proof of the $\mathsf{NP}$-hardness of {\sc Recognition} restricted to planar $B_k$-CPG graphs, for $k \geq 3$. These gadgets are used to capture endpoints of paths: if a vertex $u$ of some other graph is adjacent to the special vertex $v$ of $\mathcal{E}_1$ (or to any vertex $v$ of $\mathcal{E}_2$ or $\mathcal{E}_3$), then the edge $uv$ can only be represented by using one endpoint of the path $P_u$ and an interior point of $P_v$, as shown in the following Lemmas.

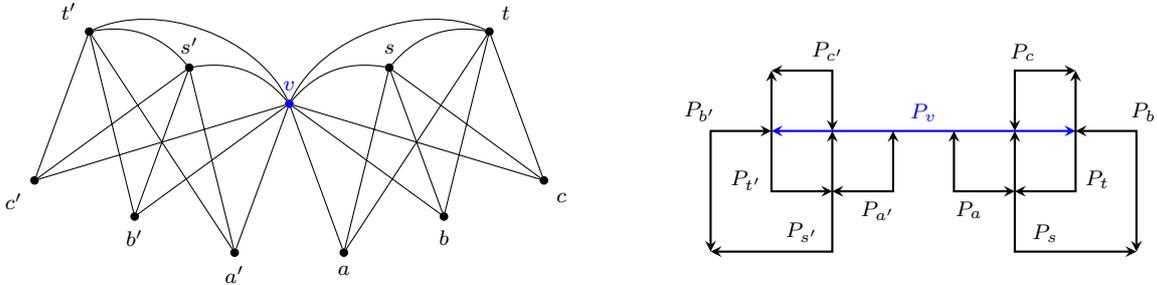
\begin{figure}[h]
\centering
\scalebox{1}
{
    \begin{subfigure}{.5\textwidth}
    \centering
    \begin{tikzpicture}[scale=0.7]
    
        \begin{scope}[shift={(0,0)},rotate=20,scale=1]
        \node[blue,circ] (v) at (0,0) [label=above:{\color{blue}\footnotesize $v$}] {};
        \node[circ] (s) at (2,0) [label=above:{\footnotesize $s$}] {};
        \node[circ] (t) at (4,0) [label=above right:{\footnotesize $t$}] {};
        \node[circ] (a) at (0,-3) [label=below:{\footnotesize $a$}] {};
        \node[circ] (b) at (2,-3) [label=below:{\footnotesize $b$}] {};
        \node[circ] (c) at (4,-3) [label=below right:{\footnotesize $c$}] {};

        \draw[-]
        (a) -- (v) -- (b) -- (s) -- (a) -- (t) -- (b)
        (v) -- (c) -- (t)
        (c) -- (s);

        \draw (v) .. controls (0.7,0.4) and (1.3,0.4) .. (s);
        \draw (s) .. controls (2.7,0.4) and (3.3,0.4) .. (t);
        \draw (t) .. controls (3,1) and (1,1) .. (v);
        \end{scope}
        
        \begin{scope}[shift={(0,0)},rotate=-20,scale=1]
        \node[circ] (s') at (-2,0) [label=above:{\footnotesize $s'$}] {};
        \node[circ] (t') at (-4,0) [label=above left:{\footnotesize $t'$}] {};
        \node[circ] (a') at (0,-3) [label=below:{\footnotesize $a'$}] {};
        \node[circ] (b') at (-2,-3) [label=below:{\footnotesize $b'$}] {};
        \node[circ] (c') at (-4,-3) [label=below left:{\footnotesize $c'$}] {};

        \draw[-]
        (a') -- (v) -- (b') -- (s') -- (a') -- (t') -- (b')
        (v) -- (c') -- (t')
        (c') -- (s');

        \draw (v) .. controls (-0.7,0.4) and (-1.3,0.4) .. (s');
        \draw (s') .. controls (-2.7,0.4) and (-3.3,0.4) .. (t');
        \draw (t') .. controls (-3,1) and (-1,1) .. (v);
        \end{scope}

    \end{tikzpicture}
    \end{subfigure}

    \begin{subfigure}{.5\textwidth}
    \centering
    \begin{tikzpicture}[scale=0.8]

        \draw[blue,thick,<->,>=stealth] (-3,2) -- (2,2) node[midway,above] {\footnotesize $P_v$};
        \draw[thick,<->,>=stealth] (1,2) -- (1,0) -- (3,0) node[near start,above] {\footnotesize $P_s$};
        \draw[thick,<->,>=stealth] (1,1) -- (2,1) -- (2,3) node[near start,below right] {\footnotesize $P_t$};
        \draw[thick,<->,>=stealth] (0,2) -- (0,1) -- (1,1) node[near start,below] {\footnotesize $P_a$};
        \draw[thick,<->,>=stealth] (3,0) -- (3,2) -- (2,2) node[near start,above right] {\footnotesize $P_b$};
        \draw[thick,<->,>=stealth] (1,2) -- (1,3) -- (2,3) node[midway,above left] {\footnotesize $P_c$};
        \draw[thick,<->,>=stealth] (-2,2) -- (-2,0) -- (-4,0) node[near start,above] {\footnotesize $P_{s'}$};
        \draw[thick,<->,>=stealth] (-2,1) -- (-3,1) -- (-3,3) node[near start,below left] {\footnotesize $P_{t'}$};
        \draw[thick,<->,>=stealth] (-1,2) -- (-1,1) -- (-2,1) node[near start,below] {\footnotesize $P_{a'}$};
        \draw[thick,<->,>=stealth] (-4,0) -- (-4,2) -- (-3,2) node[near start,above left] {\footnotesize $P_{b'}$};
        \draw[thick,<->,>=stealth] (-2,2) -- (-2,3) -- (-3,3) node[midway,above right] {\footnotesize $P_{c'}$};

    \end{tikzpicture}
    \end{subfigure}
}
\caption{The end-eating graph $\mathcal{E}_1$ and a 1-bend CPG representation of it.}
\label{fig:endeating1}
\end{figure}

\begin{figure}[ht]
\centering
\scalebox{1}
{
    \begin{subfigure}[b]{.45\textwidth}
    \centering
    \begin{tikzpicture}[scale=.55]
        \node [circ] at (1,0) (v1)  [label=below left:{\tiny $1$}] {};
        \node [circ] at (0,2) (v2)  [label=below left:{\tiny $2$}] {};
        \node [circ] at (0,4) (v3)  [label=above left:{\tiny $3$}] {};
        \node [circ] at (1,6) (v4)  [label=above left:{\tiny $4$}] {};
        \node [circ] at (6,0) (v5)  [label=below right:{\tiny $5$}] {};
        \node [circ] at (7,2) (v6)  [label=below right:{\tiny $6$}] {};
        \node [circ] at (7,4) (v7)  [label=above right:{\tiny $7$}] {};
        \node [circ] at (6,6) (v8)  [label=above right:{\tiny $8$}] {};
        \node [circ] at (5,1) (v9)  [label=below:{\tiny $9$}] {};
        \node [circ] at (2,1) (v10)  [label=below:{\tiny $10$}] {};
        \node [circ] at (2,5) (v12)  [label=above:{\tiny $12$}] {};
        \node [circ] at (5,5) (v13)  [label=above:{\tiny $13$}] {};
        \node [circ] at (2,3) (v11)  [label=right:{\tiny $11$}] {};
        \node [circ] at (5,3) (v14)  [label=left:{\tiny $14$}] {};
        \node [circ] at (4,2) (v15)  [label=below:{\tiny $15$}] {};
        \node [circ] at (3,2) (v16)  [label=below:{\tiny $16$}] {};
        \node [circ] at (3,4) (v17)  [label=above:{\tiny $17$}] {};
        \node [circ] at (4,4) (v18)  [label=above:{\tiny $18$}] {};
	
        \draw[-]
            (v1) -- (v2) -- (v3) -- (v4) -- (v8) -- (v7) -- (v6) -- (v5) -- (v1) -- (v4) -- (v2) -- (v11) -- (v3) -- (v12) -- (v4) -- (v13) -- (v7) -- (v14) -- (v6) -- (v9) -- (v5) -- (v8) -- (v6) -- (v15) -- (v14) -- (v13) -- (v8) -- (v12) -- (v13) -- (v18) -- (v7) -- (v5) -- (v10) -- (v9) -- (v1) -- (v10) -- (v2) -- (v16) -- (v10) -- (v11) -- (v16) -- (v15) -- (v9) -- (v14) -- (v18) -- (v15) -- (v17) -- (v18) -- (v16) -- (v17) -- (v11) -- (v12) -- (v17) -- (v3) -- (v1);
    \end{tikzpicture} 
    \end{subfigure}

    \begin{subfigure}[b]{.45\textwidth}
    \centering
    \begin{tikzpicture}[scale=.6]
        \draw[thick,<->,>=stealth] (1,5)--(0,5)--node[left=.3pt]{\tiny{$P_1$}}(0,2)--(2,2);
        \draw[thick,<->,>=stealth] (1,5)--(1,6)--node[above=.3pt]{\tiny{$P_4$}}(4,6)--(4,4);
        \draw[thick,<->,>=stealth] (1,5)--node[left=.3pt]{\tiny{$P_2$}}(1,3)--(2,3);
        \draw[thick,<->,>=stealth] (1,5)--node[above=.3pt]{\tiny{$P_3$}}(3,5)--(3,4);
        \draw[thick,<->,>=stealth] (5,1)--(5,0)--node[below=.3pt]{\tiny{$P_5$}}(2,0)--(2,2);
        \draw[thick,<->,>=stealth] (5,1)--(6,1)--node[right=.3pt]{\tiny{$P_8$}}(6,4)--(4,4);
        \draw[thick,<->,>=stealth] (5,1)--node[below=.3pt]{\tiny{$P_6$}}(3,1)--(3,2);
        \draw[thick,<->,>=stealth] (5,1)--node[right=.3pt]{\tiny{$P_7$}}(5,3)--(4,3);
        \draw[thick,<->,>=stealth] (3,2)--node[right=7pt]{\tiny{$P_{14}$}}(4,2)--(4,3);
        \draw[thick,<->,>=stealth] (2,3)--node[above=7pt]{\tiny{$P_{11}$}}(2,4)--(3,4);
        \draw[thick,<->,>=stealth] (3,3)--node[right=-2pt]{\tiny{$P_{15}$}}(3,2);
        \draw[thick,<->,>=stealth] (3,3)--node[above=-2pt]{\tiny{$P_{18}$}}(4,3);
        \draw[thick,<->,>=stealth] (3,3)--node[left=-3pt]{\tiny{$P_{17}$}}(3,4);
        \draw[thick,<->,>=stealth] (3,3)--node[below=0pt]{\tiny{$P_{16}$}}(2,3);
        \draw[thick,<->,>=stealth] (2,2)--node[below=-2pt]{\tiny{$P_{9}$}}(3,2);
        \draw[thick,<->,>=stealth] (2,2)--node[left=-2pt]{\tiny{$P_{10}$}}(2,3);
        \draw[thick,<->,>=stealth] (4,4)--node[right=-2pt]{\tiny{$P_{13}$}}(4,3);
        \draw[thick,<->,>=stealth] (4,4)--node[above=-2pt]{\tiny{$P_{12}$}}(3,4);
    \end{tikzpicture}
    \end{subfigure}
}
\caption{The end-eating graph $\mathcal{E}_2$ and a 2-bend CPG representation of it.}
\label{fig:endeating2}
\end{figure}

\begin{figure}[h]
\centering
\scalebox{1}
{
    \begin{subfigure}{.45\textwidth}
    \centering
    \begin{tikzpicture}[scale=0.24] 
    
        \node[circ] (1') at (0,16) [label={[label distance= .01cm]10:\tiny $1'$}] {};
        \node[circ] (2') at (0,22) [label=above:{\tiny $2'$}] {};
        \node[circ] (3') at (6,24) [label=right:{\tiny $3'$}] {};
        \node[circ] (4') at (10,19) [label=above:{\tiny $4'$}] {};
        \node[circ] (5') at (6,14) [label=right:{\tiny $5'$}] {};
        \node[circ] (a') at (4,16.5) [] {};
        \node[circ] (b') at (7,17.5) [] {};
        \node[circ] (c') at (2,19) [] {};
        \node[circ] (d') at (4.5,19) [] {};
        \node[circ] (e') at (7,20.5) [] {};
        \node[circ] (f') at (4,21.5) [] {};
        \node[circ] (4) at (3,2) [label={[label distance=.01cm]280:\tiny $4$}] {};
        \node[circ] (3) at (3,8) [label=above right:{\tiny $3$}] {};
        \node[circ] (2) at (9,10) [label={[label distance=.04cm]268:\tiny $2$}] {};
        \node[circ] (1) at (13,5) [label=below:{\tiny $1$}] {};
        \node[circ] (5) at (9,0) [label=below right:{\tiny $5$}] {};
        \node[circ] (b) at (7,2.5) [] {};
        \node[circ] (a) at (10,3.5) [] {};
        \node[circ] (e) at (5,5) [] {};
        \node[circ] (d) at (7.5,5) [] {};
        \node[circ] (c) at (10,6.5) [] {};
        \node[circ] (f) at (7,7.5) [] {};

        \draw[-]
        (1) -- (2) -- (3) -- (4) -- (5) -- (1) -- (c) -- (2) -- (f) -- (3) -- (e) -- (4) -- (b) -- (5) -- (a) -- (1)
        (a) -- (c) -- (f) -- (e) -- (b) -- (a) -- (d) -- (b)
        (c) -- (d) -- (e)
        (d) -- (f)
        (1') -- (2') -- (3') -- (4') -- (5') -- (1') -- (c') -- (2') -- (f') -- (3') -- (e') -- (4') -- (b') -- (5') -- (a') -- (1')
        (a') -- (c') -- (f') -- (e') -- (b') -- (a') -- (d') -- (b')
        (c') -- (d') -- (e')
        (d') -- (f')
        (1') -- (2)
        (1') -- (3)
        (1') -- (4)
        (1') .. controls (0,1) and (2,-1) .. (5)
        (2') .. controls (-4,0) and (2,-3) .. (5)
        (2') .. controls (8,30) and (16,22) .. (1)
        (3') .. controls (12,21) .. (1)
        (4') -- (1)
        (5') .. controls (10,11) .. (1)
        (5') -- (2);
        
    \end{tikzpicture}
    \end{subfigure}
    
    \begin{subfigure}{.45\textwidth}
    \centering
    \begin{tikzpicture}[scale=0.5] 
        \draw[red,thick,<->,>=stealth] (7,8) -- (7,9) -- (10,9) node[midway,above] {\tiny $P_{1'}$} -- (10,2) -- (5,2);
        \draw[red,thick,<->,>=stealth] (7,9) -- (7,10) -- (4,10) -- (4,11) -- (10,11) node[midway,above] {\tiny $P_{2'}$};
        \draw[red,thick,<->,>=stealth] (4,10) -- (2,10) -- (2,14) -- (11,14) node[midway,below] {\tiny $P_{3'}$} -- (11,13);
        \draw[red,thick,<->,>=stealth] (10,1) -- (10,0) -- (0,0) node[midway,below] {\tiny $P_{4'}$} -- (0,10) -- (2,10);
        \draw[red,thick,<->,>=stealth] (10,2) -- (10,1) -- (1,1) node[midway,below] {\tiny $P_{5'}$} -- (1,8) -- (2,8);

        \draw[cyan,thick,<->,>=stealth] (10,2) -- (11,2) -- (11,11) -- (13,11);
        \draw[cyan,thick,<->,>=stealth] (10,1) -- (13,1) -- (13,13) -- (12,13);
        \draw[cyan,thick,<->,>=stealth] (10,9) -- (10,11) -- (11,11);
        \draw[cyan,thick,<->,>=stealth] (11,11) -- (11,12) -- (12,12) -- (12,13);
        \draw[cyan,thick,<->,>=stealth] (11,13) -- (12,13) -- (12,15) -- (1,15) -- (1,10);
        \draw[cyan,thick,<->,>=stealth] (10,11) -- (10,13) -- (11,13) -- (11,12);

        \draw[thick,<->,>=stealth] (2,10) -- (2,8) -- (4,8) node[midway,above] {\tiny $P_1$} -- (4,10);
        \draw[thick,<->,>=stealth] (2,8) -- (2,2) -- (5,2) node[midway,above] {\tiny $P_2$} -- (5,4);
        \draw[thick,<->,>=stealth] (7,7) -- (7,8) -- (9,8) -- (9,3) -- (5,3) node[midway,below] {\tiny $P_3$};
        \draw[thick,<->,>=stealth] (5,8) -- (7,8) node[midway,above] {\tiny $P_4$};
        \draw[thick,<->,>=stealth] (4,8) -- (5,8) -- (5,9) -- (7,9) node[midway,above] {\tiny $P_5$};

        \draw[cyan,thick,<->,>=stealth] (4,8) -- (4,5) -- (5,5) -- (5,7);
        \draw[cyan,thick,<->,>=stealth] (5,8) -- (5,7) -- (6,7);
        \draw[cyan,thick,<->,>=stealth] (3,8) -- (3,4) -- (5,4) -- (5,5);
        \draw[cyan,thick,<->,>=stealth] (6,7) -- (6,6) -- (7,6) -- (7,5) -- (5,5);
        \draw[cyan,thick,<->,>=stealth] (6,8) -- (6,7) -- (7,7);
        \draw[cyan,thick,<->,>=stealth] (7,6) -- (7,7) -- (8,7) -- (8,4) -- (5,4);
    \end{tikzpicture}
    \end{subfigure}
}
\caption{The end-eating graph $\mathcal{E}_3$ and a 3-bend CPG representation of it.}
\label{fig:endeating3}
\end{figure}
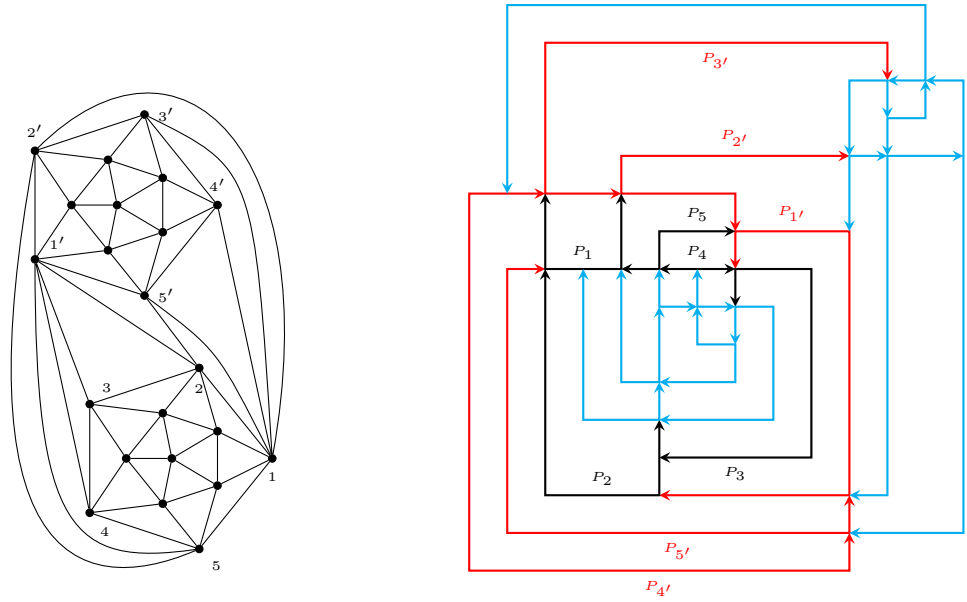

\begin{lemma}
\label{lem:Ei}
There exists no CPG representation of $\mathcal{E}_i$, for $i=2,3$, in which a path has a free endpoint.
\end{lemma}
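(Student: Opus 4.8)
The plan is to assume for contradiction that some CPG representation $\mathcal{R}$ of $\mathcal{E}_i$ has a free endpoint, so the number $f$ of free endpoints is at least $1$, and then to contradict the counting of Lemma~\ref{lem:freeendpoint}, which produces at least $|E(\mathcal{E}_i)| - 2|V(\mathcal{E}_i)| + f$ pairwise edge-disjoint triangles. The whole strategy is thus to pit this lower bound against an upper bound on the number of pairwise edge-disjoint triangles that $\mathcal{E}_i$ can contain. A preliminary point is the hypothesis ``no $4$-contact point'' of Lemma~\ref{lem:freeendpoint}: since the paths are interiorly disjoint and a grid-point has grid-degree $4$, at most one path can pass through a given point in its interior, so a $4$-contact point forces all four paths to \emph{end} there, whence the four corresponding vertices form a $K_4$.

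Because $\mathcal{E}_2$ does contain a $K_4$, for $i=2$ I would not try to exclude $4$-contact points but instead redo the double counting allowing them. Writing $e(x)$ for the number of edges a chosen map $t\colon E \to T(\mathcal{R})$ assigns to a contact point $x$, and $\mathrm{ep}(x)$ for the number of path-endpoints lying at $x$, one checks case by case over $2$-, $3$- and $4$-contact points that the excess satisfies $e(x) - \mathrm{ep}(x) \le \tfrac13 e(x)$. Summing and using $\sum_x e(x) = |E|$ and $\sum_x \mathrm{ep}(x) = 2|V| - f$ yields $|E| - 2|V| + f \le \tfrac13 |E|$, i.e. $f \le 2|V| - \tfrac23|E|$. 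As $\mathcal{E}_2$ has $|V| = 18$ and $|E| = 54$, this reads $f \le 0$, contradicting $f \ge 1$ and disposing of the case $i=2$.

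For $\mathcal{E}_3$ the same inequality only gives $f \le 2\cdot 22 - \tfrac23\cdot 60 = 4$, and this slack is intrinsic: $\mathcal{E}_3$ is a maximal planar graph ($|E| = 3|V| - 6 = 60$), for which $2|V| - \tfrac23|E|$ always equals $4$. Here I would instead first verify that $\mathcal{E}_3$ is $K_4$-free (each half is an icosahedron minus a vertex, hence $K_4$-free, and a short check shows the joining edges create no new $K_4$), so that $\mathcal{R}$ has no $4$-contact point and Lemma~\ref{lem:freeendpoint} applies verbatim: $\mathcal{E}_3$ must then contain at least $60 - 44 + f = 16 + f$ pairwise edge-disjoint triangles.

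The crux, and the step I expect to be the main obstacle, is therefore proving that $\mathcal{E}_3$ contains at most $16$ pairwise edge-disjoint triangles. A pure parity count falls one short: $\mathcal{E}_3$ has $16$ odd-degree vertices, each of which leaves an uncovered incident edge, forcing at least $\lceil 16/2\rceil = 8$ uncovered edges; together with $|E| \equiv 0 \pmod 3$ this only bounds the number of triangles by $17$. To remove the last unit I would exploit structure: since $\mathcal{E}_3$ is a $K_4$-free triangulation I expect it to be $4$-connected and hence free of separating triangles, so that \emph{every} triangle is a face and a family of pairwise edge-disjoint triangles is exactly an independent set in the (cubic, $40$-vertex) dual graph; the icosahedral origin of the two halves should then pin this independence number to $16$, each half behaving like the dodecahedral dual whose independence number is $8$. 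Failing a clean dual computation, the fallback is to rule out the single surviving case $f = 1$ by hand, showing that the forced pattern of exactly $9$ uncovered edges (a near-perfect matching on the odd-degree vertices whose complement would decompose into $17$ triangles) cannot occur in $\mathcal{E}_3$.
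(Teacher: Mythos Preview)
Your proposal is correct and, for $i=3$, follows exactly the paper's route: $K_4$-freeness together with Lemma~\ref{lem:freeendpoint} yields at least $60 - 44 + f = 16 + f$ pairwise edge-disjoint triangles, which contradicts the fact that $\mathcal{E}_3$ admits only $16$. The paper simply asserts this last fact without argument, so your discussion of how to verify it (via the dual of the triangulation, or by ruling out $f=1$ by hand) already goes beyond what the paper supplies.

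For $i=2$ you take a different and somewhat longer route than the paper. You correctly observe that $\mathcal{E}_2$ contains a $K_4$, so Lemma~\ref{lem:freeendpoint} does not apply directly, and you salvage the counting by proving the pointwise inequality $e(x) - \mathrm{ep}(x) \le \tfrac13 e(x)$ across $2$-, $3$- and $4$-contact points, arriving at $f \le 2|V| - \tfrac23|E| = 0$. The paper instead bypasses Lemma~\ref{lem:freeendpoint} entirely here and uses the weight function $w_u$ from Section~\ref{sec:prel}: since $\mathcal{E}_2$ is $6$-regular one has $|E| = 3|V|$, while a free endpoint on $P_u$ forces $w_u \le 3/2$, whence $|E| \le \sum_v w_v \le 3|V| - 3/2$, an immediate contradiction. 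The two arguments are numerically equivalent (both amount to bounding the edge-to-endpoint ratio at each contact point by $3/2$), but the paper's version is a one-line application of an inequality already recorded in the preliminaries, whereas yours re-derives the bound through a fresh case analysis.
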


\begin{proof}
The result for $i=2$ follows from the fact that $\mathcal{E}_2$ is 6-regular; indeed, if there exists a path $P_u$ in a CPG representation of $\mathcal{E}_2$ having a free endpoint, then $w_u \leq 3/2$ (see \Cref{sec:prel}) and thus $$|E(\mathcal{E}_2)| \leq \sum_{v \in V} w_v \leq 3 |V(\mathcal{E}_2)| - \frac{3}{2},$$ a contradiction.

The result for $i=3$ follows from the fact that $\mathcal{E}_3$ is $K_4$-free and has 22 vertices, 60 edges but only 16 pairwise edge-disjoint triangles: hence, by Lemma \ref{lem:freeendpoint}, no CPG representation of $\mathcal{E}_3$ contains free endpoints of paths. 
\end{proof}

\begin{lemma}
\label{lem:E1}
There exists no 1-bend CPG representation of $\mathcal{E}_1$ in which the path $P_v$ has a free endpoint.
\end{lemma}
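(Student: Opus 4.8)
The plan is to assume, for contradiction, that some $1$-bend CPG representation $\mathcal{R}$ of $\mathcal{E}_1$ leaves an endpoint of $P_v$ free, and to play the two halves of $\mathcal{E}_1$ against each other. Recall that $\mathcal{E}_1$ consists of two copies of the join $K_3 \vee \overline{K_3}$ glued along $v$: on the right, $\{v,s,t\}$ is a triangle and each of the pairwise non-adjacent vertices $a,b,c$ is adjacent to all of $v,s,t$, and the left copy is symmetric on $\{v,s',t',a',b',c'\}$. I would prove a local claim—that in any $1$-bend representation the right copy forces $P_v$ to have an endpoint lying on a right-side path—and symmetrically for the left copy. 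A single endpoint $q$ of $P_v$ cannot host both a right and a left contact, since any two paths meeting $P_v$ at $q$ also meet each other at $q$ and are therefore adjacent, while no right vertex is adjacent to a left vertex. Hence the two local claims demand two \emph{distinct} non-free endpoints of $P_v$; as a free endpoint leaves at most one, this yields the contradiction.

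To establish the local claim I would first dispose of the easy case directly: if $P_v$ ends at any right contact point we are done, so assume instead that every right neighbour $s,t,a,b,c$ ends on $\mathring{P_v}$ while $P_v$ merely passes through all of its right contacts. Note that at a $4$-contact point all four incident paths must end, since a path passing through a grid-point occupies two of its four incident grid-edges and leaves room for at most two further endpoints; under the standing assumption no $K_4$ of the form $\{v,s,t,x\}$, with $x\in\{a,b,c\}$, is therefore realized at a single point. The task is then to show that the triangle $\{v,s,t\}$ together with its three pairwise non-touching common neighbours $P_a,P_b,P_c$ cannot all be realized with one bend each while $P_s$ and $P_t$ emanate from $\mathring{P_v}$ into the two half-planes bounded by $P_v$. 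For this I would combine \Cref{obs:grid-points} with a Jordan-curve argument in the spirit of \Cref{bendpoint}: the closed curves formed by pairs among $P_v,P_s,P_t$ separate the plane, and routing three mutually non-touching single-bend paths so that each reaches both $P_s$ and $P_t$ over-constrains the cyclic order of their endpoints along $P_v,P_s,P_t$ and forces one of the forbidden touch-patterns of \Cref{obs:grid-points}.

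The main obstacle is precisely this local claim, and it must be geometric rather than combinatorial. A pure endpoint count does not suffice: one can abstractly cover all twelve edges of a right copy using only $3$-contact points while $P_v$ merely passes through—placing two common neighbours through a triangle $\{s,t,x\}$ and parking the leftover endpoint as a free end of one of $a,b,c$—so nothing on the level of counting endpoints (as in \Cref{lem:freeendpoint}) rules out a free endpoint of $P_v$. Consequently the argument has to invoke the $1$-bend restriction and the planarity of the contact structure in an essential way, and I expect the bulk of the work to be a careful case analysis, organised by how $P_a,P_b,P_c$ attach to the two sides of $P_v$ and to $P_s,P_t$, excluding each abstractly-feasible configuration through \Cref{obs:grid-points}.
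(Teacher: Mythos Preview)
Your high-level reduction matches the paper exactly: $\mathcal{E}_1$ is two copies of the graph $G(v)$ glued at $v$, so it suffices to show that in any $1$-bend representation of a single copy the path $P_v$ has at most one free endpoint. Your observation that a single endpoint of $P_v$ cannot simultaneously host a right-side and a left-side contact (since the two paths would then touch each other) is correct and is precisely what the paper's terse ``it suffices'' relies on.

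Where you diverge is in the local claim, and here you misjudge the role of \Cref{lem:freeendpoint}. You are right that it does not yield a contradiction by itself, but the paper does not use it that way. Assuming both endpoints of $P_v$ are free in $G(v)$, the paper applies \Cref{lem:freeendpoint} with $|E|=12$, $|V|=6$, $f\geq 2$ to obtain two $3$-contact points realizing \emph{edge-disjoint} triangles. Since every triangle of $G(v)$ shares an edge with $(v,s,t)$, this pins the two contact points down, up to the symmetry among $a,b,c$, to one concrete situation: one point $p$ lies in $P_v\cap P_s\cap P_a$, and the other lies either in $P_v\cap P_t\cap P_b$ or in $P_s\cap P_t\cap P_b$. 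The subsequent geometric case analysis then splits only on whether $p$ is a bend-point of $P_v$, and each branch is dispatched in a few lines via \Cref{obs:grid-points}.

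Without this combinatorial narrowing, the case analysis you propose---organised by how five paths attach along $\mathring{P_v}$---is considerably wider, and your sketch already contains an unjustified step: there is no reason $P_s$ and $P_t$ should emanate into opposite ``half-planes bounded by $P_v$'' (a single-bend path bounds no half-planes, and in the paper's main case $P_s$ and $P_a$ share the \emph{same} contact point on $P_v$, with $P_t$ elsewhere). Your Jordan-curve intuition is sound and could likely be pushed through, but the cleanest route is to let \Cref{lem:freeendpoint} identify the two relevant $3$-contact points first and then do the short geometric argument around them.
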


\begin{proof}
Notice that $\mathcal{E}_1$ consists of two copies of the graph $G(v)$ depicted in Figure \ref{fig:Gv}, where the copies of $v$ have been identified. Hence, it suffices to prove that in any 1-bend CPG representation of $G(v)$, the path $P_v$ has at most one free endpoint.

\begin{figure}[h]
\centering
\scalebox{1}
{
    \begin{subfigure}{.5\textwidth}
    \centering
    \begin{tikzpicture}[scale=1]
    
        \node[circ] (a) at (0,0) [label=below left:{\footnotesize $a$}] {};
        \node[circ] (b) at (2,0) [label=below:{\footnotesize $b$}] {};
        \node[circ] (c) at (4,0) [label=below right:{\footnotesize $c$}] {};
        \node[blue,circ] (v) at (0,3) [label=above left:{\color{blue}\footnotesize $v$}] {};
        \node[circ] (s) at (2,3) [label=above:{\footnotesize $s$}] {};
        \node[circ] (t) at (4,3) [label=above right:{\footnotesize $t$}] {};

        \draw[-]
        (a) -- (v) -- (b) -- (s) -- (a) -- (t) -- (b)
        (v) -- (c) -- (t)
        (c) -- (s);

        \draw (v) .. controls (0.7,3.4) and (1.3,3.4) .. (s);
        \draw (s) .. controls (2.7,3.4) and (3.3,3.4) .. (t);
        \draw (t) .. controls (3,4) and (1,4) .. (v);

    \end{tikzpicture}
    \end{subfigure}

    \begin{subfigure}{.5\textwidth}
    \centering
    \begin{tikzpicture}[scale=1]

        \draw[blue,thick,<->,>=stealth] (0,2) -- (3,2) node[near start,above left] {\footnotesize $P_v$};
        \draw[thick,<->,>=stealth] (2,2) -- (2,0) -- (4,0) node[near start,above] {\footnotesize $P_s$};
        \draw[thick,<->,>=stealth] (2,1) -- (3,1) -- (3,3) node[near start,below right] {\footnotesize $P_t$};
        \draw[thick,<->,>=stealth] (1,2) -- (1,1) -- (2,1) node[near start,below left] {\footnotesize $P_a$};
        \draw[thick,<->,>=stealth] (4,0) -- (4,2) -- (3,2) node[near start,above right] {\footnotesize $P_b$};
        \draw[thick,<->,>=stealth] (2,2) -- (2,3) -- (3,3) node[midway,above left] {\footnotesize $P_c$};

    \end{tikzpicture}
    \end{subfigure}
}
\caption{The graph $G(v)$ and a 1-bend CPG representation of it where $P_v$ only has one free endpoint}
\label{fig:Gv}
\end{figure}
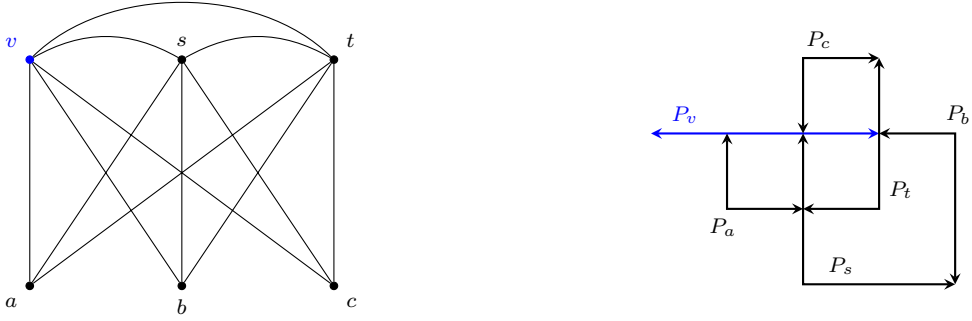

Let $\mathcal{R}$ be a 1-bend CPG representation of $G(v)$. Suppose, to the contrary, that both endpoints of $P_v$ belong to no other path in $\mathcal{R}$. Then, every other path touches the interior of $P_v$ and since any clique of size 4 contains $v$ and both endpoints of $P_v$ are free, $\mathcal{R}$ contains no 4-contact point. Now $G(v)$ has 6 vertices, 12 edges and at least 2 free endpoints and so, by \Cref{lem:freeendpoint}, $\mathcal{R}$ must contain at least two 3-contact points $p$ and $q$ representing edge-disjoint triangles in $G(v)$, which we denote by $T_p$ and $T_q$, respectively. Note that since every triangle of $G(v)$ shares an edge with the triangle $(v,s,t)$, neither $T_p$ nor $T_q$ is $(v,s,t)$. But then, both $T_p$ and $T_q$ share an edge with $(v,s,t)$ and so $T_p$ and $T_q$ have exactly one common vertex, which is either $v$, $s$ or $t$. If it is $v$, then we may assume without loss of generality that $T_p$ is $(v,s,a)$ and $T_q$ is $(v,t,b)$, as $a$, $b$ and $c$ have a symmetric role. Otherwise, it is $s$ or $t$, and we may then assume without loss of generality that $T_p$ is $(v,s,a)$ and $T_q$ is $(s,t,b)$. Thus, the following holds.

\begin{fact}
\label{a:triangles}
There exist two 3-contact points $p$ and $q$ in $\mathcal{R}$ such that $p \in P_v \cap P_s \cap P_a$ and one of the following holds:
\begin{itemize}
    \item $q \in P_v \cap P_t \cap P_b$;
    \item $q \in P_s \cap P_t \cap P_b$.
\end{itemize}
\end{fact}

As a consequence, we now show the following.

\begin{Claim}
There exists a path $P \in \{P_b,P_c\}$ such that $P$ touches the interior of $P_s$ and contains an endpoint of $P_t$.
\label{cl:path_P}
\end{Claim}

\begin{cproof}
Since $P_v$ has two free endpoints, $p$ must be an interior point of $P_v$ and an endpoint of $P_s$ and $P_a$. But then $P_s$ has only one endpoint left and so $P_s$ cannot touch both the interior of $P_b$ and the interior of $P_c$. It follows that one path $P \in \{P_b,P_c\}$ must touch the interior of $P_s$ and since $P$ touches the interior of $P_v$, $P$ cannot touch the interior of $P_t$.
\end{cproof}

\bigskip

Back to the proof of \Cref{lem:E1}, we now distinguish two cases, depending on whether $p$ is a bend-point of $P_v$ or not.\\

\textit{Case 1. $p$ is a bend-point of $P_v$.} By possibly rotating and reflecting, we may assume without loss of generality that $P_v$ lies on the left and below $p$, with $P_s$ coming from above and $P_a$ coming from the right. Figure \ref{fig:bend_cases} depicts the four possible ways for $P_t$ to touch the interior of $P_v$. In subcase 1, $P_t$ and $P_a$ necessarily touch as depicted in Figure \ref{fig:bend_case_1}. By Claim \ref{cl:path_P}, there exists a path $P$ touching an endpoint of $P_t$ and the interior of $P_s$. In order to touch the interior of $P_s$, $P$ must belong to the green area depicted in Figure \ref{fig:bend_case_1}; but then, $P$ cannot touch an endpoint of $P_t$, a contradiction. In subcase 2 (resp. subcase 3), an observation similar to Observation \ref{obs:grid-points} shows that $P_t$ and $P_s$ (resp. $P_t$ and $P_a$) cannot touch, a contradiction. Finally, in subcase~4, $P_t$ and $P_s$ necessarily touch as depicted in Figure \ref{fig:bend_case_4}. Thus, $P_s \cap P_t$ is reduced to one point $i_{s,t}$ and $P_v \cap P_t$ is reduced to one point $i_{v,t}$ (see Figure \ref{fig:bend_case_4}). Then, by Fact \ref{a:triangles}, either $P_b$ contains $i_{s,t}$ and touches $P_v$, or $P_b$ contains $i_{v,t}$ and touches $P_s$. In both cases, $P_b$ needs to have at least two bends, a contradiction.\\

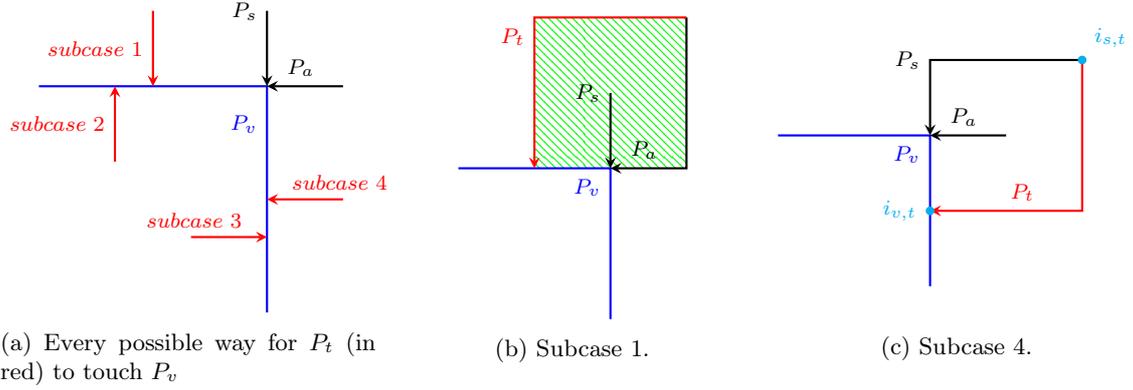
\begin{figure}[h]
\centering
\scalebox{1}
{
    \begin{subfigure}{.3\textwidth}
    \centering
    \begin{tikzpicture}[scale=1]
    
        \draw[blue,thick,-,>=stealth] (0,3) -- (3,3) -- (3,0) node[near start,above left] {\footnotesize $P_v$};
        \draw[thick,->,>=stealth] (3,4) -- (3,3) node[near start,above left] {\footnotesize $P_s$};
        \draw[thick,->,>=stealth] (4,3) -- (3,3) node[near start,above left] {\footnotesize $P_a$};
        \draw[red,thick,->,>=stealth] (1.5,4) -- (1.5,3) node[midway, left] {\footnotesize $subcase~1$};
        \draw[red,thick,->,>=stealth] (1,2) -- (1,3) node[midway,left] {\footnotesize $subcase~2$};
        \draw[red,thick,->,>=stealth] (2,1) -- (3,1) node[pos=.8,above left] {\footnotesize $subcase~3$};
        \draw[red,thick,->,>=stealth] (4,1.5) -- (3,1.5) node[pos=.8,above right] {\footnotesize $subcase~4$};

    \end{tikzpicture}
    \caption{Every possible way for $P_t$ (in red) to touch $P_v$}
    \label{fig:bend_cases}
    \end{subfigure}

    \begin{subfigure}{.3\textwidth}
    \centering
    \begin{tikzpicture}[scale=1]

        \fill [pattern=north west lines, pattern color=green] (2,3) rectangle (4,5);    
        \draw[blue,thick,-,>=stealth] (1,3) -- (3,3) -- (3,1) node[near start,above left] {\footnotesize $P_v$};
        \draw[thick,->,>=stealth] (3,4) -- (3,3) node[near start,above left] {\footnotesize $P_s$};
        \draw[thick,->,>=stealth] (4,5) -- (4,3) -- (3,3) node[near start,above left] {\footnotesize $P_a$};
        \draw[red,thick,->,>=stealth] (4,5) -- (2,5) -- (2,3) node[near start,above left] {\footnotesize $P_t$};
    
    \end{tikzpicture}
    \caption{Subcase 1.}
    \label{fig:bend_case_1}
    \end{subfigure}

    \begin{subfigure}{.3\textwidth}
    \centering
    \begin{tikzpicture}[scale=1]
    
        \draw[blue,thick,-,>=stealth] (1,3) -- (3,3) -- (3,1) node[near start,above left] {\footnotesize $P_v$};
        \draw[thick,->,>=stealth] (5,4) -- (3,4) -- (3,3) node[near start,above left] {\footnotesize $P_s$};
        \draw[thick,->,>=stealth] (4,3) -- (3,3) node[near start,above left] {\footnotesize $P_a$};
        \draw[red,thick,->,>=stealth] (5,4) -- (5,2) -- (3,2) node[near start,above left] {\footnotesize $P_t$};
        
        \node[cyan,circ] (vt) at (3,2) [label=left:{\color{cyan}\footnotesize $i_{v,t}$}] {};
        \node[cyan,circ] (st) at (5,4) [label=above right:{\color{cyan}\footnotesize $i_{s,t}$}] {};

        \node[invisible] at (3,.6) {};

    \end{tikzpicture}
    \caption{Subcase 4.}
    \label{fig:bend_case_4}
    \end{subfigure}
}
\caption{Case 1. $p$ is a bend-point of $P_v$.}
\label{fig:bend}
\end{figure}

\textit{Case 2. $p$ is not a bend-point of $P_v$.} By possibly rotating and reflecting, we may assume without loss of generality that $P_v$ goes through $p$ horizontally, with $P_s$ coming from above and $P_a$ from below (see Figures \ref{fig:up} and \ref{fig:down}). If $P_t$ touches the horizontal segment of $P_v$, it then follows from Observation \ref{obs:grid-points} that either $P_t$ and $P_s$ cannot touch, or $P_t$ and $P_a$ cannot touch, a contradiction. Thus, $P_v$ must have a bend and $P_t$ must touch a point of the vertical segment of $P_v$ distinct from the bend-point by Observation \ref{obs:grid-points}. Let us assume, without loss of generality, that the bend-point of $P_v$ is on the right extremity of its horizontal segment.

Suppose first that the vertical segment of $P_v$ is above its horizontal segment (see Figure \ref{fig:up}). If $P_t$ touches $P_v$ from the left, then $P_t$ and $P_a$ necessarily touch as depicted in Figure \ref{fig:up_left}. By Claim \ref{cl:path_P}, there is a path $P$ touching an endpoint of $P_t$ and the interior of $P_s$. In order to touch the interior of $P_s$, $P$ must belong to the green area depicted in Figure \ref{fig:up_left}; but then $P$ cannot touch an endpoint of $P_t$, a contradiction. Otherwise, $P_t$ touches $P_v$ from the right and so $P_t$ and $P_a$ necessarily touch as depicted in Figure \ref{fig:up_right}. Then, the only way for $P_s$ and $P_t$ to touch is depicted in Figure \ref{fig:up_right}; but then, $P_s \cap P_t$ and $P_v \cap P_t$ are reduced to one same point $i$ as shown in Figure \ref{fig:up_right} and so, by Fact \ref{a:triangles}, $P_b$ contains $i$, a contradiction.

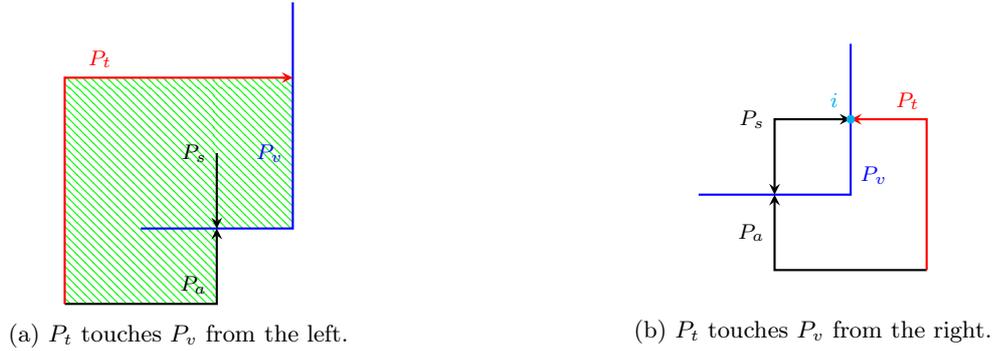
\begin{figure}[htb]
\centering
\scalebox{1}
{
    \begin{subfigure}{.5\textwidth}
    \centering
    \begin{tikzpicture}[scale=1]

	    \fill [pattern=north west lines, pattern color=green] (-1,-1) rectangle (1,2);
	    \fill [pattern=north west lines, pattern color=green] (1,0) rectangle (2,2);
    
        \draw[blue,thick,-,>=stealth] (0,0) -- (2,0) -- (2,3) node[near start,above left] {\footnotesize $P_v$};
        \draw[thick,->,>=stealth] (1,1) -- (1,0) node[near start,above left] {\footnotesize $P_s$};
        \draw[thick,->,>=stealth] (-1,-1) -- (1,-1) -- (1,0) node[near start,left] {\footnotesize $P_a$};
        \draw[red,thick,->,>=stealth] (-1,-1) -- (-1,2) -- (2,2) node[near start,above left] {\footnotesize $P_t$};

    \end{tikzpicture}
    \caption{$P_t$ touches $P_v$ from the left.}
    \label{fig:up_left}
    \end{subfigure}

    \begin{subfigure}{.5\textwidth}
    \centering
    \begin{tikzpicture}[scale=1]
    
        \draw[blue,thick,-,>=stealth] (0,0) -- (2,0) -- (2,2) node[near start,below right] {\footnotesize $P_v$};
        \draw[thick,<->,>=stealth] (2,1) -- (1,1) -- (1,0) node[near start,above left] {\footnotesize $P_s$};
        \draw[thick,->,>=stealth] (3,-1) -- (1,-1) -- (1,0) node[near start,above left] {\footnotesize $P_a$};
        \draw[red,thick,->,>=stealth] (3,-1) -- (3,1) -- (2,1) node[near start,above] {\footnotesize $P_t$};
        
        \node[cyan,circ] (i) at (2,1) [label=above left:{\color{cyan}\footnotesize $i$}] {};

        \node[invisible] at (2,-1.4) {};
        \node[invisible] at (2,2.5) {};

    \end{tikzpicture}
    \caption{$P_t$ touches $P_v$ from the right.}
    \label{fig:up_right}
    \end{subfigure}
}
\caption{$p$ is not a bend-point of $P_v$ and the vertical segment of $P_v$ is above its horizontal segment.}
\label{fig:up}
\end{figure}

Finally, suppose that the vertical segment of $P_v$ lies below its horizontal segment (see Figure \ref{fig:down}). If $P_t$ touches $P_v$ from the left, then $P_t$ and $P_s$ necessarily touch as depicted in Figure \ref{fig:down_left}. But then, $P_s \cap P_t$ is reduced to one point $i_{s,t}$ and $P_v \cap P_t$ is reduced to one point $i_{v,t}$ (see Figure \ref{fig:down_left}) and so, by Fact \ref{a:triangles}, either $P_b$ contains $i_{s,t}$ and touches $P_v$, or $P_b$ contains $i_{v,t}$ and touches $P_s$. In both cases $P_b$ needs to have at least two bends, a contradiction. Otherwise, $P_t$ touches $P_v$ from the right and so $P_t$ and $P_s$ necessarily touch as depicted in Figure \ref{fig:down_right}. But then, $P_s \cap P_t$ is reduced to one point $i_{s,t}$ and $P_v \cap P_t$ is reduced to one point $i_{v,t}$ (see Figure \ref{fig:down_right}) and so, by Fact \ref{a:triangles}, either $i_{s,t}$ belongs to $P_b$ and $P_b$ touches $P_v$, or $i_{v,t}$ belongs to $P_b$ and $P_b$ touches $P_s$. In both cases $P_b$ needs to have at least two bends, a contradiction.

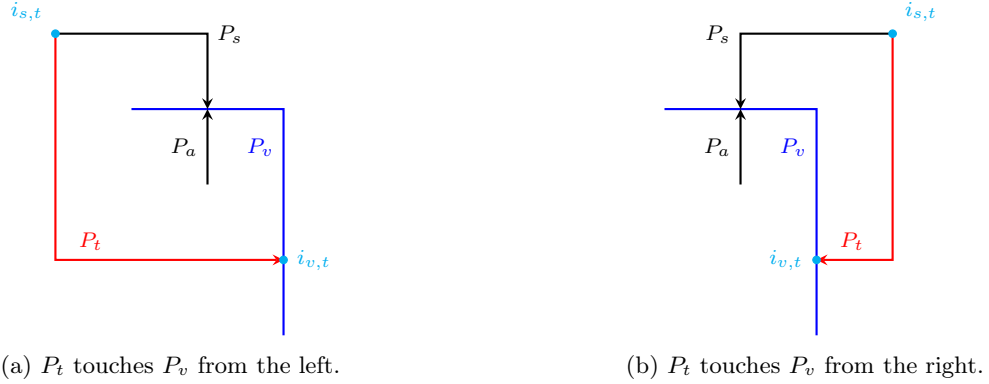
\begin{figure}[htb]
\centering
\scalebox{1}
{
    \begin{subfigure}{.5\textwidth}
    \centering
    \begin{tikzpicture}[scale=1]
    
        \draw[blue,thick,-,>=stealth] (0,0) -- (2,0) -- (2,-3) node[near start,above left] {\footnotesize $P_v$};
        \draw[thick,->,>=stealth] (-1,1) -- (1,1) -- (1,0) node[near start,above right] {\footnotesize $P_s$};
        \draw[thick,->,>=stealth] (1,-1) -- (1,0) node[near start,above left] {\footnotesize $P_a$};
        \draw[red,thick,->,>=stealth] (-1,1) -- (-1,-2) -- (2,-2) node[near start,above left] {\footnotesize $P_t$};
        
        \node[cyan,circ] (vt) at (2,-2) [label=right:{\color{cyan}\footnotesize $i_{v,t}$}] {};
        \node[cyan,circ] (st) at (-1,1) [label=above left:{\color{cyan}\footnotesize $i_{s,t}$}] {};

    \end{tikzpicture}
    \caption{$P_t$ touches $P_v$ from the left.}
    \label{fig:down_left}
    \end{subfigure}

    \begin{subfigure}{.5\textwidth}
    \centering
    \begin{tikzpicture}[scale=1]
    
        \draw[blue,thick,-,>=stealth] (0,0) -- (2,0) -- (2,-3) node[near start,above left] {\footnotesize $P_v$};
        \draw[thick,->,>=stealth] (3,1) -- (1,1) -- (1,0) node[near start,above left] {\footnotesize $P_s$};
        \draw[thick,->,>=stealth] (1,-1) -- (1,0) node[near start,above left] {\footnotesize $P_a$};
        \draw[red,thick,->,>=stealth] (3,1) -- (3,-2) -- (2,-2) node[near start,above left] {\footnotesize $P_t$};
        
        \node[cyan,circ] (vt) at (2,-2) [label=left:{\color{cyan}\footnotesize $i_{v,t}$}] {};
        \node[cyan,circ] (st) at (3,1) [label=above right:{\color{cyan}\footnotesize $i_{s,t}$}] {};

    \end{tikzpicture}
    \caption{$P_t$ touches $P_v$ from the right.}
    \label{fig:down_right}
    \end{subfigure}
}
\caption{$p$ is not a bend-point of $P_v$ and the vertical segment of $P_v$ is below its horizontal segment.}
\label{fig:down}
\end{figure}

As in both Case 1 and Case 2 we obtained a contradiction, $P_v$ cannot have two free endpoints in $\mathcal{R}$.
\end{proof}

\subsection{The reduction}
\label{subsec:reduction}

Given an instance $\Phi$ of \textsc{Planar Exactly 3-Bounded 3-Sat}, with variable set $V$ and clause set $C$, we construct a graph $G_1(\Phi)$ (resp. $G_2(\Phi)$; $G_3(\Phi)$) which is $B_1$-CPG (resp. $B_2$-CPG; planar $B_k$-CPG with $k \geq 3$) if and only if $\Phi$ is satisfiable. We may assume that each variable has at most 2 positive occurrences and at most 2 negated occurrences (recall that each variable appears in 3 clauses): indeed, if some variable has only positive (resp. negated) occurrences, we may reduce the formula by setting this variable to TRUE (resp. FALSE). The graphs $G_1(\Phi)$, $G_2(\Phi)$ and $G_3(\Phi)$ differ only in the end-eating graph used and are constructed as follows.
Starting from the planar bipartite graph $H$, with $V(H) = V \cup C$ and $E(H) = \{xc : x \in c \text{ or } \neg x \in c\}$, we replace each variable vertex $x \in V$ with the \textit{variable gadget $V_x$} depicted in Figure \ref{fig:variablegadget}, and each clause vertex $c \in C$ with the \textit{clause gadget $C_c$} depicted in Figure \ref{fig:clausegadget}. In every subsequent figure, the double-circles schematically represent the copy of the end-eating graph $\mathcal{E}_i$ for $G_i(\Phi)$ ($i=1,2,3$) connected to vertices. More precisely, a vertex $t \in V(G_i(\Phi))$ with $i=2,3$ (resp. $t \in V(G_1(\Phi))$) is incident to a double-circle if and only if $t$ is adjacent to a vertex $v$ of $\mathcal{E}_i$ (resp. to \textit{the} vertex $v$ of $\mathcal{E}_1$). It follows from Lemmas \ref{lem:Ei} and \ref{lem:E1} that in any CPG representation of $G_i(\Phi)$, one endpoint of $P_t$ belongs to $P_v$. In the following, the pairs of vertices $(b,c)$ and $(d,a)$ (resp. $(a,b)$ and $(c,d)$) in $V_x$ are called \textit{positive terminals} (resp. \textit{negative terminals}); the pairs of vertices $(p_i,q_i)$, for $i = 1,2,3$, in $C_c$ are called \textit{terminals}; furthermore, we denote by $\overline{V_x}$ the graph $V_x$ without the end-eating graphs.

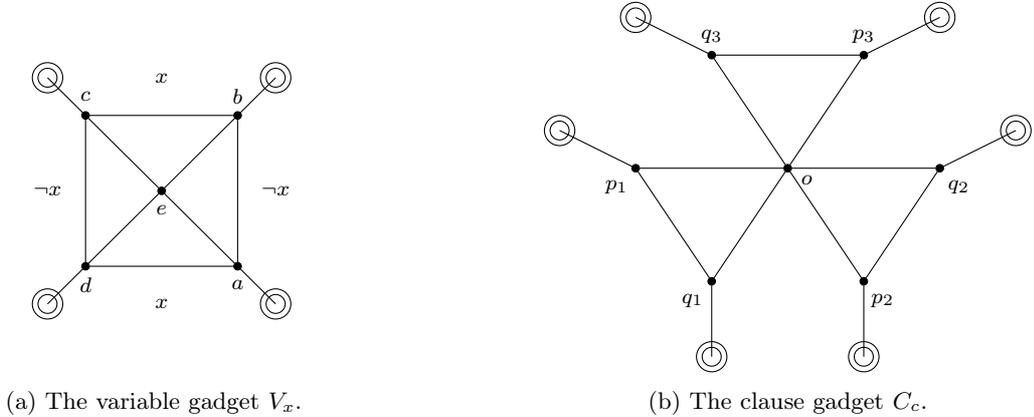
\begin{figure}[h]
\centering
\scalebox{1}
{

\begin{subfigure}{0.5\textwidth}
\centering
\begin{tikzpicture}[scale=0.5,transform shape]

\path pic (v1) {variablebis};
\node [scale=2] at (6,1.5) {\footnotesize $a$};
\node [scale=2] at (6,6.5) {\footnotesize $b$};
\node [scale=2] at (2,6.5) {\footnotesize $c$};
\node [scale=2] at (2,1.5) {\footnotesize $d$};
\node [scale=2] at (4,3.5) {\footnotesize $e$};

\node [scale=2] at (4,1) {\footnotesize $x$};
\node [scale=2] at (4,7) {\footnotesize $x$};
\node [scale=2] at (1,4) {\footnotesize $\neg x$};
\node [scale=2] at (7,4) {\footnotesize $\neg x$};

\node[invisible] at (0,9) {};
\node[invisible] at (0,-.8) {};
\end{tikzpicture}
\caption{The variable gadget $V_x$.}
\label{fig:variablegadget}
\end{subfigure}

\begin{subfigure}{0.5\textwidth}
\centering
\begin{tikzpicture}[scale=0.5, transform shape]

\path pic (clause1) {clausebis};
\node [scale=2] at (6.5,2.7) {\footnotesize $o$};
\node [scale=2] at (1.5,2.5) {\footnotesize $p_1$};
\node [scale=2] at (3.5,-0.5) {\footnotesize $q_1$};
\node [scale=2] at (8.5,-0.5) {\footnotesize $p_2$};
\node [scale=2] at (10.5,2.5) {\footnotesize $q_2$};
\node [scale=2] at (8,6.5) {\footnotesize $p_3$};
\node [scale=2] at (4,6.5) {\footnotesize $q_3$};

\end{tikzpicture}
\caption{The clause gadget $C_c$.}
\label{fig:clausegadget}
\end{subfigure}

}
\caption{The gadgets for the reduction (the double-circles schematically represent the copie of the end-eating gadget $\mathcal{E}_i$ for $G_i(\Phi)$ connected to the vertices).}
\label{fig:gadgets}
\end{figure}

We next replace each edge $xc$ of $E(H)$ by a \textit{connector} $L_{x,c}$ (see Figure \ref{fig:connector1}): the vertex $t_{xc}$ is linked to a positive or negative terminal of $V_x$ depending on whether $x$ occurs positive or negated in $c$; and the vertex $t_{cx}$ is linked to a terminal of $C_c$. These transformations are done in such a way that no (positive, negative) terminal is used twice. Finally, if a clause $c$ has only two literals, we link the last free terminal to a \textit{false terminator} (see Figure \ref{fig:falseterminator}).

\begin{figure}[h]
\centering
\scalebox{1}
{
    \begin{subfigure}{.6\textwidth}
    \centering
    \begin{tikzpicture}[scale=0.3]
    
        \node[circ] (txc) at (-5,0) [label=below:{\footnotesize $t_{xc}$}] {};
        \node[circ] (tcx) at (5,0) [label=below:{\footnotesize $t_{cx}$}] {};
        \node[circ] (x1) at (-8,2) [] {};
        \node[circ] (x2) at (-8,-2) [] {};
        \node[circ] (c1) at (8,2) [] {};
        \node[circ] (c2) at (8,-2) [] {};
        \node[circ] (t1) at (-2.5,0) [] {};
        \node[circ] (t2) at (-1.25,-.5) [] {};
        \node[circ] (t3) at (0,0) [] {};
        \node[circ] (t4) at (1.25,-.5) [] {};
        \node[circ] (t5) at (2.5,0) [] {};
        \draw[thick]
            (txc) -- (x1) -- (x2) -- (txc) -- (t1) -- (t2) -- (t3) -- (t4) -- (t5) -- (tcx) -- (c1) -- (c2) -- (tcx)
            (-9,2.5) -- (x1) -- (-9,1.5)
            (-9,-2.5) -- (x2) -- (-9,-1.5)
            (c1) -- (9,1.5)
            (c2) -- (9,-1.5)
            (t1) -- (-2.5,2)
            (t2) -- (-1.25,-2.5)
            (t3) -- (0,2)
            (t4) -- (1.25,-2.5)
            (t5) -- (2.5,2);
        \draw[dashed]
            (-9,2.5) -- (-10,3) (-10,1) -- (-9,1.5)
            (-9,-2.5) -- (-10,-3) (-10,-1) -- (-9,-1.5)
            (10,1) -- (9,1.5)
            (10,-1) -- (9,-1.5);
        \draw[dotted,thick]
            (-8,-4) arc (-30:30:8)
            (8,4) arc (150:210:8);
	    \draw[]
	        (-2.5,2) circle (0.5) circle (0.8)
	        (-1.25,-2.5) circle (0.5) circle (0.8)
	        (0,2) circle (0.5) circle (0.8)
	        (1.25,-2.5) circle (0.5) circle (0.8)
	        (2.5,2) circle (0.5) circle (0.8);
        \node at (-12,0) {$V_x$};
        \node at (12,0) {$C_c$};
    
    \end{tikzpicture}
    \caption{Connector between $V_x$ and $C_c$.}
    \label{fig:connector1}
    \end{subfigure}

    \begin{subfigure}{.4\textwidth}
    \centering
    \begin{tikzpicture}[scale=0.3]
    
        \node[circ] (tc) at (0,0) [label=below:{\footnotesize $t_f$}] {};
        \node[circ] (c1) at (-3,2) [] {};
        \node[circ] (c2) at (-3,-2) [] {};
        \draw[thick]
            (tc) -- (c1) -- (c2) -- (tc) -- (3,0)
            (c1) -- (-4,1.5)
            (c2) -- (-4,-1.5)
            (3,0) circle (0.5) circle (0.8);
        \draw[dashed]
            (-5,1) -- (-4,1.5)
            (-5,-1) -- (-4,-1.5);
        \draw[dotted,thick]
            (-3,-4) arc (-30:30:8);
        \node at (-7,0) {$C_c$};
    
    \end{tikzpicture}
    \caption{False terminator.}
    \label{fig:falseterminator}
    \end{subfigure}
}
\caption{The connector and false terminator.}
\end{figure}
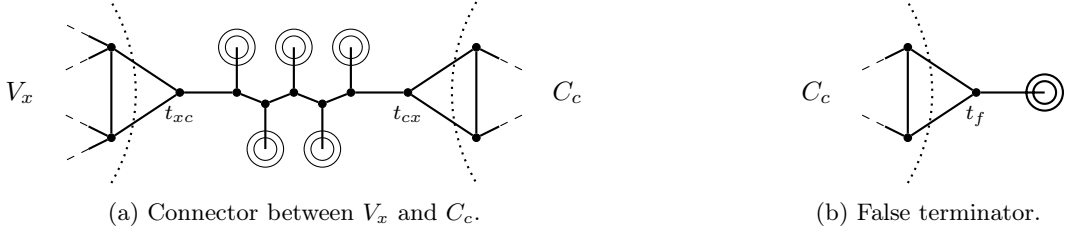

We now define the \textit{$\mathcal{R}$-value} of a (positive, negative) terminal given a CPG representation $\mathcal{R}$ of $G_i(\Phi)$, for $i=1,2,3$, as follows. A terminal $(y,z)$ belonging to a clause gadget has an $\mathcal{R}$-value of $0$ if there is a grid-point $r$ which is an endpoint of both $P_y$ and $P_z$ and an interior point of $P_o$; otherwise, it has an $\mathcal{R}$-value of 1. A (positive, negative) terminal $(y,z)$ belonging to a variable gadget has an $\mathcal{R}$-value of 0 if there exists no grid-point belonging to $(P_y \cap P_z) \setminus P_e$; otherwise, it has an $\mathcal{R}$-value of $1$. We next prove the following claims.

\begin{Claim}
\label{cl:clause_terminals}
Let $\mathcal{R}$ be a CPG (resp. 1-bend CPG) representation of $G_i(\Phi)$ with $i=2,3$ (resp. $G_1(\Phi)$). Then in every clause gadget $C_c$, at least one terminal has an $\mathcal{R}$-value of $0$.
\end{Claim}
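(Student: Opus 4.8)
The plan is to fix one clause gadget $C_c$ and argue by a short endpoint‑counting argument on the central path $P_o$. Recall that in $\overline{C_c}$ the vertex $o$ is adjacent to all six terminal vertices $p_1,q_1,\dots,p_3,q_3$, that each pair $(o,p_i,q_i)$ spans a triangle, and that terminals coming from different triangles are pairwise non‑adjacent. The first ingredient is the end‑eating machinery: each $p_i$ and each $q_i$ is adjacent to a special vertex $v$ of a copy of $\mathcal{E}_i$, so by \Cref{lem:Ei} (for $i=2,3$) or \Cref{lem:E1} (for $i=1$) one endpoint of every path $P_{p_i}$ and $P_{q_i}$ lies on the corresponding path $P_v$; in particular each of these six paths has at most one endpoint \emph{available} for the clause, its other endpoint being buried in the interior of an end‑eating path. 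Since $o$ carries no end‑eating gadget, $P_o$ has both of its endpoints free. The whole claim will follow once we show that assuming all three terminals have $\mathcal{R}$-value $1$ forces each triangle to consume a distinct endpoint of $P_o$, which is impossible as $P_o$ has only two endpoints.

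The core step is the following subclaim: \emph{if the terminal $(p_i,q_i)$ has $\mathcal{R}$-value $1$, then some endpoint of $P_o$ lies on $P_{p_i}\cup P_{q_i}$.} To prove it I would suppose the contrary. Each of the contacts $o$--$p_i$ and $o$--$q_i$ occurs at a grid-point which, by interior‑disjointness, is an endpoint of $P_o$ or of the terminal path; since by assumption no endpoint of $P_o$ meets $P_{p_i}$ or $P_{q_i}$, and since the buried endpoints sit in $\mathring{P_v}$ (hence not on $P_o$, again by interior‑disjointness), these contacts must be realized by the \emph{available} endpoints of $P_{p_i}$ and $P_{q_i}$ landing on $\mathring{P_o}$, say at points $r_1$ and $r_2$. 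If $r_1=r_2$ this common point is an endpoint of both $P_{p_i}$ and $P_{q_i}$ and an interior point of $P_o$, i.e.\ $\mathcal{R}$-value $0$, contradiction. If $r_1\neq r_2$, then the four endpoints of $P_{p_i}$ and $P_{q_i}$ are $r_1,r_2\in\mathring{P_o}$ and two points lying in the interiors of end‑eating paths; a point‑by‑point check (using that $r_1\neq r_2$ and that the two buried endpoints lie on \emph{distinct} end‑eating paths) shows none of them can lie on the other terminal path, so $P_{p_i}$ and $P_{q_i}$ do not touch, contradicting the edge $p_iq_i$. This establishes the subclaim, and I would note that the bend‑count of $P_o$ plays no role — the argument is uniform in $k$, the only case distinction being which end‑eating lemma is invoked.

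Finally I would record a disjointness observation: a single endpoint $z$ of $P_o$ cannot lie on the terminal paths of two different triangles $i\neq j$, for then $z$ would be a common point of $P_a$ and $P_b$ with $a\in\{p_i,q_i\}$ and $b\in\{p_j,q_j\}$, making $a$ and $b$ adjacent — impossible since terminals from different triangles are non‑adjacent. Hence the map sending each $\mathcal{R}$-value‑$1$ triangle to an endpoint of $P_o$ furnished by the subclaim is well defined and injective, so at most two triangles can have $\mathcal{R}$-value $1$; as $C_c$ has three, at least one terminal has $\mathcal{R}$-value $0$. The main obstacle I anticipate is the degenerate configuration in which $P_{p_i}$ and $P_{q_i}$ might try to touch through their end‑eaten endpoints rather than through $P_o$; this is precisely what the ``distinct end‑eating paths'' hypothesis in the subclaim rules out, and verifying that the construction attaches $p_i$ and $q_i$ to different end‑eating vertices is the point requiring the most care.
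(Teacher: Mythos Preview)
Your proposal is correct and follows essentially the same approach as the paper: both arguments pigeonhole on the two endpoints of $P_o$, with the paper arguing directly (some terminal pair $(p_i,q_i)$ contains no endpoint of $P_o$, hence has $\mathcal{R}$-value $0$) while you argue the contrapositive (each $\mathcal{R}$-value-$1$ terminal consumes a distinct endpoint of $P_o$). One small imprecision: the end-eating lemmas only guarantee that the buried endpoint lies on $P_v$, not in $\mathring{P_v}$, but your conclusions still follow from non-adjacency (e.g.\ $o\not\sim v$, $q_i\not\sim v_{p_i}$) rather than interior-disjointness; and your explicit verification that $p_i$ and $q_i$ are attached to \emph{different} end-eating copies is a detail the paper leaves implicit.
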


\begin{cproof}
Since $P_o$ has two endpoints, there exists at least one terminal $(p_i,q_i)$ such that no endpoint of $P_o$ belongs to $P_{p_i} \cup P_{q_i}$. It follows that one endpoint $r$ of $P_{p_i}$ and one endpoint $r'$ of $P_{q_i}$ belong to $\mathring{P_o}$. On the other hand, since $P_{p_i}$ and $P_{q_i}$ each have one endpoint taken by an end-eating graph (Lemmas \ref{lem:Ei} and \ref{lem:E1}), necessarily $r=r'$, for otherwise $P_{p_i}$ and $P_{q_i}$ would not touch. Thus, the terminal $(p_i,q_i)$ has an $\mathcal{R}$-value of $0$.
\end{cproof}

\begin{Claim}
\label{cl:var_terminals}
Let $\mathcal{R}$ be a CPG (resp. 1-bend CPG) representation of $G_i(\Phi)$ with $i=2,3$ (resp. $G_1(\Phi)$) and let $V_x$ be a variable gadget. Then in the representation induced by $\mathcal{R}$ of a triangle $(x,y,e)$ of $\overline{V_x}$, exactly one endpoint of either $P_x$ or $P_y$ is used. Furthermore, if one endpoint of $P_x$ is used, then one endpoint of $P_z$ is used in the representation induced by $\mathcal{R}$ of the triangle $(x,z,e)$ of $\overline{V_x}$.
\end{Claim}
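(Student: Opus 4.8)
The plan is to reduce both assertions to a short double-counting argument on the rim $4$-cycle $abcd$ of $\overline{V_x}$, where $e$ serves only to name the four triangles $(a,b,e),(b,c,e),(c,d,e),(d,a,e)$. First I would record the facts I may use. By Lemmas \ref{lem:Ei} and \ref{lem:E1}, each $t\in\{a,b,c,d\}$ is incident to a (private) copy of the end-eating graph, so one endpoint of $P_t$ lies on the corresponding path $P_v$; hence each of $P_a,P_b,P_c,P_d$ keeps exactly one endpoint free of its end-eater, which I call its \emph{tip}, while $P_e$ (attached to no end-eater) keeps both. I would also use that, since the paths are pairwise interiorly disjoint, every contact point is an endpoint of at least one incident path, and that in the rim cycle the opposite pairs $\{a,c\}$ and $\{b,d\}$ are nonadjacent.

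Next I would establish the two inequalities. \textbf{(i)} \emph{Every rim edge $xy\in\{ab,bc,cd,da\}$ is realized at the tip of one of its endpoints.} The contact realizing $xy$ is an endpoint of $P_x$ or $P_y$; it cannot be the eaten endpoint of, say, $P_x$, since that point lies on the private end-eater path $P_v$ of $x$, and $y$ is not adjacent to $v$, so $P_y$ does not pass through it. Thus the contact is a free tip. \textbf{(ii)} \emph{A tip realizes at most one rim edge}: a tip is a single grid-point, and the two rim-neighbours of its vertex are nonadjacent, hence cannot both pass through that grid-point (the same incompatibility used in Claim \ref{bendpoint}). Counting incidences (rim edge, realizing tip), (i) gives at least $4$ and (ii) at most $4$, so equality holds: each rim edge is realized by \emph{exactly one} tip, and each tip realizes \emph{exactly one} rim edge.

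The first statement is precisely the ``exactly one'' conclusion: in the induced representation of $(x,y,e)$, exactly one endpoint among those of $P_x$ and $P_y$ is used to realize $xy$. For the second statement, let $z$ be the rim-neighbour of $x$ other than $y$. If the tip of $P_x$ is the one realizing $xy$, then, since each tip realizes exactly one rim edge, the tip of $P_x$ does not realize the edge $xz$; as $xz$ is still realized by exactly one tip, that tip must be the tip of $P_z$, so an endpoint of $P_z$ is used in the triangle $(x,z,e)$. This also exhibits the global alternation: going around $abcd$, the used tip advances consistently, leaving exactly two admissible cyclic patterns.

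The step I would treat most carefully is \textbf{(i)}: one must be certain that every rim edge is realized at a free tip, and not instead through a spoke contact on $P_e$ or through the connector attached to $x$. The end-eater commitment pins one endpoint of each rim path, and the nonadjacency of opposite rim vertices forbids a single grid-point from serving two rim edges; together with the counting these in fact force the four tips to be spent on the four rim edges rather than diverted, so no rim tip is absorbed by a spoke or a connector. Making this fully rigorous rests on the contact-type bookkeeping (types II.a and II.b) already used in Claim \ref{bendpoint} and on the grid-point incompatibilities of Observation \ref{obs:grid-points}; the remainder is the elementary arithmetic above.
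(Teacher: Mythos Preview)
Your approach is correct and in essence coincides with the paper's: both arguments rest on the same two ingredients, namely that each rim vertex $a,b,c,d$ has exactly one free endpoint (the other being pinned by its end-eater via Lemmas~\ref{lem:Ei} and~\ref{lem:E1}) and that opposite rim vertices are nonadjacent. The paper packages this as a direct contradiction trace around the cycle (if $P_x$ and $P_y$ share an endpoint, then the tips of $P_z$ and $P_t$ are forced onto $xz$ and $yt$, leaving $zt$ unrealized), while you package the identical pigeonhole as a global double count on the four tips versus the four rim edges; the two are interchangeable.

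One remark: your closing paragraph over-hedges. The appeal to the type~II.a/II.b bookkeeping of Claim~\ref{bendpoint} and to Observation~\ref{obs:grid-points} is not needed here. Step~(i) follows immediately from the fact that interiorly disjoint paths can only meet at an endpoint of one of them, together with the end-eater lemmas (the eaten endpoint of $P_x$ lies on a path of a private end-eater, which is nonadjacent to $y$, hence cannot coincide with a point of $P_y$); no grid-local case analysis is required, and in particular the argument works uniformly for $G_2(\Phi)$ and $G_3(\Phi)$ without any bend restriction. With that simplification your counting proof is complete as stated.
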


\begin{cproof}
Clearly, since the edge $xy$ has to be represented in $\mathcal{R}$, either one endpoint of $P_x$ belong to $P_y$ or one endpoint of $P_y$ belongs to $P_x$. Suppose, to the contrary, that $P_x$ and $P_y$ have a common endpoint. It follows from Lemmas \ref{lem:Ei} and \ref{lem:E1} that no endpoint of $P_x$ (resp. $P_y$) belongs to $P_z$ (resp. $P_t$), where $(x,z,e)$ (resp. $(y,t,e)$) is a triangle of $\overline{V_x}$. But then, one endpoint of $P_z$ (resp. $P_t$) must be used to represent the edge $xz$ (resp. $yt$) and we conclude by Lemmas \ref{lem:Ei} and \ref{lem:E1} that no endpoint is available to represent the edge $zt$, a contradiction. Thus, $P_x$ and $P_y$ have distinct endpoints and we may assume without loss of generality that $P_x$ touches $\mathring{P_y}$. Since by Lemmas \ref{lem:Ei} and \ref{lem:E1} the other endpoint of $P_x$ is taken by an end-eating graph, $P_z$ must necessarily touch the interior of $P_x$, where $(x,z,e)$ is the other triangle of $\overline{V_x}$ to which $x$ belongs, which concludes the proof.
\end{cproof}

\begin{Claim}
\label{cl:connector}
Let $\mathcal{R}$ be a CPG (resp. 1-bend CPG) representation of $G_i(\Phi)$ with $i=2,3$ (resp. $G_1(\Phi)$). Then in every connector $L_{x,c}$, at least one of $t_{cx}$ and $t_{xc}$ is connected to a terminal of $\mathcal{R}$-value $1$.
\end{Claim}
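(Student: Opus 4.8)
The plan is to argue by contradiction: assume that in $L_{x,c}$ both the $V_x$-terminal $(y,z)$ attached to $t_{xc}$ and the $C_c$-terminal $(p,q)$ attached to $t_{cx}$ have $\mathcal{R}$-value $0$, and derive a contradiction by propagating a shortage of endpoints along the spine $t_{xc}=t_0,t_1,\dots,t_5,t_{cx}=t_6$. The starting point is that, by Lemmas \ref{lem:Ei} and \ref{lem:E1}, each of $P_{t_1},\dots,P_{t_5}$ (and likewise each terminal path $P_y,P_z,P_p,P_q$) has one endpoint pinned to the interior of an end-eating path, hence exactly one free endpoint, while $P_{t_{xc}}$ and $P_{t_{cx}}$ have two endpoints each. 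I would also isolate the following elementary grid fact: at any grid-point a path passing through as an interior point occupies two of the four incident grid-edges and a path ending there occupies one, so, since the paths are pairwise interiorly disjoint, a single endpoint of a path can lie in the interior of at most one other path (two would already require $2+2+1=5>4$ grid-edges).

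First I would analyse the clause end. Value $0$ for $(p,q)$ means $P_p$ and $P_q$ share a free endpoint $r\in\mathring{P_o}$, so neither retains a free endpoint; since $t_{cx}$ is adjacent to $p$ and $q$ but to neither $o$ nor any end-eating vertex, the edges $t_{cx}p$ and $t_{cx}q$ can only be realised by $P_{t_{cx}}$ reaching $\mathring{P_p}$ and $\mathring{P_q}$ with its own endpoints, and by the grid fact these must be \emph{two distinct} endpoints. Thus $P_{t_{cx}}$ has no endpoint left for the edge $t_5t_{cx}$, which is therefore realised by the free endpoint of $P_{t_5}$ landing in $\mathring{P_{t_{cx}}}$. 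This propagates: once the free endpoint of $P_{t_i}$ is consumed towards $t_{i+1}$, the edge $t_{i-1}t_i$ can only be realised by the free endpoint of $P_{t_{i-1}}$ landing in $\mathring{P_{t_i}}$ (a short check, as in the clause step, rules out its landing on a pinned or shared endpoint that would create a spurious adjacency). Hence one endpoint of $P_{t_{xc}}$ is consumed by the edge $t_{xc}t_1$, leaving $P_{t_{xc}}$ a single free endpoint $w$ to meet both $P_y$ and $P_z$.

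Next I would turn to the variable end and invoke Claim \ref{cl:var_terminals}: applied to the four rim-triangles $(\cdot,\cdot,e)$ of $\overline{V_x}$ it forces a consistent cyclic assignment in which every rim vertex spends its unique free endpoint on exactly one incident rim-triangle, hence on $P_e$. In particular both $P_y$ and $P_z$ have their free endpoints committed inside $V_x$ (and lying on $P_e$, so certainly not on $P_{t_{xc}}$, as $t_{xc}\not\sim e$), so they cannot contribute an endpoint to $P_{t_{xc}}$. Consequently the touches $t_{xc}y$ and $t_{xc}z$ are both realised by the endpoint $w$, whence $w\in P_y\cap P_z$. But the $\mathcal{R}$-value of $(y,z)$ being $0$ means $P_y\cap P_z\subseteq P_e$, so $w\in P_e$ and $t_{xc}$ would be adjacent to $e$, contradicting $t_{xc}\not\sim e$. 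This contradiction shows that at least one of the two terminals has $\mathcal{R}$-value $1$.

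The hardest part, and the step I would be most careful about, is the clause-end analysis together with the propagation: one must rule out the degenerate possibility that $P_{t_{cx}}$ (or some $P_{t_i}$) services two of its neighbours with a single endpoint, which is exactly where the grid-edge counting fact is indispensable, and one must verify at each step of the propagation that the transmitted endpoint genuinely lands in an \emph{interior}, rather than at a pinned or shared endpoint that would manufacture a forbidden adjacency. I would also make sure the argument is phrased uniformly across $G_1$ (where the end-eating property holds only for $1$-bend representations, via Lemma \ref{lem:E1}) and $G_2,G_3$ (Lemma \ref{lem:Ei}), since the choice of end-eating graph is the sole point at which the three reductions differ.
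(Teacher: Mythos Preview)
Your argument reaches the correct contradiction, but it takes a more laborious asymmetric route than the paper, and one step is misjustified.

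The paper proceeds symmetrically. It first shows a single local fact: if $t\in\{t_{xc},t_{cx}\}$ is attached to a terminal $(y,z)$ of $\mathcal{R}$-value~$0$, then \emph{both} endpoints of $P_t$ are forced into $\mathring{P_y}$ and $\mathring{P_z}$. Having established this at both ends simultaneously, the paper finishes with a one-line global count: the induced path $t_{xc},t_1,\dots,t_5,t_{cx}$ has $6$ edges, each requiring a distinct path-endpoint (two consecutive edges cannot share an endpoint of $P_{t_i}$, since $t_{i-1}\not\sim t_{i+1}$), while only $5$ endpoints remain available (one per $t_1,\dots,t_5$; zero for $t_{xc},t_{cx}$). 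No propagation along the spine is needed. Your approach instead proves the local fact only at the clause end, then pushes the deficit along the spine to the variable end, and derives a separate contradiction there via $w\in P_y\cap P_z\subseteq P_e$. This works, but the directed propagation and the final step are extra machinery that the paper's symmetric count avoids.

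One genuine slip: your sentence ``every rim vertex spends its unique free endpoint on exactly one incident rim-triangle, \emph{hence on $P_e$}'' does not follow from Claim~\ref{cl:var_terminals}. That claim only says that in each rim triangle $(y,z,e)$ the edge $yz$ is realised by an endpoint of exactly one of $P_y,P_z$, so the free endpoint of a rim vertex lands on the \emph{adjacent rim path}, not on $P_e$. Your conclusion that this endpoint cannot lie on $P_{t_{xc}}$ is still correct, but the reason splits into two cases: the endpoint realising $yz$ lies in $P_y\cap P_z\subseteq P_e$ by the value-$0$ hypothesis (so hitting $P_{t_{xc}}$ would force $t_{xc}\sim e$), while the other rim endpoint lies on some $P_r$ with $r\notin\{y,z\}$ (so hitting $P_{t_{xc}}$ would force $t_{xc}\sim r$). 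Either way you get a contradiction, so the gap is easy to close, but as written the justification is wrong.
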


\begin{cproof}
Observe first that if a vertex $t$ belonging to $L_{x,c}$ is connected to a terminal $(y,z)$ of $\mathcal{R}$-value~$0$, then one endpoint of $P_t$ is a point of $\mathring{P_y}$ and the other is a point of $\mathring{P_z}$. Indeed, if $(y,z)$ belongs to a clause gadget, then $P_y$ and $P_z$ share a common endpoint which is a point of $\mathring{P_o}$ by definition. Since their other endpoints belong to an end-eating graph, $P_t$ must touch the interior of $P_y$ and the interior of $P_z$. Now if $(y,z)$ belongs to a variable gadget, then either $P_y$ and $P_z$ share a common endpoint which also belongs to $P_e$ by definition and so, $P_t$ must touch $\mathring{P_y}$ and $\mathring{P_z}$, as $P_y$ and $P_z$ have their other endpoints taken by an end-eating graph. Otherwise, we may assume without loss of generality that $P_y$ touches the interior of $P_z$ on a grid-point which is an endpoint of $P_e$ by definition. But then, by repeated applications of Claim \ref{cl:var_terminals}, one endpoint of $P_z$ belongs to some path $P_r$ with $r \neq y,e$ and so, since $P_z$ and $P_y$ have their other endpoints taken by an end-eating graph, $P_t$ must touch $\mathring{P_y}$ and $\mathring{P_z}$. 

Suppose now, to the contrary, that both $t_{cx}$ and $t_{xc}$ are connected to a terminal of $\mathcal{R}$-value $0$. Since the induced path $Q$ connecting $t_{cx}$ and $t_{xc}$ in $L_{x,c}$ has 6 edges, its CPG representation requires at least 6 distinct endpoints of paths corresponding to vertices in $Q$. However, as shown previously, both endpoints of $P_{t_{cx}}$ and $P_{t_{xc}}$ are used and the remaining vertices are connected to end-eating graphs; thus, only 5 endpoints are available, a contradiction.
\end{cproof}

\bigskip

Finally, we prove the following key Lemma.

\begin{lemma}\label{keylemma}
Given an instance $\Phi$ of \textsc{Planar Exactly 3-Bounded 3-Sat}, the following hold.
\begin{itemize}
    \item The graph $G_1(\Phi)$ is $B_1$-CPG if $\Phi$ is satisfiable, but has no $1$-bend CPG representation if $\Phi$ is not satisfiable.
    \item The graph $G_i(\Phi)$, for $i=2,3$, is $B_i$-CPG if $\Phi$ is satisfiable, but has no $k$-bend CPG representation, for any $k \geq 0$, if $\Phi$ is not satisfiable.
\end{itemize}
\end{lemma}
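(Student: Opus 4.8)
The plan is to prove both implications for each of $G_1(\Phi)$, $G_2(\Phi)$, $G_3(\Phi)$ in parallel, since the three constructions differ only in the end-eating graph $\mathcal{E}_i$ used, the target bend bound $i$, and (for $i=3$) the extra requirement of planarity. Throughout, I would treat the end-eating graphs as black boxes: by Lemmas \ref{lem:Ei} and \ref{lem:E1}, in any (1-bend, for $i=1$) CPG representation each path $P_t$ attached to a copy of $\mathcal{E}_i$ must spend one of its endpoints on the interior of the special path $P_v$, so every terminal- and connector-vertex enters a representation with exactly one endpoint already committed. This is the single structural fact driving the whole argument, and it is what makes the $\mathcal{R}$-values well behaved.

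For the forward direction (satisfiability $\Rightarrow$ representation), given a satisfying assignment $\phi$ I would build an explicit $B_i$-CPG representation by assembling the local pictures. I would place the prescribed $B_i$-CPG representation of $\mathcal{E}_i$ for each double-circle, represent each variable gadget $V_x$ (\Cref{fig:variablegadget}) in one of its two mirror states according to whether $\phi(x)$ is true or false, so that exactly the positive or exactly the negative terminals receive $\mathcal{R}$-value $1$, and represent each clause gadget (\Cref{fig:clausegadget}) so that a terminal corresponding to a literal satisfied by $\phi$ receives $\mathcal{R}$-value $0$ while the others receive value $1$; a false terminator (\Cref{fig:falseterminator}) is attached to the unused terminal of every two-literal clause. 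The connectors (\Cref{fig:connector1}) are then drawn to match the $\mathcal{R}$-values on their two ends. The only thing to verify here is that every path has at most $i$ bends and that the touchings realise exactly the edges of $G_i(\Phi)$ — routine but case-heavy — together with planarity of the global layout when $i=3$.

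For the converse I would argue the contrapositive: from any (1-bend, for $i=1$) CPG representation $\mathcal{R}$ I extract a satisfying assignment. The crucial intermediate step is a \emph{consistency} claim for variable gadgets, namely that in $\mathcal{R}$ the terminals of $V_x$ with $\mathcal{R}$-value $1$ are all positive or all negative. To see this, I would use that each positive and each negative terminal of $V_x$ share a vertex; assuming one of each had value $1$ and applying the propagation of \Cref{cl:var_terminals} around the outer $4$-cycle of $\overline{V_x}$, I would force three of the four spokes incident to the centre $e$ to be represented by endpoints of $P_e$, which is impossible since $P_e$ has only two endpoints. Having this, I set $x$ to true exactly when its positive terminals carry value $1$ (well defined by consistency). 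Then I combine \Cref{cl:connector} — a terminal of value $0$ forces the opposite end of its connector to have value $1$ — with \Cref{cl:clause_terminals} — every clause gadget has a terminal of value $0$ — to conclude that every clause contains a literal whose variable terminal has value $1$, i.e.\ a literal satisfied by the extracted assignment, and hence $\Phi$ is satisfiable.

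The main obstacle is precisely the variable-gadget consistency claim above: it is what guarantees that the extracted assignment is well defined, so that a variable is not simultaneously forced true and false, and it is the one place where the geometry of CPG representations, rather than mere counting of contact points, is genuinely used, via the endpoint/spoke argument on $P_e$ together with \Cref{cl:var_terminals}. A secondary point needing care is the false terminator: I must check that it forces its attached clause terminal to $\mathcal{R}$-value $1$, so that in a two-literal clause the value-$0$ terminal guaranteed by \Cref{cl:clause_terminals} necessarily corresponds to a genuine literal rather than to the terminator; otherwise a clause could be satisfied ``for free'' and the reduction would be unsound.
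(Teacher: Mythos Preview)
Your overall two-direction structure matches the paper, and so does your use of \Cref{cl:clause_terminals} and \Cref{cl:connector} to read off a satisfying assignment. The genuine difference is in how you establish consistency of the variable gadget.

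The paper does \emph{not} argue consistency via propagation. Instead it applies \Cref{lem:freeendpoint} directly to the induced representation of $\overline{V_x}$: with $5$ vertices, $8$ edges and at least $4$ free endpoints (one per vertex in $\{a,b,c,d\}$, lost to an end-eating graph), there must be at least $8-10+4=2$ contact points corresponding to edge-disjoint triangles, and since $\overline{V_x}$ has no three pairwise edge-disjoint triangles, there are exactly two. A short argument then shows both must be endpoints of $P_e$, forcing one of the two configurations $\{(a,b,e),(c,d,e)\}$ or $\{(a,d,e),(b,c,e)\}$. From this the paper reads off the $\mathcal{R}$-values of \emph{all four} terminals at once, which is stronger than your consistency claim and makes the assignment extraction immediate.

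Your propagation route can be made to work, but the ``three spokes'' sentence is too compressed. \Cref{cl:var_terminals} speaks only about the outer $4$-cycle edges, not the spokes $ea,eb,ec,ed$; to force three spokes onto endpoints of $P_e$ you must also argue that the value-$1$ hypothesis prevents the relevant outer-cycle contact points from lying on $P_e$, and that no two of the forced spokes can share an endpoint of $P_e$ (using that $a,c$ and $b,d$ are nonadjacent). This is correct but needs a short case analysis you have not written. The paper's triangle-counting argument sidesteps all of this. Your remark about the false terminator is well taken; the paper does not isolate this as a separate claim, but its construction makes the attached terminal have $\mathcal{R}$-value~$1$ by exactly the endpoint-count you describe.
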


\begin{proof}
Suppose that there exists a $k$-bend CPG (resp. 1-bend CPG) representation $\mathcal{R}$ of $G_i(\Phi)$ with $i=2,3$ (resp. $G_1(\Phi)$) for some $k \geq 0$. We construct a truth assignment satisfying $\Phi$ as follows. By Lemma \ref{lem:freeendpoint}, the representation in $\mathcal{R}$ of $\overline{V_x}$ must contain at least $(8-2 \times 5 + 4)=2$ contact points corresponding to pairwise edge-disjoint triangles in $\overline{V_x}$; and since $\overline{V_x}$ does not contain three pairwise edge-disjoint triangles, the representation in $\mathcal{R}$ of $\overline{V_x}$ contains exactly two such contact points $r$ and $r'$. We claim that $r$ and $r'$ are the two endpoints of $P_e$. Indeed, if say $r$ is not an endpoint of $P_e$, then $r$ is an endpoint of the two other paths, say $P_a$ and $P_b$ without loss of generality. Then, $r'$ necessarily belongs to $P_c \cup P_d \cup P_e$, as $(c,d,e)$ is the only triangle in $\overline{V_x}$ not sharing any edge with $(a,b,e)$. In particular, $r'$ is an endpoint of at least one of $P_c$ and $P_d$, say without loss of generality $P_c$. But then $P_b$ and $P_c$ cannot touch, as their other endpoint is taken by an end-eating graph, a contradiction. Thus, $r$ and $r'$ are the two endpoints of $P_e$. It follows that, for any variable $x \in V$, there are only two ways to represent $\overline{V_x}$: either $r$ and $r'$ correspond to $(a,b,e)$ and $(c,d,e)$, respectively, or $r$ and $r'$ correspond to $(a,d,e)$ and $(b,c,e)$, respectively. 

Consider the first case. We claim that $(a,b)$ and $(c,d)$ have $\mathcal{R}$-value 0, whereas $(a,d)$ and $(b,c)$ have $\mathcal{R}$-value 1. Clearly, $r$ is an endpoint of either $P_a$ or $P_b$. By symmetry, it is enough to show the claim when $r$ is an endpoint of $P_a$. But if $r$ is an endpoint of $P_a$, then $r'$ is an endpoint of $P_c$, for if $r'$ is an endpoint of $P_d$, $P_a$ and $P_d$ cannot touch as their other endpoints are taken by end-eating graphs. Therefore, $P_d$ touches $\mathring{P_a}$ and $P_{b}$ touches $\mathring{P_c}$. This implies that $(a,b)$ and $(c,d)$ have $\mathcal{R}$-value 0, whereas $(a,d)$ and $(b,c)$ have $\mathcal{R}$-value 1.  

In the second case, we conclude similarly that $(a,d)$ and $(b,c)$ have $\mathcal{R}$-value 0, whereas $(a,b)$ and $(c,d)$ have $\mathcal{R}$-value 1 (we leave the verification as an easy exercise). 

We then set $x$ to TRUE if the positive terminals have $\mathcal{R}$-value 1 and set $x$ to FALSE if the negative terminals have $\mathcal{R}$-value 1. By Claim~\ref{cl:clause_terminals}, every clause gadget has at least one terminal of $\mathcal{R}$-value $0$. By Claim \ref{cl:connector}, this terminal is connected to a terminal of $\mathcal{R}$-value $1$ in the variable gadget. Therefore, each clause contains a TRUE literal and so $\Phi$ is satisfiable.\\

Conversely, suppose that $\Phi$ is satisfiable. We build a 1-bend CPG representation $\mathcal{R}$ of $G_1(\Phi)$, $G_2(\Phi)$ and $G_3(\Phi)$ where all the copies of the end-eating graphs have been removed (note that without the copies of the end-eating graphs, $G_1(\Phi)$, $G_2(\Phi)$ and $G_3(\Phi)$ are isomorphic); we then add $i$-bend CPG representations of the copies of the end-eating graph $\mathcal{E}_i$ to the representation $\mathcal{R}$, without adding any bends to the existing paths, to obtain an $i$-bend CPG representation $\mathcal{R}_i$ of $G_i(\Phi)$. In all subsequent figures, double-circles schematically represent a CPG representation of the copy of the appropriate end-eating graph.

Consider a truth assignment of the variables that satisfies $\Phi$. We set the $\mathcal{R}$-values of terminals beforehand as follows. For every variable $x \in V$, if $x$ is set to TRUE (resp. FALSE), then the positive terminals of $V_x$ have an $\mathcal{R}$-value of $1$ (resp. $0$) and the negative terminals have an $\mathcal{R}$-value of $0$ (resp. $1$). For every clause $c \in C$, we then choose a variable $x$ whose occurrence satisfies $c$: the $\mathcal{R}$-value of the terminal of $C_c$ connected to $V_x$ is then set to $0$ while the $\mathcal{R}$-values of the other terminals of $C_c$ are set to $1$. It is worth noticing that in this way no connector is linking two terminals of $\mathcal{R}$-value $0$.

Recall that $H$ is the graph with vertex set $V(H) = V \cup C$ and edge set $E(H) = \{xc : x \in c \text{ or } \neg x \in c\}$. Since $H$ is planar with $\Delta(H) \leq 4$, we can find in linear time an embedding $\mathcal{H}$ on the grid where vertices are mapped to grid-points and edges are mapped to pairwise interiorly disjoint grid-paths with at most 4 bends connecting the two grid-points corresponding to the endvertices \citep{tamassia}. For every vertex $u \in V(H)$, we then replace the corresponding grid-point in $\mathcal{H}$ with an empty square that we assume to be large enough to contain the representations we subsequently construct. 

Given a variable $x \in V$, we construct a 1-bend CPG representation of $V_x$ as follows. We denote by $E_1$, $E_2$ and $E_3$ the grid-paths in $\mathcal{H}$ corresponding to the edges incident to $x$ in $H$. For any $j=1,2,3$, the edge $E_j$ corresponds to a connector incident to $V_x$; we denote by $t_j$ the vertex in the corresponding connector adjacent to vertices in $V_x$, and by $P_j$ the path in the CPG representation we are constructing corresponding to $t_j$. We now explain how to construct the CPG representation of $V_x \cup \{t_1,t_2,t_3\}$ depending on the $\mathcal{R}$-value of the terminal to which $t_j$ ($j=1,2,3$) is connected. Recall that, by construction, $t_1$, $t_2$ and $t_3$ are not all connected to terminals of the same $\mathcal{R}$-value.

Suppose first that exactly two of them are connected to terminals of $V_x$ of $\mathcal{R}$-value $0$, say $t_1$ and $t_2$ without loss of generality. If $E_1$ and $E_2$ are on opposite sides of $x$ in $\mathcal{H}$, then we fill the corresponding square as shown in Figure \ref{fig:001_oppose}. Otherwise, we may assume without loss of generality that $E_1$ and $E_3$ are on opposite sides of $x$ in $\mathcal{H}$ in which case we fill the corresponding square as shown in Figure \ref{fig:001_other} (the blue dotted paths in both figures correspond to the neighbours of $t_1$ and $t_2$ in their respective connector gadget). Note that since the terminal to which $t_3$ is connected has $\mathcal{R}$-value 1, the corresponding paths have a contact point $r$ which belongs to no other path; $P_3$ is then added in such a way that $r$ is one of its endpoints.

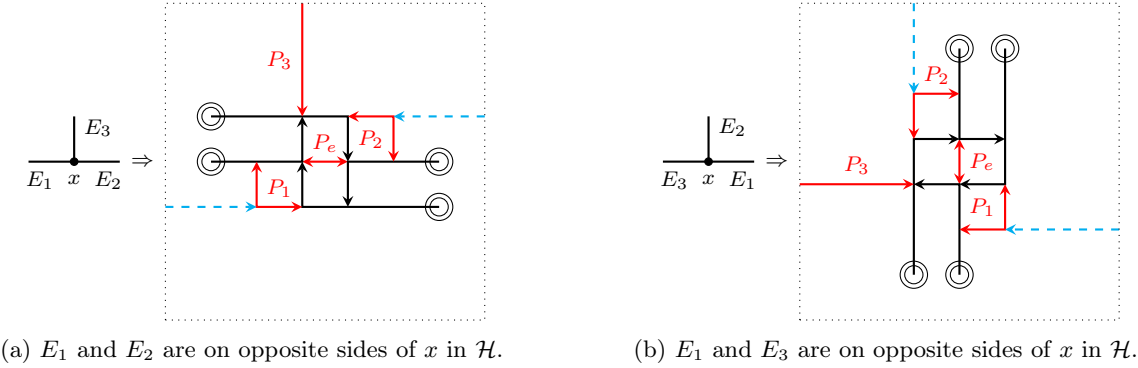
\begin{figure}[h]
\centering
\scalebox{1}
{
    \begin{subfigure}{.5\textwidth}
    \centering
    \begin{tikzpicture}[scale=.6]
 
        \node[circ,label=below:{\footnotesize $x$}] at (-5,1) {};
        \draw[-,thick] (-6,1) -- (-5,1) node[near start,below] {\footnotesize $E_1$};
        \draw[-,thick] (-4,1) -- (-5,1)  node[near start,below] {\footnotesize $E_2$};
        \draw[-,thick] (-5,2) -- (-5,1) node[near start,right] {\footnotesize $E_3$};
        
        \node[draw=none] at (-3.5,1) {\footnotesize $\Rightarrow$};
    
        \draw[red,thick,<->,>=stealth] (0,1) -- (1,1) node[midway,above] {\footnotesize $P_e$};
        \draw[thick,->,>=stealth] (-2,2) -- (1,2) -- (1,1) node[near start,above] {};
        \draw[thick,->,>=stealth] (-2,1) -- (0,1) -- (0,2) node[near start,above] {};
        \draw[thick,->,>=stealth] (3,1) -- (1,1) -- (1,0) node[near start,above] {};
        \draw[thick,->,>=stealth] (3,0) -- (0,0) -- (0,1) node[near start,above] {};
        
        \draw[red,thick,<->,>=stealth] (-1,1) -- (-1,0) -- (0,0) node[midway,above] {\footnotesize $P_1$};
        \draw[red,thick,<->,>=stealth] (1,2) -- (2,2) -- (2,1) node[midway,left] {\footnotesize $P_2$};
        \draw[red,thick,->,>=stealth] (0,4.5) -- (0,2) node[midway,left] {\footnotesize $P_3$};
        \draw[dashed,cyan,thick,->,>=stealth] (4,2) -- (2,2) node[near start,above] {};
        \draw[dashed,cyan,thick,->,>=stealth] (-3,0) -- (-1,0) node[near start,above] {};
        
        \draw
        (-2,2) circle (0.2) circle (0.3)
        (-2,1) circle (0.2) circle (0.3)
        (3,1) circle (0.2) circle (0.3)
        (3,0) circle (0.2) circle (0.3);
        
        \draw[dotted] (-3,-2.5) rectangle (4,4.5);
        
    \end{tikzpicture}
    \caption{$E_1$ and $E_2$ are on opposite sides of $x$ in $\mathcal{H}$.}
    \label{fig:001_oppose}
    \end{subfigure}

    \begin{subfigure}{.5\textwidth}
    \centering
    \begin{tikzpicture}[scale=.6]
    
        \node[circ,label=below:{\footnotesize $x$}] at (-4.5,.5) {};
        \draw[-,thick] (-5.5,.5) -- (-4.5,.5) node[near start,below] {\footnotesize $E_3$};
        \draw[-,thick] (-3.5,.5) -- (-4.5,.5)  node[near start,below] {\footnotesize $E_1$};
        \draw[-,thick] (-4.5,1.5) -- (-4.5,.5) node[near start,right] {\footnotesize $E_2$};
        
        \node[draw=none] at (-3,.5) {\footnotesize $\Rightarrow$};
    
        \draw[red,thick,<->,>=stealth] (1,0) -- (1,1) node[midway,right] {\footnotesize $P_e$};
        \draw[thick,->,>=stealth] (0,-2) -- (0,1) -- (1,1) node[near start,above] {};
        \draw[thick,->,>=stealth] (1,-2) -- (1,0) -- (0,0) node[near start,above] {};
        \draw[thick,->,>=stealth] (1,3) -- (1,1) -- (2,1) node[near start,above] {};
        \draw[thick,->,>=stealth] (2,3) -- (2,0) -- (1,0) node[near start,above] {};
        
        \draw[red,thick,<->,>=stealth] (1,-1) -- (2,-1) -- (2,0) node[midway,left] {\footnotesize $P_1$};
        \draw[red,thick,<->,>=stealth] (0,1) -- (0,2) -- (1,2) node[midway,above] {\footnotesize $P_2$};
        \draw[red,thick,->,>=stealth] (-2.5,0) -- (0,0) node[midway,above] {\footnotesize $P_3$};
        \draw[dashed,cyan,thick,->,>=stealth] (0,4) -- (0,2) node[near start,above] {};
        \draw[dashed,cyan,thick,->,>=stealth] (4.5,-1) -- (2,-1) node[near start,above] {};
        
        \draw
        (0,-2) circle (0.2) circle (0.3)
        (1,-2) circle (0.2) circle (0.3)
        (1,3) circle (0.2) circle (0.3)
        (2,3) circle (0.2) circle (0.3);
        
        \draw[dotted] (-2.5,-3) rectangle (4.5,4);
	
    \end{tikzpicture}
    \caption{$E_1$ and $E_3$ are on opposite sides of $x$ in $\mathcal{H}$.}
    \label{fig:001_other}
    \end{subfigure}
}
\caption{The 1-bend CPG representation of $V_x$ when exactly two terminals have an $\mathcal{R}$-value of $0$ (the dotted square in both figures represents the square corresponding to $x$).}
\label{fig:variable_repr_001}
\end{figure}

Suppose now that exactly one of $t_1$, $t_2$ and $t_3$ is connected to a terminal of $V_x$ of $\mathcal{R}$-value $0$, say $t_3$ without loss of generality. If $E_1$ and $E_2$ are on opposite sides of $x$ in $\mathcal{H}$, then we fill the corresponding square as shown in Figure \ref{fig:110_oppose}. Otherwise, we may assume without loss of generality that $E_1$ and $E_3$ are on opposite sides of $x$ in $\mathcal{H}$ in which case we fill the corresponding square as shown in Figure \ref{fig:110_other} (the blue dotted path in both figures corresponds to the neighbour of $t_3$ in its connector gadget). Note that since the terminal to which $t_1$ (resp. $t_2$) is connected has $\mathcal{R}$-value 1, the corresponding paths have a contact point $r$ (resp. $r'$) which belongs to no other path; $P_1$ (resp. $P_2$) is then added in such a way that $r$ (resp. $r'$) is one of its endpoints.

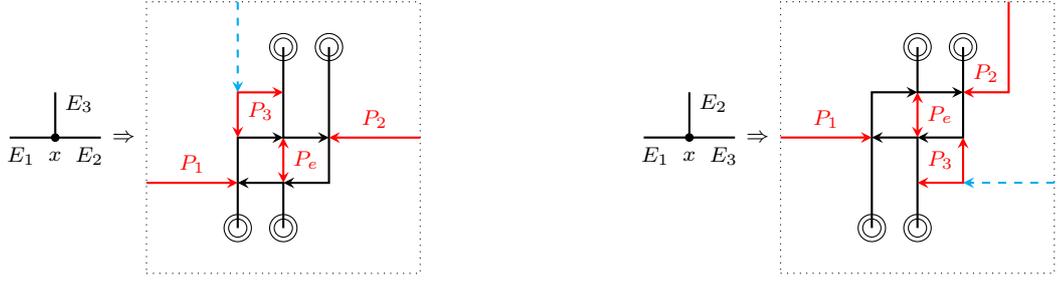
\begin{figure}[h]
\centering
\scalebox{1}
{
    \begin{subfigure}{.5\textwidth}
    \centering
    \begin{tikzpicture}[scale=.6]
    
        \node[circ,label=below:{\footnotesize $x$}] at (-4,1) {};
        \draw[-,thick] (-5,1) -- (-4,1) node[near start,below] {\footnotesize $E_1$};
        \draw[-,thick] (-3,1) -- (-4,1)  node[near start,below] {\footnotesize $E_2$};
        \draw[-,thick] (-4,2) -- (-4,1) node[near start,right] {\footnotesize $E_3$};
        
        \node[draw=none] at (-2.5,1) {\footnotesize $\Rightarrow$};
    
        \draw[red,thick,<->,>=stealth] (1,0) -- (1,1) node[midway,right] {\footnotesize $P_e$};
        \draw[thick,->,>=stealth] (0,-1) -- (0,1) -- (1,1) node[near start,above] {};
        \draw[thick,->,>=stealth] (1,-1) -- (1,0) -- (0,0) node[near start,above] {};
        \draw[thick,->,>=stealth] (1,3) -- (1,1) -- (2,1) node[near start,above] {};
        \draw[thick,->,>=stealth] (2,3) -- (2,0) -- (1,0) node[near start,above] {};
        
        \draw[red,thick,<->,>=stealth] (0,1) -- (0,2) -- (1,2) node[midway,below] {\footnotesize $P_3$};
        \draw[red,thick,->,>=stealth] (-2,0) -- (0,0) node[midway,above] {\footnotesize $P_1$};
        \draw[red,thick,->,>=stealth] (4,1) -- (2,1) node[midway,above] {\footnotesize $P_2$};
        \draw[dashed,cyan,thick,->,>=stealth] (0,4) -- (0,2) node[near start,above] {};
        
        \draw
        (0,-1) circle (0.2) circle (0.3)
        (1,-1) circle (0.2) circle (0.3)
        (1,3) circle (0.2) circle (0.3)
        (2,3) circle (0.2) circle (0.3);
        
        \draw[dotted] (-2,-2) rectangle (4,4);
        
    \end{tikzpicture}
    \caption{$E_1$ and $E_2$ are on opposite sides of $x$ in $\mathcal{H}$.}
    \label{fig:110_oppose}
    \end{subfigure}

    \begin{subfigure}{.5\textwidth}
    \centering
    \begin{tikzpicture}[scale=.6]
    
        \node[circ,label=below:{\footnotesize $x$}] at (-4,0) {};
        \draw[-,thick] (-5,0) -- (-4,0) node[near start,below] {\footnotesize $E_1$};
        \draw[-,thick] (-3,0) -- (-4,0)  node[near start,below] {\footnotesize $E_3$};
        \draw[-,thick] (-4,1) -- (-4,0) node[near start,right] {\footnotesize $E_2$};
        
        \node[draw=none] at (-2.5,0) {\footnotesize $\Rightarrow$};
    
        \draw[red,thick,<->,>=stealth] (1,0) -- (1,1) node[midway,right] {\footnotesize $P_e$};
        \draw[thick,->,>=stealth] (0,-2) -- (0,1) -- (1,1) node[near start,above] {};
        \draw[thick,->,>=stealth] (1,-2) -- (1,0) -- (0,0) node[near start,above] {};
        \draw[thick,->,>=stealth] (1,2) -- (1,1) -- (2,1) node[near start,above] {};
        \draw[thick,->,>=stealth] (2,2) -- (2,0) -- (1,0) node[near start,above] {};
        
        \draw[red,thick,<->,>=stealth] (1,-1) -- (2,-1) -- (2,0) node[midway,left] {\footnotesize $P_3$};
        \draw[red,thick,->,>=stealth] (-2,0) -- (0,0) node[midway,above] {\footnotesize $P_1$};
        \draw[red,thick,->,>=stealth] (3,3) -- (3,1) -- (2,1) node[midway,above] {\footnotesize $P_2$};
        \draw[dashed,cyan,thick,->,>=stealth] (4,-1) -- (2,-1) node[near start,above] {};
        
        \draw
        (0,-2) circle (0.2) circle (0.3)
        (1,-2) circle (0.2) circle (0.3)
        (1,2) circle (0.2) circle (0.3)
        (2,2) circle (0.2) circle (0.3);
        
        \draw[dotted] (-2,-3) rectangle (4,3);
	
    \end{tikzpicture}
    \caption{$E_1$ and $E_3$ are on opposite sides of $x$ in $\mathcal{H}$.}
    \label{fig:110_other}
    \end{subfigure}
}
\caption{The 1-bend CPG representation of $V_x$ when exactly one terminal has an $\mathcal{R}$-value of $0$ (the dotted square in both figures represents the square corresponding to $x$).}
\label{fig:variable_repr_110}
\end{figure}

Next, given a clause $c$, we construct a 1-bend CPG representation of $C_c$ as follows. Denote by $E_1$, $E_2$ and $E_3$ the edges incident to $c$ in $\mathcal{H}$ (if $c$ is of degree 2 in $H$, we proceed similarly by adding an edge consisting of one grid-edge incident to $c$ in $\mathcal{H}$). Each of these edges corresponds either to a connector or to a false terminator; for $i=1,2,3$, we denote by $t_i$ the vertex in the gadget corresponding to $E_i$ with neighbors in $C_c$, and by $P_i$ their corresponding path in the representation we are constructing. We now explain how to construct the representation of $C_c \cup \{t_1,t_2,t_3\}$ depending on the $\mathcal{R}$-value of the terminal to which $t_j$ ($j=1,2,3$) is connected. Recall that, by construction, only one terminal has been assigned an $\mathcal{R}$-value of $0$.

Suppose, without loss of generality, that $t_1$ is connected to a terminal of $C_c$ of $\mathcal{R}$-value $0$. If $E_2$ and $E_3$ are on opposite sides of $c$ in $\mathcal{H}$, then we fill the corresponding square as shown in Figure \ref{fig:clause_oppose}. Otherwise, we may assume without loss of generality that $E_1$ and $E_3$ are on opposite sides of $c$ in $\mathcal{H}$ in which case we fill the corresponding square as shown in Figure \ref{fig:clause_other} (the blue dotted path in both figures corresponds to the neighbour of $t_1$ in its gadget).  Note that since the terminal to which $t_2$ (resp. $t_3$) is connected has $\mathcal{R}$-value 1, the corresponding paths have a contact point $r$ (resp. $r'$) which belongs to no other path; $P_2$ (resp. $P_3$) is then added in such a way that $r$ (resp. $r'$) is one of its endpoint.

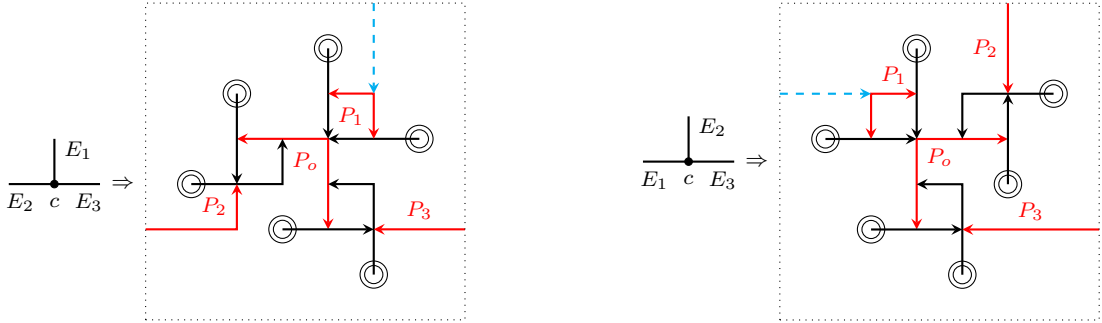
\begin{figure}[h]
\centering
\scalebox{1}
{
    \begin{subfigure}{.5\textwidth}
    \centering
    \begin{tikzpicture}[scale=.6]
    
        \node[circ,label=below:{\footnotesize $c$}] at (-4,1) {};
        \draw[-,thick] (-5,1) -- (-4,1) node[near start,below] {\footnotesize $E_2$};
        \draw[-,thick] (-3,1) -- (-4,1)  node[near start,below] {\footnotesize $E_3$};
        \draw[-,thick] (-4,2) -- (-4,1) node[near start,right] {\footnotesize $E_1$};
        
        \node[draw=none] at (-2.5,1) {\footnotesize $\Rightarrow$};
    
        \draw[red,thick,<->,>=stealth] (0,2) -- (2,2) -- (2,0) node[near start,left] {\footnotesize $P_o$};
        \draw[thick,->,>=stealth] (-1,1) -- (1,1) -- (1,2) node[near start,above] {};
        \draw[thick,->,>=stealth] (0,3) -- (0,1) node[near start,above] {};
        \draw[thick,->,>=stealth] (3,-1) -- (3,1) -- (2,1) node[near start,above] {};
        \draw[thick,->,>=stealth] (1,0) -- (3,0) node[near start,above] {};
        \draw[thick,->,>=stealth] (2,4) -- (2,2) node[near start,above] {};
        \draw[thick,->,>=stealth] (4,2) -- (2,2) node[near start,above] {};
        
        \draw[red,thick,<->,>=stealth] (2,3) -- (3,3) -- (3,2) node[midway,left] {\footnotesize $P_1$};
        \draw[red,thick,->,>=stealth] (-2,0) -- (0,0) -- (0,1) node[midway,left] {\footnotesize $P_2$};
        \draw[red,thick,->,>=stealth] (5,0) -- (3,0) node[midway,above] {\footnotesize $P_3$};
        \draw[dashed,cyan,thick,->,>=stealth] (3,5) -- (3,3) node[near start,above] {};
        
        \draw
        (-1,1) circle (0.2) circle (0.3)
        (0,3) circle (0.2) circle (0.3)
        (3,-1) circle (0.2) circle (0.3)
        (1,0) circle (0.2) circle (0.3)
        (2,4) circle (0.2) circle (0.3)
        (4,2) circle (0.2) circle (0.3);
        
        \draw[dotted] (-2,-2) rectangle (5,5);
	
    \end{tikzpicture}
    \caption{$E_2$ and $E_3$ are on opposite sides of $c$ in $\mathcal{H}$.}
    \label{fig:clause_oppose}
    \end{subfigure}
    
    \begin{subfigure}{.5\textwidth}
    \centering
    \begin{tikzpicture}[scale=.6]
    
        \node[circ,label=below:{\footnotesize $c$}] at (-5,-.5) {};
        \draw[-,thick] (-6,-.5) -- (-5,-.5) node[near start,below] {\footnotesize $E_1$};
        \draw[-,thick] (-4,-.5) -- (-5,-.5)  node[near start,below] {\footnotesize $E_3$};
        \draw[-,thick] (-5,.5) -- (-5,-.5) node[near start,right] {\footnotesize $E_2$};
        
        \node[draw=none] at (-3.5,-.5) {\footnotesize $\Rightarrow$};
    
        \draw[red,thick,<->,>=stealth] (0,-2) -- (0,0) -- (2,0) node[near start,below] {\footnotesize $P_o$};
        \draw[thick,->,>=stealth] (-1,-2) -- (1,-2) node[near start,above] {};
        \draw[thick,->,>=stealth] (1,-3) -- (1,-1) -- (0,-1) node[near start,above] {};
        \draw[thick,->,>=stealth] (2,-1) -- (2,1) node[near start,above] {};
        \draw[thick,->,>=stealth] (3,1) -- (1,1) -- (1,0) node[near start,above] {};
        \draw[thick,->,>=stealth] (0,2) -- (0,0) node[near start,above] {};
        \draw[thick,->,>=stealth] (-2,0) -- (0,0) node[near start,above] {};
        
        \draw[red,thick,<->,>=stealth] (-1,0) -- (-1,1) -- (0,1) node[midway,above] {\footnotesize $P_1$};
        \draw[red,thick,->,>=stealth] (2,3) -- (2,1) node[midway,left] {\footnotesize $P_2$};
        \draw[red,thick,->,>=stealth] (4,-2) -- (1,-2) node[midway,above] {\footnotesize $P_3$};
        \draw[dashed,cyan,thick,->,>=stealth] (-3,1) -- (-1,1) node[near start,above] {};
        
        \draw
        (-1,-2) circle (0.2) circle (0.3)
        (1,-3) circle (0.2) circle (0.3)
        (2,-1) circle (0.2) circle (0.3)
        (3,1) circle (0.2) circle (0.3)
        (0,2) circle (0.2) circle (0.3)
        (-2,0) circle (0.2) circle (0.3);
        
        \draw[dotted] (-3,-4) rectangle (4,3);
        
    \end{tikzpicture}
    \caption{$E_1$ and $E_3$ are on opposite sides of $c$ in $\mathcal{H}$.}
    \label{fig:clause_other}
    \end{subfigure}
}
\caption{The 1-bend CPG representation of $C_c$ (the dotted square in both figures represents the square corresponding to $c$).}
\label{fig:clause_repr}
\end{figure}

It now remains to complete the representation of the connectors. Given an edge $xc \in E(H)$, we construct a 1-bend CPG representation of $L_{x,c}$ as follows. By construction, either $t_{xc}$ or $t_{cx}$ is connected to a terminal in $G_i(\Phi)$ that has an $\mathcal{R}$-value of $1$. Since $x$ and $c$ have a symmetric role, we may assume without loss of generality that $t_{cx}$ is linked to a terminal of $\mathcal{R}$-value $1$, which implies that $P_{t_{cx}}$ has a free endpoint $r$. We denote by $v_1, \ldots, v_5$ the vertices in the path from $t_{cx}$ to $t_{xc}$ in $L_{x,c}$, where $v_1$ is adjacent to $t_{cx}$. Since the grid-path $L$ in $\mathcal{H}$ corresponding to the edge $xc$ has at most 4 bends, we replace each segment of $L$ with at least one $0$-bend path so as to obtain five paths. Then, we either extend or add a bend to each of these paths so that $r$ is a bend-point of $P_{v_1}$ and, for $1\leq i < 5$, one endpoint of $P_{v_i}$ is an interior point (possibly a bend-point) of $P_{v_{i+1}}$ while the other endpoint of $P_{v_i}$ is taken by an end-eating graph (see Figure \ref{fig:connector} for examples).
\end{proof}

\begin{figure}[htb]
\centering
\begin{subfigure}[b]{.45\textwidth}
\centering
\begin{tikzpicture}
\draw[thick,->,>=stealth,red] (-.25,0) -- (.5,0);
\draw[thick,dotted,red] (-.5,0) -- (-.25,0);

\node[circ,label=above:{\footnotesize $r$}] at (.5,0) {};

\draw[thick,->,>=stealth] (.5,-.25) -- (.5,0) -- (1.5,0);
\draw (.5,-.25) circle (0.05) circle (0.1);

\draw[thick,->,>=stealth] (1.5,-.25) -- (1.5,1);
\draw (1.5,-.25) circle (0.05) circle (0.1);

\draw[thick,->,>=stealth] (1.25,1) -- (2.5,1);
\draw (1.25,1) circle (0.05) circle (0.1);

\draw[thick,->,>=stealth] (2.5,.75) -- (2.5,2);
\draw (2.5,.75) circle (0.05) circle (0.1);

\draw[thick,->,>=stealth] (2.25,2) -- (3,2);
\draw (2.25,2) circle (0.05) circle (0.1);

\draw[thick,->,>=stealth] (3,1.75) -- (3,2.75);
\draw (3,1.75) circle (0.05) circle (0.1);

\draw[thick,red] (3,2.75) -- (3,3.25);
\end{tikzpicture}
\caption{The grid-path $L$ in $\mathcal{H}$ corresponding to $xc$ has 4 bends.} 
\end{subfigure}
\hspace*{.5cm}
\begin{subfigure}[b]{.45\textwidth}
\centering
\begin{tikzpicture}
\draw[thick,->,>=stealth,red] (-1.75,1.25) -- (-1,1.25);
\draw[thick,dotted,red] (-2,1.25) -- (-1.75,1.25);

\node[circ,label=above:{\footnotesize $r$}] at (-1,1.25) {};

\draw[thick,->,>=stealth] (-1,1) -- (-1,1.25) -- (0,1.25);
\draw (-1,1) circle (0.05) circle (0.1);

\draw[thick,->,>=stealth] (0,1) -- (0,2.25);
\draw (0,1) circle (0.05) circle (0.1);

\draw[thick,->,>=stealth] (-.25,2.25) -- (0,2.25) -- (0,3.25);
\draw (-.25,2.25) circle (0.05) circle (0.1);

\draw[thick,->,>=stealth] (-.25,3.25) -- (0,3.25) -- (0,4);
\draw (-.25,3.25) circle (0.05) circle (0.1);

\draw[thick,->,>=stealth] (-.25,4) -- (1.5,4);
\draw (-.25,4) circle (0.05) circle (0.1);

\draw[thick,red] (1.5,4) -- (2,4); 
\end{tikzpicture}
\caption{The grid-path $L$ in $\mathcal{H}$ corresponding to $xc$ has 2 bends.}
\end{subfigure}
\caption{A 1-bend representation of $L_{x,c}$ (the paths in red correspond to $t_{xc}$ and $t_{cx}$ in both figures).}
\label{fig:connector}
\end{figure}
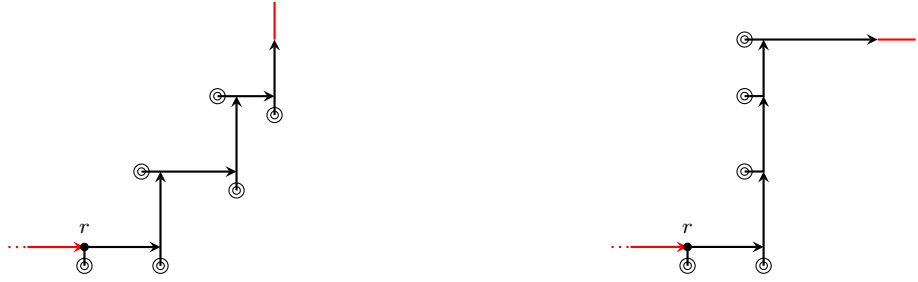 

Now, it is easy to see that the graph $G_3(\Phi)$ is planar. Hence, we have proved the following. 

\begin{theorem}
{\sc Recognition} is $\mathsf{NP}$-complete in $B_k$-CPG for any $k \geq 1$, and remains $\mathsf{NP}$-complete in planar $B_k$-CPG for any $k \geq 3$.
\end{theorem}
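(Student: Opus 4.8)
The plan is to assemble the ingredients already established rather than prove anything new from scratch: the genuine work lives in the end-eating lemmas (\Cref{lem:Ei}, \Cref{lem:E1}) and in the key \Cref{keylemma}, and what remains is to combine them with membership in $\mathsf{NP}$ and with the polynomial-time constructibility of the reduction. Throughout, the reduction is from \textsc{Planar Exactly 3-Bounded 3-Sat}, shown $\mathsf{NP}$-complete in \cite{Dahlhaus}.

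\textbf{Membership.} For every fixed $k$, deciding whether an input graph is $B_k$-CPG is in $\mathsf{NP}$: a $k$-bend representation is certified by the sequences $s(P)$, each listing at most $k+2$ grid-points, and by \Cref{innp} these coordinates can be taken to be polynomially bounded (indeed for fixed $k$ it already suffices to hand over endpoint and bend-point coordinates). For the planar variant one additionally verifies planarity of the input in polynomial time, so recognition of planar $B_k$-CPG graphs lies in $\mathsf{NP}$ as well.

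\textbf{Hardness.} Given $\Phi$, each of $G_1(\Phi)$, $G_2(\Phi)$, $G_3(\Phi)$ is obtained from the planar incidence graph $H$ by substituting the constant-size variable, clause, connector, false-terminator and end-eating gadgets, with one gadget per vertex and per edge of $H$; hence each $G_i(\Phi)$ has size linear in $|\Phi|$ and is computable in polynomial time. I would then split by target class. For $k=1$, reduce to $G_1(\Phi)$: by \Cref{keylemma} it is $B_1$-CPG exactly when $\Phi$ is satisfiable. For $k\ge 2$, reduce to $G_2(\Phi)$: if $\Phi$ is satisfiable it is $B_2$-CPG and thus $B_k$-CPG for all $k\ge 2$, whereas if $\Phi$ is unsatisfiable it admits no bounded-bend CPG representation whatsoever, so it is $B_k$-CPG for no $k$; hence $G_2(\Phi)$ is $B_k$-CPG iff $\Phi$ is satisfiable, for each fixed $k\ge 2$. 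For planar $B_k$-CPG with $k\ge 3$, reduce to $G_3(\Phi)$: the same dichotomy from \Cref{keylemma} shows it is $B_k$-CPG iff $\Phi$ is satisfiable for each $k\ge 3$, and $G_3(\Phi)$ is planar, so the hardness persists under restriction of the input to planar graphs.

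\textbf{Main subtlety.} Since the heavy lifting is already done, the only point that demands care is pairing each gadget with the correct range of $k$, which is forced by the \emph{asymmetric} strength of the two halves of \Cref{keylemma}. The graph $G_1$ is only guaranteed to have no $1$-bend representation in the unsatisfiable case, which matches $k=1$ precisely but does not rule out a $2$-bend representation, so $G_1$ cannot be reused for $k\ge 2$; conversely $G_2$ (resp. $G_3$) forbids \emph{every} bounded-bend representation when $\Phi$ is unsatisfiable, covering the whole tail $k\ge 2$ (resp. $k\ge 3$), yet in the satisfiable case it is only known to be $B_2$-CPG (resp. $B_3$-CPG) rather than $B_1$-CPG, so it cannot serve for smaller $k$. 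Matching the gadgets to the ranges $k=1$, $k\ge 2$, and planar $k\ge 3$ is thus the single place where attention is required; everything else is immediate from the quoted lemmas.
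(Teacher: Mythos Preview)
Your proposal is correct and follows the same approach as the paper: the theorem is an immediate corollary of \Cref{keylemma} together with membership in $\mathsf{NP}$ and the planarity of $G_3(\Phi)$, and you have unpacked the case split $k=1$, $k\ge 2$, planar $k\ge 3$ exactly as intended. If anything, your write-up is more explicit than the paper's own, which simply asserts the theorem after \Cref{keylemma}; your discussion of why $G_1$ cannot serve for $k\ge 2$ and $G_2,G_3$ cannot serve for smaller $k$ is a helpful clarification.
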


Observe that, by \Cref{keylemma}, the graph $G_3(\Phi)$ is CPG if and only if $\Phi$ is satisfiable. Combining this with \Cref{innp}, we immediately obtain the following. 

\begin{theorem}
{\sc Recognition} is $\mathsf{NP}$-complete in CPG and remains $\mathsf{NP}$-complete in planar CPG.
\end{theorem}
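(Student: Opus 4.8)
The plan is to assemble the two ingredients already at hand: membership in $\mathsf{NP}$, supplied by \Cref{innp}, and the polynomial-time reduction engineered through the graph $G_3(\Phi)$, whose correctness is guaranteed by \Cref{keylemma}.

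First I would dispatch membership in $\mathsf{NP}$. For {\sc Recognition} in CPG this is exactly the second assertion of \Cref{innp}: a CPG graph on $n$ vertices admits a representation on a grid of area $O(n^2)$, so the sequences $s(P)$ of endpoints and bend-points of all paths form a polynomial certificate that can be verified in polynomial time. For the planar variant the same certificate works, and one additionally checks planarity of the input in polynomial time; hence planar CPG recognition also lies in $\mathsf{NP}$.

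The heart of the argument is the reduction from \textsc{Planar Exactly 3-Bounded 3-Sat}. Given an instance $\Phi$, I would build the graph $G_3(\Phi)$ exactly as in \Cref{subsec:reduction}; this construction is carried out in polynomial time (the embedding $\mathcal{H}$ of $H$ is computed in linear time by \citep{tamassia}, and each gadget and connector has constant size). The crucial observation is that a graph is CPG if and only if it is $B_k$-CPG for some $k \geq 0$: any CPG representation consists of finitely many paths, each with a finite number of bends, so it is a $k$-bend representation for $k$ equal to the maximum number of bends occurring, while conversely every $k$-bend CPG representation is a CPG representation. Consequently, the statement in \Cref{keylemma} that $G_3(\Phi)$ has no $k$-bend CPG representation for any $k \geq 0$ when $\Phi$ is unsatisfiable is precisely the assertion that $G_3(\Phi)$ is \emph{not} CPG; and when $\Phi$ is satisfiable, $G_3(\Phi)$ is $B_3$-CPG and in particular CPG. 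Thus $G_3(\Phi)$ is CPG if and only if $\Phi$ is satisfiable, which is the desired reduction and establishes $\mathsf{NP}$-hardness of {\sc Recognition} in CPG.

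Finally, since $G_3(\Phi)$ is planar, the very same reduction maps satisfiable instances to planar CPG graphs and unsatisfiable instances to planar non-CPG graphs, so {\sc Recognition} remains $\mathsf{NP}$-hard when the input is restricted to planar graphs. Combining this with the $\mathsf{NP}$ membership from the first step yields $\mathsf{NP}$-completeness in both CPG and planar CPG. I do not expect any genuine obstacle here, as all the technical work is already absorbed into \Cref{innp} and \Cref{keylemma}; the only point requiring care is the equivalence between CPG membership and $B_k$-CPG membership for some $k$, which is exactly what licenses using the unbounded-bend conclusion of the $i=3$ case (rather than the merely $1$-bend conclusion available for $G_1(\Phi)$).
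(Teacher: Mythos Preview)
Your proposal is correct and follows essentially the same approach as the paper: combine \Cref{innp} for membership in $\mathsf{NP}$ with the observation that, by \Cref{keylemma}, the planar graph $G_3(\Phi)$ is CPG if and only if $\Phi$ is satisfiable. The paper states this in two lines; your version simply spells out the equivalence between being CPG and being $B_k$-CPG for some $k$, which is the one point the paper leaves implicit.
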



\section{Complexity results for $B_0$-CPG graphs}
\label{sec:complexity}

In this section, we show the $\mathsf{NP}$-completeness of {\sc Independent Set} and {\sc Clique Cover} restricted to $B_0$-CPG graphs. To this end, we first state the following two auxiliary results.

\begin{lemma}
\label{maxdeg3}
For any $k=0,1$ and any subcubic triangle-free $B_k$-CPG graph $G$, we can find in polynomial time a $k$-bend CPG representation of $G$ such that the following hold:
\begin{itemize}
\item[(a)] Paths pairwise touch at most once;
\item[(b)] A path $P$ strictly contains one endpoint of another path if and only if the vertex corresponding to $P$ is cubic;
\item[(c)] No path touches another path at a bend-point.
\end{itemize}
\end{lemma}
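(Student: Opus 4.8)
The plan is to start from an arbitrary $k$-bend CPG representation $\mathcal{R}$ of $G$ (which exists since $G$ is $B_k$-CPG) and to clean it up by a bounded sequence of local surgeries, applying the grid refinement operation of \Cref{sec:prel} whenever extra room is needed. The first observation is that triangle-freeness already rules out most pathological contacts: if three paths shared a grid-point they would pairwise touch, so their three vertices would form a triangle. Hence $\mathcal{R}$ contains no $3$-contact point (and a fortiori no $4$-contact point), so \emph{every} contact point is a $2$-contact point and corresponds to a unique edge of $G$; moreover, since paths are pairwise interiorly disjoint, at every contact point at least one of the two paths ends. When $k=0$ the paths are straight segments, so (c) is vacuous and (a) holds automatically (two interiorly disjoint axis-parallel segments meet in at most one point), and only (b) needs attention. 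I would therefore perform the surgeries in the order (c), then (a), then (b), checking at each stage that the operation preserves the previously established properties.

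For (c), consider a contact point $p$ that is a bend-point of some path $P$; then $p\in\mathring{P}$, so the other path $P'$ ends at $p$ and arrives along one of the two grid-edges not used by the bend of $P$. After refining the grid to create room, I would slide the endpoint of $P'$ off $p$ onto an interior point of one of the two straight arms of $P$ incident to $p$, rerouting $P'$ only inside a small box around $p$ so that the rest of $\mathcal{R}$ is untouched. This is the delicate step for $k=1$: the relocation may require $P'$ to acquire a bend near its tip, which is only admissible if its last segment enters $p$ straight, so a short case analysis is needed on the local geometry (which arm of $P$ we slide onto, the direction from which $P'$ reaches $p$, and whether $P'$ has already spent its single bend); in the remaining configurations one instead shifts the bend of $P$ itself. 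I expect this one-bend-budget bookkeeping to be the main obstacle, and it is exactly the reason the statement is restricted to $k\le 1$.

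Once (c) holds, I would enforce (a) by repeatedly removing redundant contacts: if two paths touch at two points, then at one of them, say $p$, some path ends, and I simply retract that endpoint inward along its own terminal segment (by less than one refined grid-step) so as to detach it from the other path. Since $p$ is a $2$-contact point, no other adjacency is affected, and retraction only shortens a path, so it can neither create a new bend nor a new contact, and in particular preserves (c). Finally, for (b) I would truncate every path $P_u$ to the minimal subpath containing all of its contact points. Because the discarded tails carry no contacts, this destroys no adjacency and again introduces no bends; the two extreme contacts of $P_u$ become its endpoints while any remaining contacts stay interior. A non-cubic vertex then has at most two contacts, both at endpoints, so $P_u$ contains no endpoint of another path; a cubic vertex has three contacts and only two endpoints, so exactly one neighbour must end in $\mathring{P_u}$, and the interiorly-disjoint condition forces that neighbour's point on $P_u$ to be its own endpoint. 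This yields both directions of (b) while preserving (a) and (c). Each surgery is local and bounded, only polynomially many are performed, and the grid stays of polynomial size, so the whole procedure runs in polynomial time.
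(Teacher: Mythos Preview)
Your proposal follows the same overall strategy as the paper—start from an arbitrary $k$-bend representation and clean it up by local surgeries and path-shortenings—and your treatments of (a) and (b) match the paper's almost exactly (shorten to kill a redundant contact; truncate to the extreme contact points). Your preliminary observation that triangle-freeness rules out $3$-contact points is correct and useful.

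The substantive gap is in step (c). You propose to reroute $P'$ \emph{inside a small box around $p$}, and you correctly flag the one-bend budget as the obstacle; but your fallback ``shift the bend of $P$ itself'' does not work. If $P$ bends at $p=(x,y)$ going up and right and $P'$ arrives vertically from below, moving $P$'s corner down to $(x,y-\sigma')$ makes $P$ and $P'$ overlap on $[(x,y-\sigma'),(x,y)]$, while moving it right to $(x+\sigma',y)$ takes $P$ off $p$ entirely and destroys the contact; either move also displaces one whole arm of $P$, disturbing unrelated contacts. And rerouting $P'$ locally onto $P$'s horizontal arm forces a new bend on $P'$ near $p$, which is fatal when $P'$ has already spent its bend. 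The paper avoids all of this with a \emph{non-local} move: after one grid refinement with new step $\sigma'$, it shifts to column $x+\sigma'$ every endpoint $(x,y')$ with $y'\le y$ and every full vertical segment on column $x$ lying at or below row $y$. This slides $P'$'s entire terminal vertical segment onto the fresh (empty) column so that its tip lands on the interior of $P$'s horizontal arm, while adjacent horizontal segments merely lengthen or shorten by $\sigma'$—hence no path gains a bend. Because the shift is applied uniformly to everything on column $x$ at or below $y$, contacts among those paths are preserved, and the target column was empty, so nothing new is created. This column-shift is the missing ingredient in your argument for (c); with it, your outline becomes a complete proof.
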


\begin{proof}
Let $\mathcal{R} = (\mathcal{G},\mathcal{P})$ be a $k$-bend CPG representation of $G$. Observe first that, since $k \leq 1$, paths pairwise touch at most twice. Consider a path $P \in \mathcal{P}$ with sequence $s(P)$ given by $(x_1,y_1), (x_2,y_2), \dots, (x_{\ell_P},y_{\ell_P})$. We can check in linear time whether there exists $P'$ such that $P$ and $P'$ touch twice. Indeed, this happens if and only if either both endpoints of a path belong to the other (say, without loss of generality, both endpoints of $P$ belong to $P'$) or one endpoint of $P$ belongs to $P'$ and one endpoint of $P'$ belongs to $P$. Consider the first case i.e., $(x_1,y_1)$ and $(x_{\ell_P},y_{\ell_P})$ both belong to $P'$. Since $G$ is triangle-free, these two grid-points belong to no path other than $P$ and $P'$. Since $k \leq 1$, $P \cap P' = \{(x_1,y_1),(x_{\ell_P},y_{\ell_P})\}$ and both $P$ and $P'$ have a bend (in particular, $\ell_P = 3$). Now let $(x,y)$ be the contact point in $P$ closest to $(x_1,y_1)$ i.e., on the portion of $P$ from $(x_1,y_1)$ to $(x,y)$ there is no grid-point belonging to a path distinct from $P$. Notice that such a grid-point can be found in polynomial time by considering all endpoints of paths $P'' \neq P$ belonging to $P$ and sorting them in their order of appearance on $P$, with $(x_1,y_1)$ being the smallest such point. If $(x,y) = (x_{3},y_{3})$, we replace $s(P)$ with the sequence $(x_2,y_2), (x_3,y_3)$. Otherwise, $(x,y) \neq (x_{3},y_{3})$ and if $(x,y) \in [(x_{1},y_{1}),(x_2,y_2)]$, we replace $s(P)$ with the sequence $(x,y), (x_2,y_2), (x_{3},y_{3})$, whereas if $(x,y) \in [(x_{2},y_{2}),(x_3,y_3)]$, we replace $s(P)$ with the sequence $(x,y), (x_3,y_3)$. The case where one endpoint of $P$ belongs to $P'$ and one endpoint of $P'$ belongs to $P$ is addressed similarly. By proceeding in this way for all the paths in $\mathcal{P}$, we update the $k$-bend CPG representation of $G$ to obtain a new representation $\mathcal{R} = (\mathcal{G},\mathcal{P})$ satisfying (a).

For each $P \in \mathcal{P}$, with sequence $s(P)$ given by $(x_1,y_1), \dots, (x_{\ell_P},y_{\ell_P})$, we now check in linear time whether $(x_1,y_1)$ is a free endpoint. If it is not, we leave $s(P)$ as it is. Otherwise, we find in polynomial time the contact point $(x, y)$ in $P$ closest to $(x_1,y_1)$. If $(x,y) = (x_{\ell_P},y_{\ell_P})$, the vertex corresponding to $P$ has degree one and does not strictly contain a grid-point which is an endpoint of some other path and so we leave the sequence $s(P)$ as it is. Otherwise, $(x,y) \notin \{(x_{\ell_P},y_{\ell_P}), (x_1, y_{1})\}$, and we replace $s(P)$ with the sequence $(x,y),(x_{i},y_{i}), (x_{\ell_P},y_{\ell_P})$, where $i \in \{2,3\}$ is such that $(x, y) \in [(x_{i-1},y_{i-1}),(x_i,y_i)]$. For each $P \in \mathcal{P}$, with sequence $s(P)$ given by $(x_1,y_1), \dots, (x_{\ell_P},y_{\ell_P})$, we finally check in linear time whether $(x_{\ell_P},y_{\ell_P})$ is a free endpoint and proceed with the shortenings as above. In this way, we obtain a $k$-bend CPG representation $\mathcal{R}$ satisfying (b). As the only operations done are shortenings of paths, it still satisfies (a).

Given a path $P \in \mathcal{P}$, we can clearly check in linear time whether a path touches $P$ at a bend-point. Suppose that there exists a path $P'$ touching $P$ at a bend-point $(x,y)$ and suppose, without loss of generality, that $P$ lies above and on the right of $(x,y)$ and that $P'$ uses the grid-edge below $(x,y)$ as shown in \Cref{fig:bendtouch} (the other cases are symmetric). 

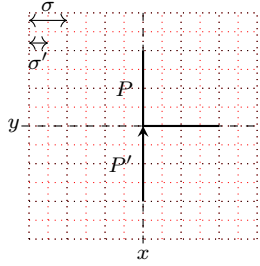
\begin{figure}[htb]
\centering
\begin{tikzpicture}

\draw[step=.25,red,thin,dotted] (-1.5,-1.5) grid (1.5,1.5);

\draw[step=.5,black,thin,dotted] (-1.5,-1.5) grid (1.5,1.5);

\draw[-,thick] (0,1) -- (0,0) node[midway,left] {\footnotesize $P$} -- (1,0);

\draw[->,>=stealth,thick] (0,-1) -- (0,0) node[midway,left] {\footnotesize $P'$};

\draw[<->,thin] (-1.5,1.4) -- (-1,1.4) node[midway,above] {\footnotesize $\sigma$};

\draw[<->,thin] (-1.5,1.1) -- (-1.25,1.1) node[midway,below] {\footnotesize $\sigma'$};

\node[draw=none] at (-1.7,0) {\footnotesize $y$};

\node[draw=none] at (0,-1.7) {\footnotesize $x$};

\draw[dashed] (-1.6,0) -- (1.5,0);

\draw[dashed] (0,1.5) -- (0,-1.6);

\end{tikzpicture}

\caption{$P'$ touches $P$ at one of its bend-point $(x,y)$.}

\label{fig:bendtouch}
\end{figure}

Since $G$ is triangle-free, the grid-point $(x,y)$ belongs to no path other than $P$ and $P'$. We first refine the grid once by setting the grid-step to $\sigma' = \sigma/2$ so as to ensure that every subsequent transformation does not introduce unwanted contacts. Then, for any path having either one endpoint $(x,y')$ on column $x$ with $y' \leq y$, or a segment $[(x,y_1),(x,y_2)]$ on column $x$ with $y_1 < y_2 \leq y$, we do the following. In the first case, we replace $(x,y')$ with $(x + \sigma',y')$ in the sequence describing the considered path, whereas in the second case, we replace $(x,y_1)$ with $(x+\sigma',y_1)$ and $(x,y_2)$ with $(x + \sigma',y_2)$. It is easy to see that repeating these operations for each path $P$, we obtain a $k$-bend CPG representation satisfying (a), (b) and (c). 
\end{proof}

\begin{lemma}[Folklore]
\label{npcomp}
If {\sc Independent Set} is $\mathsf{NP}$-complete for a graph class $\mathcal{G}$, then it is $\mathsf{NP}$-complete for 2-subdivisions of graphs in $\mathcal{G}$.
\end{lemma}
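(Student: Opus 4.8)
The plan is to reduce {\sc Independent Set} on $\mathcal{G}$ to {\sc Independent Set} on the 2-subdivisions of graphs in $\mathcal{G}$, the whole reduction resting on an exact formula for how 2-subdivision changes the independence number. Write $G''$ for the 2-subdivision of $G$, so that each edge $e=uv\in E(G)$ is replaced by an internal path $u\,a_e\,b_e\,v$. I claim
\[
\alpha(G'') = \alpha(G) + |E(G)|.
\]
Granting this, membership in $\mathsf{NP}$ is immediate (a candidate independent set is a polynomial certificate), and the map $(G,k)\mapsto (G'',\,k+|E(G)|)$ is a polynomial-time reduction whose output lies in the target class by construction; it is correct precisely because the identity gives $\alpha(G)\ge k \iff \alpha(G'')\ge k+|E(G)|$.

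For the lower bound $\alpha(G'')\ge \alpha(G)+|E(G)|$, I would start from a maximum independent set $S$ of $G$ and grow it. Since $S$ is independent, every edge $e=uv$ has at most one endpoint in $S$, and in the internal path $u\,a_e\,b_e\,v$ I can always add exactly one of $a_e,b_e$ without conflict (add $b_e$ if $u\in S$, and $a_e$ otherwise). Internal vertices of distinct edges are nonadjacent, so these additions are mutually compatible, yielding an independent set of $G''$ of size $\alpha(G)+|E(G)|$.

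The reverse inequality is the crux. Taking a maximum independent set $I$ of $G''$, I would set $S=I\cap V(G)$ and let $F\subseteq E(G)$ be the set of edges with both endpoints in $S$. For $e=uv\in F$, having $u,v\in I$ forces $a_e,b_e\notin I$ (as $a_e\sim u$ and $b_e\sim v$), so the path of $e$ contributes no internal vertex to $I$; for every other edge the two internal vertices are adjacent, so at most one lies in $I$. Counting internal vertices gives $|I|\le |S| + (|E(G)|-|F|)$. Separately, $G[S]$ has $|F|$ edges, so removing one endpoint of each edge of $F$ turns $S$ into an independent set of $G$ of size at least $|S|-|F|$. Combining, $\alpha(G)\ge |S|-|F|\ge |I|-|E(G)|=\alpha(G'')-|E(G)|$, which is the desired inequality.

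I expect this reverse inequality to be the main obstacle, because $I\cap V(G)$ need \emph{not} be independent in $G$: two $G$-adjacent vertices sit at distance three in $G''$, so both may belong to $I$. The delicate point is the bookkeeping with the ``violating'' edge set $F$, where the internal vertices lost on the edges of $F$ must be made to cancel exactly against the vertices one deletes from $S$ to restore independence. Once these two error terms are seen to coincide, both inequalities close and the identity—and hence the reduction—follows.
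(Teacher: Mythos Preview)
Your argument is correct. The paper itself does not supply a proof of this lemma: it is stated as ``Folklore'' and left unproved, so there is no paper proof to compare against. Your route via the identity $\alpha(G'')=\alpha(G)+|E(G)|$ is the standard one, and both directions are handled cleanly; in particular, the accounting with the violating edge set $F$ is exactly the right way to deal with the fact that $I\cap V(G)$ need not be independent in $G$, and your observation that deleting one endpoint from each edge of $F$ removes at most $|F|$ vertices while killing all edges of $G[S]$ closes the bound.
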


\begin{theorem}
\label{thm:is}
{\sc Independent Set} is $\mathsf{NP}$-complete for triangle-free subcubic $B_0$-CPG graphs.
\end{theorem}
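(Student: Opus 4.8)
The plan is to reduce from {\sc Independent Set} on cubic planar graphs, which is well known to be $\mathsf{NP}$-complete, so that (since {\sc Independent Set} is trivially in $\mathsf{NP}$) only hardness needs to be established. Given a cubic planar graph $G$, I would first apply \Cref{npcomp}: the $2$-subdivision $G''$ of $G$ inherits the $\mathsf{NP}$-hardness of {\sc Independent Set}. It then remains to check that $G''$ lies in the target class, i.e.\ that it is triangle-free, subcubic and $B_0$-CPG, and that the whole construction (together with a representation of $G''$) is produced in polynomial time.

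The first two properties are immediate from the definition of the $2$-subdivision. Every edge of $G$ is replaced by a path on three edges through two new vertices of degree $2$; hence the only vertices of degree $3$ in $G''$ are the original vertices of $G$, all others having degree $2$, so $G''$ is subcubic. Moreover no two original vertices are adjacent in $G''$ and the new vertices induce disjoint paths, so $G''$ contains no triangle.

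The crux, where I expect the real work, is to exhibit a $B_0$-CPG representation of $G''$. Here I would start from a visibility representation of the planar graph $G$: each vertex is drawn as a horizontal segment (a \emph{bar}) and each edge $uv$ as a vertical segment whose endpoints lie on the bars of $u$ and $v$ and whose interior meets no other segment. Keeping the bars as the paths $P_u$ of the original vertices, I would realize the $2$-subdivision by splitting each vertical edge-segment of $uv$ into two collinear vertical sub-segments meeting at an interior point, and take these two pieces to be the paths $P_{v_1}$ and $P_{v_2}$ of the two subdivision vertices. Then $P_{v_1}$ touches $P_u$ at its top endpoint and $P_{v_2}$ at the split point, $P_{v_2}$ touches $P_{v_1}$ and the bar $P_v$, and each bar $P_u$ of a cubic vertex is touched at three distinct points by the three segments emanating from it. Since all paths are straight segments and, by interior-disjointness of the visibility representation, the only contacts created are precisely those prescribed by the edges of $G''$, this is a $0$-bend CPG representation of $G''$.

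The main obstacle is to make this geometric realization airtight. One must choose the bars long enough and in sufficiently general position that the three segments incident to a cubic vertex attach at three distinct grid-points and that no bar endpoint accidentally lies on another segment, refining the grid as in \Cref{sec:prel} if necessary so that all endpoints and the split points fall on grid-points; one must also verify that splitting an edge-segment introduces no spurious contact beyond the single shared endpoint defining the edge $v_1v_2$. A minor technical point is the existence of a visibility representation in which each edge is a single vertical segment touching exactly its two bars; for this it is convenient to assume $G$ is $2$-connected, which may be arranged without affecting the hardness of {\sc Independent Set}. Finally, since the reduction outputs the representation explicitly, the conclusion holds even when a CPG representation is supplied together with the graph.
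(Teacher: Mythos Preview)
Your proposal is correct but follows a genuinely different route from the paper.

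The paper does \emph{not} show that the $2$-subdivision $G''$ of a cubic planar graph is itself $B_0$-CPG. Instead, it starts from $G''$, invokes the cited fact that planar triangle-free graphs are $B_1$-CPG to obtain a $1$-bend representation, normalises it via \Cref{maxdeg3}, and then replaces every path in that representation by five collinear $0$-bend subpaths. This yields a $B_0$-CPG representation of a \emph{new} graph $G'$ in which every vertex of $G''$ has been blown up into a $5$-vertex path, and an additional argument shows $\alpha(G') = \alpha(G'') + 2|V(G'')|$.

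Your route is more direct: you keep $G''$ as the target instance and argue that it is already $B_0$-CPG via a bar visibility representation of the original cubic planar graph, splitting each vertical edge-segment once to realise the two subdivision vertices. This avoids both the auxiliary graph $G'$ and the $\alpha$-relation, at the cost of importing visibility representations as an external tool (and the mild care you note about distinct attachment points, grid alignment, and avoiding spurious contacts). The paper's approach, by contrast, stays entirely within the CPG toolkit developed in the paper and demonstrates a reusable ``bend-removal by path subdivision'' trick that is also exploited in the proof of \Cref{thm:cc}. Both arguments are valid; yours is shorter for this specific theorem, while the paper's integrates more tightly with the surrounding material.

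One small remark: the $2$-connectivity assumption you mention is not really needed. A \emph{weak} bar visibility representation exists for every planar graph, and you only need to draw the edges actually present in $G$ as pairwise disjoint vertical segments, not to exclude extra visibilities; so the reduction goes through for arbitrary cubic planar graphs without modification.
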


\begin{proof}
We exhibit a polynomial reduction from {\sc Independent Set} restricted to 2-subdivisions of cubic planar graphs. Since {\sc Independent Set} is $\mathsf{NP}$-complete for cubic planar graphs \cite{mohar}, Lemma \ref{npcomp} implies that it remains $\mathsf{NP}$-complete for the considered class.

Given a 2-subdivision $G=(V,E)$ of a cubic planar graph, we construct a $B_0$-CPG graph $G'$ as follows. Since $G$ is planar and triangle-free, it follows from \cite{chaplick12} that we can obtain a $1$-bend CPG representation of $G$ in linear time. By \Cref{maxdeg3}, we can further obtain in polynomial time a $1$-bend CPG representation $\mathcal{R} = (\mathcal{G},\mathcal{P})$ of $G$ in which paths pairwise touch at most once, a path strictly contains one endpoint of another path if and only if its corresponding vertex is cubic and no path touches another path at its bend-point. We then refine the grid four times, so that each segment of a path contains at least $16$ grid-edges and let $\sigma$ be the new grid-step.

For a path $P \in \mathcal{P}$, let $\mathcal{P}_P = \{ P' \in \mathcal{P} : P \cap P' \neq \varnothing\}$ be the subset of paths touching $P$. If $P$ has no bend, consider the middle grid-point $p$ of $P$ with coordinates $(x_p,y_p)$, where $x_p=x_1 + (x_2-x_1)/2$ and $y_p = y_1$, if $P$ is horizontal with $x_1 < x_2$, and $x_p=x_1$ and $y_p=y_1 + (y_2-y_1)/2$, if $P$ is vertical with $y_1 < y_2$. If the grid-point $p$ belongs to a path other than $P$, we set $p$ to be the grid-point with coordinates $(x_p+\sigma,y_p)$, if $P$ is horizontal, and $(x_p,y_p+\sigma)$, if $P$ is vertical. Now $p$ naturally divides $P$ into two line segments $S_1=[(x_1,y_1),(x_p,y_p)]$ and $S_2=[(x_p,y_p),(x_2,y_2)]$ partitioning $\mathcal{P}_P$ into $\mathcal{P}_P^i =\{P' \in \mathcal{P}_P : P' \cap S_i \neq \varnothing\}$, for $i = 1,2$ (see \Cref{fig:nbp}). Note that both $S_1$ and $S_2$ contain at least $4$ grid-edges. If $P$ has a bend, we let $\mathcal{P}_P^1$ and $\mathcal{P}_P^2$ be the subsets of paths touching the horizontal segment $S_{1}$ and the vertical segment $S_{2}$ of $P$, respectively (see \Cref{fig:bp}). $(\mathcal{P}_P^1,\mathcal{P}_P^2)$ is a partition of $\mathcal{P}_P$ in this case as well.

\begin{figure}[htb]

\centering

\begin{subfigure}[b]{.45\textwidth}

\centering

\begin{tikzpicture}[scale=.6]

\draw[<->,thick,>=stealth] (-3.1,0) -- (2.1,0) node[above right] {\footnotesize $P$};

\node[circ,label=below:{\footnotesize $p$}] (p) at (-.6,0) {};

\draw[thick,red,>=stealth] (-3,0) -- (p) node[midway,below] {\footnotesize $S_1$};

\draw[thick,blue,>=stealth] (p) -- (2,0) node[midway,below] {\footnotesize $S_2$};

\draw[->,thick,>=stealth] (-3.1,.5) -- (-3.1,0);

\draw[thick,dotted] (-3.1,.75) -- (-3.1,.5);

\draw[->,thick,>=stealth] (-1,.5) -- (-1,0);

\draw[thick,dotted] (-1,.75) -- (-1,.5);

\draw[->,thick,>=stealth] (2.1,-.5) -- (2.1,0);

\draw[thick,dotted] (2.1,-.75) -- (2.1,-.5);

\end{tikzpicture}

\caption{$P$ has no bend.}

\label{fig:nbp}

\end{subfigure}
\begin{subfigure}[b]{.45\textwidth}

\centering

\begin{tikzpicture}[scale=.6]

\draw[<->,thick,>=stealth] (3.1,0) -- (0,0) -- (0,2.1) node[above right] {\footnotesize $P$};

\draw[thick,blue,>=stealth] (0,2) -- (0,0) node[midway,left] {\footnotesize $S_2$};

\draw[thick,red,>=stealth] (0,0) -- (3,0) node[midway,above] {\footnotesize $S_1$};

\draw[->,thick,>=stealth] (.7,.5) -- (.7,0);

\draw[thick,dotted] (.7,.75) -- (.7,0);

\draw[->,thick,>=stealth] (-.5,2.1) -- (0,2.1);

\draw[thick,dotted] (-.75,2.1) -- (-.5,2.1);

\draw[<-,thick,>=stealth] (3.1,0) -- (3.1,.5);

\draw[thick,dotted] (3.1,.5) -- (3.1,.75);

\end{tikzpicture}

\caption{$P$ has a bend.}

\label{fig:bp}

\end{subfigure}

\caption{Partitioning the set of paths $\mathcal{P}_P$ touching $P$ into the subset of paths $\mathcal{P}_P^1$ touching $S_1$ and the subset of paths $\mathcal{P}_P^2$ touching $S_2$.}

\end{figure}
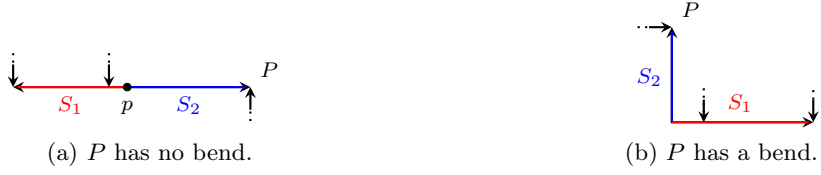

We now subdivide each path $P$ into five 0-bend paths $P_1$, $P_2$, $P^1$, $P^2$, $P^3$, in such a way that $P^1$ (resp. $P^2$; $P^3$) touches only $P_1$ and $P^2$ (resp. $P^1$ and $P^3$; $P^2$ and $P_2$) as depicted in \Cref{fig:sp}. More precisely, if $P$ has no bend, suppose without loss of generality that $P$ is horizontal and $\mathcal{P}_P^2 = \varnothing$; then, $P_1 = [(x_1,y_1),(x_p,y_p)]$, $P^j = [(x_p+(j-1)\sigma,y_p),(x_p+j\sigma,y_p)]$ for $j=1,2,3$, and $P_2 = [(x_p+3\sigma,y_p),(x_2,y_2)]$. On the other hand, if $P$ has a bend, suppose without loss of generality that $\mathcal{P}_P^2 = \varnothing$; then, $P_1=[(x_1,y_1),(x_2,y_2)]$, $P^j=[(x_2,y_2+(j-1)\sigma),(x_2,y_2+j\sigma)]$ for $j=1,2,3$ assuming that $y_2 < y_3$ (otherwise, subtract the multiples of $\sigma$), and $P_2 = [(x_2,y_2+3\sigma),(x_3,y_3)]$.

\begin{figure}[htb]

\centering

\begin{subfigure}[b]{.45\textwidth}

\centering

\begin{tikzpicture}[scale=.7]

\draw[<->,thick,>=stealth] (-2.5,0) -- (0,0) node[midway,below] {\footnotesize $P_1$};

\draw[<->,thick,>=stealth] (1.5,0) -- (2.5,0) node[midway,below] {\footnotesize $P_2$};

\draw[<->,thick,>=stealth] (0,0) -- (.5,0) node[midway,below] {\footnotesize $P^1$};

\draw[<->,thick,>=stealth] (.5,0) -- (1,0) node[midway,below] {\footnotesize $P^2$};

\draw[<->,thick,>=stealth] (1,0) -- (1.5,0) node[midway,below] {\footnotesize $P^3$};

\draw[->,thick,>=stealth] (-2.5,.5) -- (-2.5,0);

\draw[thick,dotted] (-2.5,.75) -- (-2.5,.5);

\draw[->,thick,>=stealth] (-.5,.5) -- (-.5,0);

\draw[thick,dotted] (-.5,.75) -- (-.5,.5);

\draw[->,thick,>=stealth] (2.5,-.5) -- (2.5,0);

\draw[thick,dotted] (2.5,-.75) -- (2.5,-.5);

\end{tikzpicture}

\caption{$P$ has no bend.}

\end{subfigure}
\begin{subfigure}[b]{.45\textwidth}

\centering

\begin{tikzpicture}[scale=.7]

\draw[<->,thick,>=stealth] (0,2.5) -- (0,1.5) node[midway,left] {\footnotesize $P_2$};

\draw[<->,thick,>=stealth] (0,0) -- (2.5,0) node[midway,above] {\footnotesize $P_1$};

\draw[<->,thick,>=stealth] (0,0) -- (0,.5) node[midway,left] {\footnotesize $P^1$};

\draw[<->,thick,>=stealth] (0,.5) -- (0,1) node[midway,left] {\footnotesize $P^2$};

\draw[<->,thick,>=stealth] (0,1) -- (0,1.5) node[midway,left] {\footnotesize $P^3$};

\draw[->,thick,>=stealth] (.5,.5) -- (.5,0);

\draw[thick,dotted] (.5,.75) -- (.5,0);

\draw[->,thick,>=stealth] (-.5,2.5) -- (0,2.5);

\draw[thick,dotted] (-.75,2.5) -- (0,2.5);

\draw[<-,thick,>=stealth] (2.5,0) -- (2.5,.5);

\draw[thick,dotted] (2.5,.5) -- (2.5,.75);

\end{tikzpicture}

\caption{$P$ has a bend.}

\end{subfigure}

\caption{Subdividing a path $P$ into five $0$-bend paths such that, for any $j=1,2,3$, $P^j$ touches no path outside of $\{P^1,P^2,P^3,P_1,P_2\}$ and paths touching $P$ touch either $P_1$ or $P_2$.}

\label{fig:sp}

\end{figure}
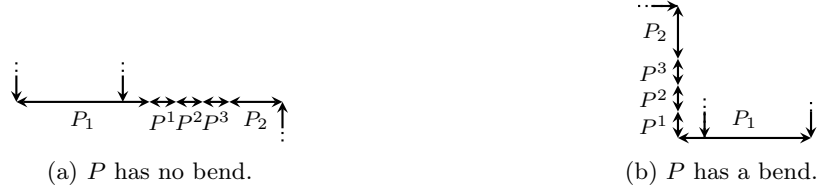

Let $G'$ be the graph corresponding to the resulting 0-bend representation. Clearly, $G'$ is a graph obtained from $G$ by replacing every vertex $u \in V$ with a path $u_1u^1u^2u^3u_2$, where $u_i$ (resp. $u^i$) corresponds to $P_i$ (resp. $P^i$), and apportioning the neighborhood of $u$ among $u_1$ and $u_2$. We now show that $\alpha (G') = \alpha (G) + 2|V|$, which would conclude the proof.

Given a maximum independent set $S$ of $G$, we construct an independent set $S'$ of $G'$ as follows. If $u \in S$, then add $u_1$, $u_2$ and $u^2$ to $S'$. Otherwise, add $u^1$ and $u^3$ to $S'$. Clearly, $S'$ is an independent set of $G'$ and $\alpha (G') \geq |S'| = |S| + 2|V| = \alpha(G) + 2|V|$. 

Conversely, given a maximum independent set $S'$ of $G'$, we construct an independent set $S$ of $G$ as follows. Observe that by maximality of $S'$, for any vertex $u \in V$, at least two vertices of $\{u_1,u_2,u^1,u^2,u^3\}$ are in $S'$. Indeed, if for some $u \in V$ we have that $S' \cap \{u_1,u_2,u^1,u^2,u^3\}$ contains at most one vertex $x$, then $(S'\backslash \{x\}) \cup \{u^1,u^3\}$ is a strictly larger independent set of $G'$. Furthermore, we may assume that for any vertex $u \in V$, either both $u_1$ and $u_2$ are in $S'$ or none of them is. Indeed, if $u_i$ is the only such vertex in $S'$, it suffices to consider the maximum independent set $(S' \backslash \{u_i,x\}) \cup \{u^1,u^3\}$, for the unique $x \in S' \cap \{u^1,u^2,u^3\}$. Also note that if both $u_1$ and $u_2$ are in $S'$, then $u^2$ is also in $S'$. We now add $u \in V$ to $S$ if and only if both $u_1$ and $u_2$ are in $S'$. Clearly, $S$ is an independent set of $G$ and so $\alpha (G) \geq |S| = |S'| - 2|V| = \alpha(G') - 2|V|$.
\end{proof}

\textit{Remark.} Since any triangle-free $B_0$-CPG graph is planar \cite{cpg}, Theorem \ref{thm:is} implies that {\sc Independent Set} is $\mathsf{NP}$-complete for planar $B_0$-CPG graphs.\\

In the following, a \textit{snake} is a $0$-bend CPG representation of a path.

\begin{theorem}
\label{thm:cc}
{\sc Clique Cover} is $\mathsf{NP}$-complete for $B_0$-CPG line graphs.
\end{theorem}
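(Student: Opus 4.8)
The plan is to reduce from \textsc{Independent Set} via the classical identification of the cliques of a line graph with the stars of its base graph. The key observation is that if $G$ is \emph{triangle-free} then $\theta(L(G)) = \beta(G)$: any clique of $L(G)$ of size at least three would consist of three pairwise adjacent edges of $G$, which (as a larger clique would create a triangle in $G$) must pass through a common vertex, so every maximal clique of $L(G)$ is a star $\delta(v) = \{e \in E(G) : v \in e\}$ for some non-isolated vertex $v$. Hence a clique cover of $L(G)$ of size $t$ corresponds exactly to a set of $t$ vertices whose stars cover $E(G)$, that is, a vertex cover of size $t$, and conversely; thus $\theta(L(G)) = \beta(G) = |V(G)| - \alpha(G)$. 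Computing $\theta(L(G))$ therefore solves \textsc{Independent Set} on $G$, and since \textsc{Clique Cover} is readily in $\mathsf{NP}$ (a list of at most $|V|$ vertex subsets certifies a cover), it remains to exhibit a triangle-free class on which \textsc{Independent Set} is hard and whose line graphs are $B_0$-CPG.

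For the source problem I would take $\mathcal{C}$ to be the class of $2$-subdivisions of cubic planar graphs: these are triangle-free and subcubic, and \textsc{Independent Set} is $\mathsf{NP}$-complete on $\mathcal{C}$ by \Cref{npcomp} together with its $\mathsf{NP}$-completeness on cubic planar graphs \cite{mohar} (exactly as invoked in the proof of \Cref{thm:is}). The crux is then to realize $L(G)$ as a $B_0$-CPG graph for $G \in \mathcal{C}$. The guiding idea is that a $B_0$-CPG representation whose paths are precisely the edges of a \emph{planar straight-line orthogonal drawing} of a graph has as its contact graph exactly the line graph of that graph: two axis-parallel edge-segments touch if and only if the corresponding edges share an endpoint, because in a planar drawing edges meet only at common vertices. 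So I want a drawing of a suitable subdivision of $G$ in which every edge is a single horizontal or vertical segment.

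To obtain such a drawing I would apply \citet{tamassia} to get, in linear time, a planar orthogonal grid drawing of $G$ in which each edge is a rectilinear path with finitely many bends, and then let $G^{\ast}$ be the subdivision of $G$ obtained by placing a new vertex at every bend, so that every edge of $G^{\ast}$ is a straight axis-parallel segment and the edges incident to a vertex leave through distinct ports. The segments of this drawing are pairwise interiorly disjoint $0$-bend paths that touch exactly at shared vertices of $G^{\ast}$, so their contact graph is $L(G^{\ast})$, which is thereby $B_0$-CPG with an explicitly constructed representation; moreover $G^{\ast}$ is a subdivision of a triangle-free graph and hence itself triangle-free, so the first paragraph yields $\theta(L(G^{\ast})) = \beta(G^{\ast})$.

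The remaining, and main, difficulty is bookkeeping: subdividing at bends changes the vertex-cover number by an amount I do not control a priori, since the number of bends on each edge is dictated by the drawing. I would resolve this by forcing every edge of $G$ to be subdivided an \emph{even} number of times in $G^{\ast}$ — adding a single extra bend (on a once-refined grid, so that there is room) to any edge whose bend count is odd, which keeps the drawing planar — and then invoking the standard fact that subdividing an edge twice increases the vertex-cover number by exactly $1$. Consequently $\beta(G^{\ast}) = \beta(G) + K$, where $K$ is half the total number of subdivisions and is computable in polynomial time, so that $\theta(L(G^{\ast})) = |V(G)| - \alpha(G) + K$ and the reduction goes through. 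Assembling these pieces gives the $\mathsf{NP}$-completeness of \textsc{Clique Cover} for $B_0$-CPG line graphs, with a representation produced along the way.
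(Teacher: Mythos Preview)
Your overall strategy is correct and in fact cleaner than the paper's. Both arguments use that $\theta(L(G))=\beta(G)$ for triangle-free $G$, but the paper first proves \Cref{thm:is} (via $B_1$-CPG representations of planar triangle-free graphs, Lemma~\ref{maxdeg3}, and a five-path subdivision) and then passes to the line graph by an intricate ``triangle-implant and snake-merge'' construction. Your route bypasses all of this: an orthogonal drawing of a subcubic planar graph directly yields a $B_0$-CPG representation of its line graph, since the straight edge-segments are interiorly disjoint $0$-bend paths touching exactly at shared vertices. This is a genuinely simpler argument for the same theorem.

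There is, however, a small but real flaw in your parity fix. You cannot ``add a single extra bend'' to a rectilinear grid-path while keeping its endpoints fixed: the parity of the number of bends is determined by whether the first and last segments have the same or different orientations, and in an orthogonal drawing those directions are the ports used at the endpoints, which are fixed. (Concretely, a straight horizontal segment between two same-row points admits no $1$-bend rerouting.) The remedy is immediate: instead of adding a bend, add an extra \emph{subdivision point} on one of the straight segments whenever $b_e$ is odd. After one grid refinement every segment has length at least~$2$, so such a point always exists; the two resulting collinear $0$-bend subpaths are still interiorly disjoint from everything else, and each edge of $G$ now carries an even number of subdivisions. Then the standard fact that a $2$-subdivision raises $\beta$ by exactly one (equivalently, \Cref{npcomp}) gives $\beta(G^{\ast})=\beta(G)+K$ with $K$ computable, and your reduction goes through.
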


\begin{proof}
We exhibit a polynomial reduction from {\sc Vertex Cover} restricted to 2-subdivisions of triangle-free subcubic $B_0$-CPG graphs, which is $\mathsf{NP}$-complete by Theorem \ref{thm:is} and Lemma \ref{npcomp} (recall that $S$ is an independent set of a graph $G$ if and only if $V(G) \setminus S$ is a vertex cover of $G$). Given a 2-subdivision $G'$ of a triangle-free subcubic $B_0$-CPG graph $G$, we show that its line graph $L(G')$ is $B_0$-CPG. Since for any triangle-free graph $H$ we have $\beta(H) = \theta(L(H))$ (see, e.g., \cite{munaro}), this would conclude the proof. 

Consider a $0$-bend CPG representation $\mathcal{R} = (\mathcal{G},\mathcal{P}) $ of $G$ in which a path strictly contains one endpoint of another path if and only if the corresponding vertex is cubic (see \Cref{maxdeg3}). We first show how to construct a $0$-bend CPG representation $\mathcal{R}'$ of the $2$-subdivision $G'$ of $G$. 

We start by refining the grid twice and let $\sigma$ be the new grid-step. In this way, any path corresponding to a vertex of degree at most $2$ has length at least $4$ and, for any path $P$ corresponding to a cubic vertex, both segments $[(x_1,y_1),(x,y)]$ and $[(x,y),(x_2,y_2)]$, where $(x,y)$ is the contact point contained in $P$, contain at least $4$ grid-edges. Now, for every contact point $p$ of $\mathcal{R}$, arbitrarily choose one path $P$ in $\mathcal{P}$ having $p$ as an endpoint, shorten $P$ so that $p$ is no longer an endpoint of $P$ and add two paths $P_1$ and $P_2$, with $P_1$ having $p$ as endpoint and touching $P_2$, and $P_2$ touching $P$. More precisely, suppose that $P$ is horizontal and that $p = (x_{2}, y_{2})$ is the right endpoint of $P$, the other being $(x_{1}, y_{1})$ (the other cases are symmetric). We set $s(P)=(x_1,y_1),(x_2-2\sigma,y_2)$ and add two paths $P_1$ and $P_2$ such that $s(P_j) = (x_2-j\sigma,y_2),(x_2-(j-1)\sigma,y_2)$, for $j=1,2$ (see \Cref{2sub} for some examples).

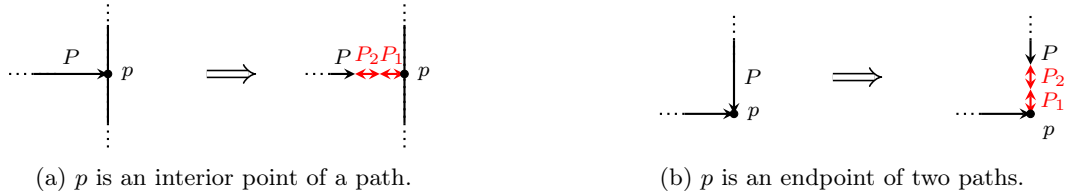
\begin{figure}[htb]

\begin{minipage}[b]{.45\textwidth}

\centering

\begin{subfigure}[b]{\linewidth}

\centering

\begin{tikzpicture}[scale=.65]

\draw[thick] (0,-1) -- (0,1);

\draw[thick,dotted] (0,-1.5) -- (0,1.5);

\node[circ,label=right:{\footnotesize $p$}] at (0,0) {};

\draw[->,thick,>=stealth] (-1.5,0) -- (0,0) node[midway,above] {\footnotesize $P$};

\draw[thick,dotted] (-2,0) -- (-1,0);

\draw[-Implies,line width=.6pt,double distance=2pt] (2,0) -- (3,0);

\draw[thick,dotted] (4,0) -- (4.5,0);

\draw[->,thick,>=stealth] (4.5,0) -- (5,0) node[midway,above] {\footnotesize $P$};

\draw[<->,thick,>=stealth,red] (5,0) -- (5.5,0) node[midway,above] {\footnotesize $P_2$};

\draw[<->,thick,>=stealth,red] (5.5,0) -- (6,0) node[midway,above] {\footnotesize $P_1$};

\draw[thick] (6,-1) -- (6,1);

\draw[thick,dotted] (6,-1.5) -- (6,1.5);

\node[circ,label=right:{\footnotesize $p$}] at (6,0) {};

\end{tikzpicture}

\caption{$p$ is an interior point of a path.}

\end{subfigure}
\end{minipage}
\hspace*{.5cm}
\begin{minipage}[b]{.45\textwidth}
\centering
\begin{subfigure}[b]{\linewidth}

\centering

\begin{tikzpicture}[scale=.65]

\draw[->,thick,>=stealth] (-1,0) -- (0,0);

\draw[thick,dotted] (-1.5,0) -- (-1,0);

\draw[->,thick,>=stealth] (0,1.5) -- (0,0) node[midway,right] {\footnotesize $P$};

\draw[thick,dotted] (0,2) -- (0,1);

\node[circ,label=right:{\footnotesize $p$}] at (0,0) {};

\node[invisible] at (-2.5,0) {};

\draw[-Implies,line width=.6pt,double distance=2pt] (2,.75) -- (3,.75);

\draw[->,thick,>=stealth] (5,0) -- (6,0);

\draw[thick,dotted] (4.5,0) -- (5,0);

\draw[<->,thick,>=stealth,red] (6,0) -- (6,.5) node[midway,right] {\footnotesize $P_1$};

\draw[<->,thick,>=stealth,red] (6,.5) -- (6,1) node[midway,right] {\footnotesize $P_2$};

\draw[<-,thick,>=stealth] (6,1) -- (6,1.5) node[midway,right] {\footnotesize $P$};

\draw[thick,dotted] (6,1.5) -- (6,2);

\node[circ,label=below right:{\footnotesize $p$}] at (6,0) {};

\end{tikzpicture}

\caption{$p$ is an endpoint of two paths.}

\end{subfigure}

\end{minipage}

\caption{2-subdividing the edge $e$ of $G$ represented by $p$.}

\label{2sub}

\end{figure}

Observe now that the $0$-bend representation $\mathcal{R}'$ of $G'$ thus obtained satisfies the following property: every maximal snake in the representation obtained from $\mathcal{R}'$ by removing every path whose corresponding vertex is cubic contains at least two paths lying either on the same row or on the same column. We now derive from $\mathcal{R}'$ a $0$-bend CPG representation of $L(G')$ as follows. We first refine the grid once so that the previously introduced paths have length at least two and let $\sigma$ be the new grid-step. Consider a path $P$ strictly containing an endpoint $p=(x,y)$ of another path $P'$ with $s(P) = (x_{1}, y_{1}), (x_{2}, y_{2})$ and $s(P') = (x'_1,y'_1)(x'_2,y'_2)$. Suppose, without loss of generality, that $P$ is vertical with $y_1 < y_2$ and $P'$ is horizontal with $x'_1 < x'_2$, and that $p$ is the left endpoint of $P'$, that is, $(x,y) = (x'_1,y'_1)$ (the other cases are treated similarly). We then split $P$ into two paths $P_1 = [(x_1,y_1)(x,y)]$ and $P_2=[(x,y),(x_2,y_2)]$ so that $p$ is an endpoint of these two paths, shorten $P'$ so that $p$ is no longer an endpoint of it by updating its sequence to $s(P') = (x + \sigma,y),(x'_2,y'_2)$ and add a path $P_3=[(x,y),(x+\sigma,y)]$ touching $P'$ and having $p$ as an endpoint (see \Cref{splitp}). We refer to this operation as a triangle implant. Note that since no two cubic vertices of $G'$ are adjacent (as $G'$ is the 2-subdivision of $G$), no endpoint of $P$ is strictly contained in another path.

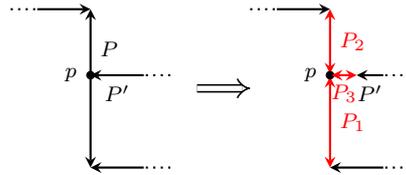
\begin{figure}[htb]

\centering

\begin{tikzpicture}[scale=.7]

\draw[<->,thick,>=stealth] (0,-1.5) -- (0,1.5) node[near end,right] {\footnotesize $P$};

\node[circ,label=left:{\footnotesize $p$}] at (0,.25) {};

\draw[->,thick,>=stealth] (1,-1.5) -- (0,-1.5);

\draw[thick,dotted] (1.5,-1.5) -- (1,-1.5);

\draw[->,thick,>=stealth] (1,.25) -- (0,.25) node[midway,below] {\footnotesize $P'$};

\draw[thick,dotted] (1.5,.25) -- (1,.25);

\draw[->,thick,>=stealth] (-1,1.5) -- (0,1.5);

\draw[thick,dotted] (-1.5,1.5) -- (-1,1.5);

\draw[-Implies,line width=.6pt,double distance=2pt] (2,0) -- (3,0);

\draw[<->,thick,>=stealth,red] (4.5,-1.5) -- (4.5,.22) node[midway,right] {\footnotesize $P_1$};

\draw[<->,thick,>=stealth,red] (4.5,.28) -- (4.5,1.5) node[midway,right] {\footnotesize $P_2$};

\node[circ,label=left:{\footnotesize $p$}] at (4.5,.25) {};

\draw[->,thick,>=stealth] (5.5,-1.5) -- (4.5,-1.5);

\draw[thick,dotted] (6,-1.5) -- (5.5,-1.5);

\draw[<->,thick,>=stealth,red] (5,.25) -- (4.53,.25) node[midway,below] {\footnotesize $P_3$};

\draw[<-,thick,>=stealth] (5,.25) -- (5.5,.25) node[midway,below] {\footnotesize $P'$};

\draw[thick,dotted] (6,.25) -- (5.5,.25);

\draw[->,thick,>=stealth] (3.5,1.5) -- (4.5,1.5);

\draw[thick,dotted] (3,1.5) -- (3.5,1.5);

\end{tikzpicture}

\caption{A triangle implant.}

\label{splitp}

\end{figure}

Next, we remove all the paths introduced by the triangle implants. The resulting representation is then a disjoint union of maximal snakes, each containing at least two paths lying either on the same row or on the same column. For every maximal snake, consider any two such paths and merge them into one path. By then reintroducing all the removed paths from the triangle implants, it is easy to see that we obtain a $0$-bend CPG representation of $L(G')$.
\end{proof}


\section{$3$-coloring $B_1$-EPG graphs}
\label{sec:b1epg}

It was shown in \cite{cpg} that {\sc 3-Colorability} is $\mathsf{NP}$-complete in $B_0$-CPG. Since $B_0$-CPG is a subclass of $B_2$-EPG, it follows that this problem is $\mathsf{NP}$-complete in $B_2$-EPG. On the other hand, {\sc 3-Colorability} is polynomial-time solvable in $B_0$-EPG as this class coincides with that of interval graphs. We now settle the open case by showing that {\sc 3-Colorability} is $\mathsf{NP}$-complete in $B_1$-EPG.

\begin{theorem}
\label{thm:3color}
{\sc 3-Colorability} is $\mathsf{NP}$-complete for planar $B_1$-EPG graphs.
\end{theorem}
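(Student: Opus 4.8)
\emph{Membership in $\mathsf{NP}$} is immediate, since a proper $3$-colouring is a polynomial certificate. For hardness I would reduce from \textsc{3-Colorability} restricted to planar graphs of maximum degree at most $4$, a classical $\mathsf{NP}$-complete problem. Given such a graph $G$, the plan is to produce, in polynomial time, a planar $B_1$-EPG graph $G'$ together with an explicit $1$-bend EPG representation, so that $G'$ is $3$-colourable if and only if $G$ is. As a skeleton for the layout I would first fix, using \cite{tamassia}, an orthogonal grid drawing of $G$ in which each vertex is a grid-point and each edge is a rectilinear curve with a bounded number of bends; every gadget is then built inside a thin corridor around this drawing.

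The two basic gadgets are an \emph{inequality gadget} and a \emph{copy (equality) gadget}. Inequality is free: two paths sharing a single grid-edge are adjacent, hence receive distinct colours. For equality I use the \emph{diamond} ($K_4$ minus an edge) with non-adjacent pair $\{a,b\}$ and adjacent diagonal $\{c,d\}$: in every proper $3$-colouring one has $\mathrm{col}(c)\neq\mathrm{col}(d)$ and $\mathrm{col}(a),\mathrm{col}(b)\notin\{\mathrm{col}(c),\mathrm{col}(d)\}$, forcing $\mathrm{col}(a)=\mathrm{col}(b)$, and conversely any common value for $a,b$ extends. Crucially the diamond is $B_1$-EPG: take $P_c,P_d$ to be two nested L-paths both containing a horizontal edge $e_1$ and a vertical edge $e_2$ that meet at a common grid-point, let $P_a$ be a straight path through $e_1$ only and $P_b$ a straight path through $e_2$ only, so that indeed $a\not\sim b$ while all other pairs touch. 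A \emph{wire} transmitting a single colour along a long, possibly bending route is then obtained by chaining diamonds at their degree-$2$ tips; consecutive diamonds follow the corridor, each L-path spending its single bend to turn corners, and the grid is refined enough that only the intended overlaps occur.

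With these pieces the reduction reads as follows. Replace each vertex $u$ by a small planar \emph{hub}, a few L-paths forced to a common colour $\mathrm{col}(u)$ by copy gadgets, exposing one attachment path in each of the (at most four) directions used by the drawing. Replace each edge $uv$ by a wire that copies $\mathrm{col}(u)$ along the corresponding corridor to a tip $w$ placed next to the hub of $v$, and terminate it by a single grid-edge shared with a port $v'$ of that hub. Since the wire enforces $\mathrm{col}(w)=\mathrm{col}(u)$, the port satisfies $\mathrm{col}(v')=\mathrm{col}(v)$, and the terminal edge forces $\mathrm{col}(w)\neq\mathrm{col}(v')$, we recover exactly the constraint $\mathrm{col}(u)\neq\mathrm{col}(v)$. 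A proper $3$-colouring of $G$ therefore extends to $G'$ by colouring every wire with the colour of its source and every diamond diagonal with the two remaining colours; conversely any $3$-colouring of $G'$ restricts, via the wires, to a proper colouring of $G$. Planarity of $G'$ is read off from the drawing (diamonds are planar and glued only at cut vertices, and corridors do not cross because $G$ is drawn planarly), and the exhibited representation is $1$-bend by construction, so $G'$ is a planar $B_1$-EPG graph.

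\emph{The main obstacle} is entirely geometric: one must guarantee that the explicit layout creates \emph{exactly} the intended adjacencies. Concretely, I would need to argue that paths belonging to different gadgets, or to non-consecutive diamonds of the same wire, never share a grid-edge (controlled by sufficient refinement of the grid and by confining each wire to its own corridor); that every path has at most one bend, including at the corners of the orthogonal route and inside the hubs; and that the resulting intersection graph is genuinely planar. Designing the hub so that it offers up to four single-bend ports while being $B_1$-EPG, planar, and colour-rigid is the most delicate step, whereas the wire and terminal-edge analysis is then routine.
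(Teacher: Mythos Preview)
Your approach is essentially the paper's: reduce from planar max-degree-$4$ \textsc{3-Colorability}, take a Tamassia orthogonal drawing, and replace each edge by a chain of diamonds (your ``wire'') that propagates a colour and terminates in an inequality. The one substantive difference is that you introduce a separate \emph{hub} gadget at each vertex, built from several L-paths glued by copy gadgets, and you flag its design as ``the most delicate step''. The paper sidesteps this entirely: each vertex $u$ is represented by a \emph{single} short vertical path $P_u$ through the grid-point $(x_u,y_u)$, and the four possible incident edges use four distinct grid-edges of $P_u$ (above, below, and the two edges adjacent to $(x_u,y_u)$, with the horizontal directions handled by bending the first diamond-path onto $P_u$). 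No colour-rigidity argument is needed because there is only one path at $u$; planarity and the absence of spurious overlaps follow from a single grid refinement. So your outline is correct, but the hub you worry about is unnecessary --- a lone vertical segment already gives four $B_1$-EPG ports.
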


\begin{proof}
We exhibit a polynomial reduction from {\sc 3-Colorability} restricted to planar graphs of maximum degree 4, which was shown to be $\mathsf{NP}$-complete in \cite{garey}. Given a planar graph $G=(V,E)$ of maximum degree 4, we construct a planar $B_1$-EPG graph $G'$ such that $G'$ is 3-colorable if and only if $G$ is 3-colorable. By \citep{tamassia}, we can find in linear time an embedding of $G$ on a grid with area $O(|V|^2)$, such that vertices are mapped to grid-points and edges are mapped to pairwise interiorly disjoint grid-paths with at most 4 bends connecting the two grid-points corresponding to the endvertices. Let $\mathcal{E} = (\mathcal{V}, \mathcal{P})$ be such an embedding of $G$, where $\mathcal{V}$ is the set of grid-points in one-to-one correspondence with $V$ and $\mathcal{P}$ is the set of grid-paths in one-to-one correspondence with $E$. 

For any vertex $u \in V$, we denote by $(x_u,y_u)$ the grid-point in $\mathcal{V}$ corresponding to $u$. For any edge $uv \in E$, we denote by $P_{uv}$ the path in $\mathcal{P}$ corresponding to $uv$ and described by the sequence $s(P_{uv}) = (x_u,y_u),(x_v,y_v)$, if $P_{uv}$ has no bend, or by the sequence $s(P_{uv}) = (x_uy_u),(x_1,y_1), \dots, (x_k,y_k), (x_v,y_v)$ where, for each $i =1,\ldots ,k$, $(x_i,y_i)$ is a bend-point of $P_{uv}$ (note that $k \leq 4$). The graph $G'$ is then obtained as follows. For any edge $uv \in E$, if the path $P_{uv} \in \mathcal{P}$ contains $k$ bends (for some $k = 0, 1, 2 , 3, 4$), we replace the edge $uv$ with a sequence $d_{uv}$ of $k+1$ diamonds by identifying $u$ with the vertex of degree 2 in the first diamond of the sequence and connecting $v$ to the vertex of degree 2 in the last diamond of the sequence (see \Cref{diamonds1} where $P_{uv}$ has 3 bends and $P_{u'v'}$ has no bend). Clearly, $G'$ is planar. We then construct from $\mathcal{E}$ a 1-bend EPG representation of~$G'$ as follows.

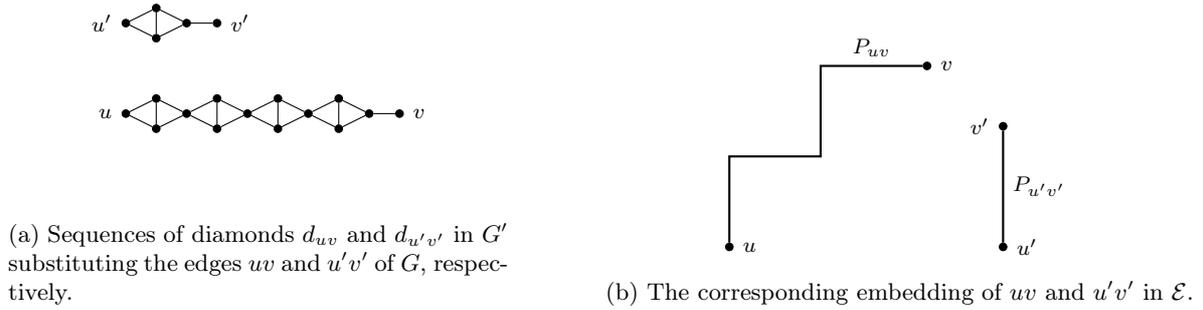
\begin{figure}[htb]

\centering

\begin{subfigure}[b]{.4\textwidth}

\centering

\begin{tikzpicture}[scale=.4]

\node[circ,label=left:{\footnotesize $u'$}] (u') at (0,3) {};

\node[circ] (1b) at (1,3.5) {};

\node[circ] (2b) at (1,2.5) {};

\node[circ] (3b) at (2,3) {};

\node[circ,label=right:{\footnotesize $v'$}] (v') at (3,3) {};

\draw (u') -- (1b)

(u') -- (2b)

(1b) -- (2b)

(1b) -- (3b)

(2b) -- (3b)

(3b) -- (v');

\node[circ,label=left:{\footnotesize $u$}] (u) at (0,0) {};

\node[circ] (1) at (1,.5) {};

\node[circ] (2) at (1,-.5) {};

\node[circ] (3) at (2,0) {};

\node[circ] (4) at (3,.5) {};

\node[circ] (5) at (3,-.5) {};

\node[circ] (6) at (4,0) {};

\node[circ] (a) at (5,.5) {};

\node[circ] (b) at (5,-.5) {};

\node[circ] (7) at (6,0) {};

\node[circ,label=right:{\footnotesize $v$}] (v) at (9,0) {};

\node[invisible] at (0,-3) {};

\node[circ] (8) at (7,.5) {};

\node[circ] (9) at (7,-.5) {};

\node[circ] (10) at (8,0) {};

\draw (u) -- (1)

(u) -- (2)

(1) -- (2)

(1) -- (3)

(2) -- (3)

(3) -- (4)

(3) -- (5)

(4) -- (5)

(4) -- (6)

(5) -- (6)

(6) -- (a)

(6) -- (b)

(a) -- (b)

(a) -- (7)

(b) -- (7)

(7) -- (8)

(7) -- (9)

(8) -- (9)

(8) -- (10)

(9) -- (10)

(10) -- (v);

\end{tikzpicture}

\caption{Sequences of diamonds $d_{uv}$ and $d_{u'v'}$ in~$G'$ substituting the edges $uv$ and $u'v'$ of $G$, respectively.}

\label{diamonds1}

\end{subfigure}
\hspace*{.8cm}
\begin{subfigure}[b]{.5\textwidth}

\centering

\begin{tikzpicture}[scale=.4]

\node[circ,label=right:{\footnotesize $u$}] (u) at (0,1) {};

\node[circ,label=right:{\footnotesize $v$}] (v) at (6.5,7) {};

\draw[thick] (u) -- (0,4) -- (3,4) -- (3,7) -- (v) node[midway,above] {\footnotesize $P_{uv}$};

\node[circ,label=right:{\footnotesize $u'$}] (u') at (9,1) {};

\node[circ,label=left:{\footnotesize $v'$}] (v') at (9,5) {};

\draw[thick] (u') -- (v') node[midway,right] {\footnotesize $P_{u'v'}$};

\end{tikzpicture}

\caption{The corresponding embedding of $uv$ and $u'v'$ in $\mathcal{E}$.}

\label{diamonds2}

\end{subfigure}

\caption{Replacing an edge in $G$ with a sequence of diamonds.}

\end{figure}

We first refine the grid $3$ times and let $\sigma$ be the new grid-step. Clearly, each segment of a path in $\mathcal{P}$ contains now at least $8$ grid-edges and any two grid-points in $\mathcal{V}$ lying on the same column are separated by at least $8$ grid-edges. We then associate with each vertex $u \in V$ a vertical path $P_u$ containing the grid-point $(x_u,y_u)$ as follows. If the grid-edge above (resp. below) $(x_u,y_u)$ is not used by any path in $\mathcal{P}$, then the top (resp. bottom) endpoint of $P_u$ is $(x_u,y_u+\sigma)$ (resp. $(x_u,y_u-\sigma)$); otherwise, the top (resp. bottom) endpoint of $P_u$ is $(x_u,y_u+2\sigma)$ (resp. $(x_u,y_u-2\sigma)$) (see \Cref{fig:Pu}). Then, for any $v \in V$ such that $uv \in E$, we will construct the paths corresponding to vertices in $d_{uv}$ so that the following hold.

\begin{itemize}

\item If $P_{uv}$ uses the grid-edge above $(x_u,y_u)$ then the paths(s) corresponding to the neighbor(s) of $u$ in $d_{uv}$ will intersect $P_u$ on the grid-edge $[(x_u,y_u+\sigma)(x_u,y_u+2\sigma)]$.

\item If $P_{uv}$ uses the grid-edge below $(x_u,y_u)$ then the paths(s) corresponding to the neighbor(s) of $u$ in $d_{uv}$ will intersect $P_u$ on the grid-edge $[(x_u,y_u-\sigma)(x_u,y_u-2\sigma)]$.

\item If $P_{uv}$ uses the grid-edge to the left of $(x_u,y_u)$ then the paths(s) corresponding to the neighbor(s) of $u$ in $d_{uv}$ will intersect $P_u$ on the grid-edge $[(x_u,y_u)(x_u,y_u+\sigma)]$.

\item If $P_{uv}$ uses the grid-edge to the right of $(x_u,y_u)$ then the paths(s) corresponding to the neighbor(s) of $u$ in $d_{uv}$ will intersect $P_u$ on the grid-edge $[(x_u,y_u)(x_u,y_u-\sigma)]$.

\end{itemize}

\begin{figure}[htb]

\centering

\begin{tikzpicture}[scale=.5]

\draw[dotted] (-2,-2) grid (2,5);

\node[circ,label=above left:{\footnotesize $(x_u,y_u)$}] (u) at (0,0) {};

\draw[-,thick] (u) -- (0,3); 
\draw[thick,dashed] (0,3) -- (0,4);
\draw[-,thick] (0,4) -- (0,5) -- (1,5);

\draw[thick,dashed] (1,5) -- (2,5);

\draw[-,thick] (u) -- (1,0);

\draw[thick,dashed] (1,0) -- (1.5,0);

\draw[-,thick] (u) -- (-1,0);

\draw[thick,dashed] (-1.5,0) -- (-1,0);

\draw[thick,blue,<->,>=stealth] (0,2) -- (0,-1) node[pos=.3,right] {\footnotesize $P_u$};

\end{tikzpicture}

\caption{Constructing the path $P_u$ (in blue) corresponding to the vertex $u$ in the case the grid-edge above $(x_u,y_u)$ is used by some path in $\mathcal{P}$ and the grid-edge below $(x_{u}, y_{u})$ is used by no path in $\mathcal{P}$.}

\label{fig:Pu}

\end{figure}
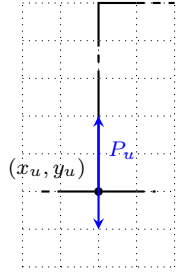

We now explain how the construction satisfying the properties above is done. Consider an edge $uv \in E$ and suppose that $u$ has been identified with a vertex of degree two in $d_{uv}$ and $v$ is adjacent to the other vertex of degree two in $d_{uv}$. Assume first that $P_{uv}$ has no bend. If $P_{uv}$ is horizontal, say $P_{uv}$ lies to the right of $(x_u,y_u)$ (the other case is symmetric), then we add the following three paths corresponding to the other vertices of the diamond: $s(P_1) = s(P_2) = (x_u,y_u-\sigma), (x_u,y_u), (x_u + 2\sigma, y_{u})$ and $s(P_3) = (x_u + \sigma, y_{u}), (x_v,y_v), (x_{v}, y_v +\sigma)$ (see \Cref{fig:nobendhori}). Otherwise $P_{uv}$ is vertical, say $P_{uv}$ lies above $(x_u,y_u)$ (the other case is symmetric), in which case we add the following three paths corresponding to the other vertices of the diamond: $s(P_1) = s(P_2) = (x_u, y_u + \sigma), (x_u,y_u + 3\sigma)$ and $s(P_3) = (x_u,y_u+2\sigma), (x_v,y_v-\sigma)$ (see \Cref{fig:nobendvert}).

\begin{figure}[htb]

\centering

\begin{subfigure}[b]{.45\textwidth}

\centering

\begin{tikzpicture}

\draw[step=.2,dotted] (-.8,-.5) grid (1.2,.5);

\node[circ,label=below:{\footnotesize $u$}] (u) at (-.6,0) {};

\node[circ,label=below:{\footnotesize $v$}] (v) at (1,0) {};

\draw[-,thick] (u) -- (v) node[midway,above] {\footnotesize $P_{uv}$};

\node[draw=none] at (1.7,0) {$\Rightarrow$};

\draw[dotted,step=.5] (2.2,-1) grid (5.2,1);

\node[circ] at (2.5,0) {};

\node[circ] at (5,0) {};

\draw[blue,thick] (2.5,-.5) -- (2.5,.5) node[midway,left] {\footnotesize $P_u$};

\draw[blue,thick] (5,-.5) -- (5,.5) node[midway,right] {\footnotesize $P_v$};

\draw[thick] (2.6,-.5) -- (2.6,-.1) node[midway,right] {\footnotesize $P_1=P_2$} -- (3.5,-.1);

\draw[thick] (3,0) -- (3.8,0) node[pos=1,above] {\footnotesize $P_3$};
\draw[thick,dashed] (3.8,0) -- (4.5,0);
\draw[-,thick]  (4.5,0) -- (4.9,0) -- (4.9,.5);

\end{tikzpicture}

\caption{$P_{uv}$ is horizontal.}

\label{fig:nobendhori}

\end{subfigure}
\hspace*{.5cm}
\begin{subfigure}[b]{.45\textwidth}

\centering

\begin{tikzpicture}

\draw[step=.2,dotted] (-.5,-.6) grid (.5,1.4);

\node[circ,label=left:{\footnotesize $u$}] (u) at (0,-.4) {};

\node[circ,label=left:{\footnotesize $v$}] (v) at (0,1.2) {};

\draw[-,thick] (u) -- (v) node[midway,right] {\footnotesize $P_{uv}$};

\node[draw=none] at (1,.4) {$\Rightarrow$};

\draw[dotted,step=.5] (1.3,-1.2) grid (2.7,2.7);

\node[circ] at (2,-1) {};

\node[circ] at (2,2.5) {};

\draw[blue,thick] (2,-1) -- (2,0) node[midway,left] {\footnotesize $P_u$};

\draw[blue,thick] (2,2.5) -- (2,1.5) node[midway,left] {\footnotesize $P_v$};

\draw[thick] (2.2,-.5) -- (2.2,.5) node[midway,right] {\footnotesize $P_1=P_2$};

\draw[thick] (2.1,0) -- (2.1,.75);
\draw[thick,dashed] (2.1,.6) -- (2.1,1.4);
\draw[thick] (2.1,1.4) -- (2.1,2) node[pos=.1,right] {\footnotesize $P_3$};

\end{tikzpicture}

\caption{$P_{uv}$ is vertical.}

\label{fig:nobendvert}

\end{subfigure}

\caption{$P_{uv}$ has no bend.}

\end{figure}
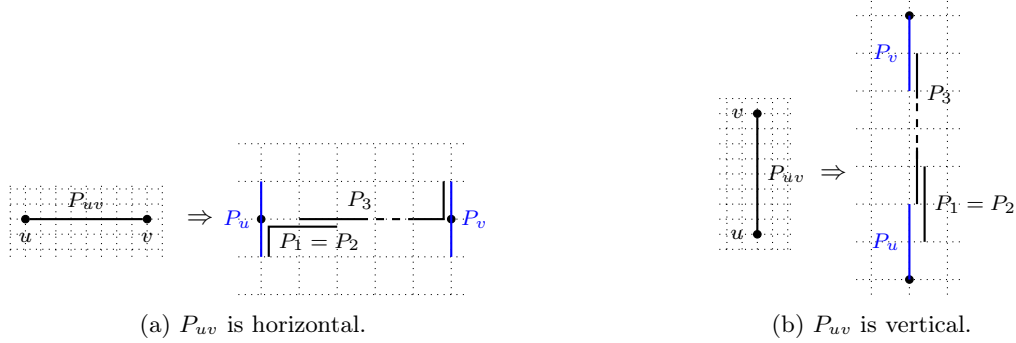

Second, assume that $P_{uv}$ has one bend (the cases where $P_{uv}$ has more than one bend are dealt with similarly). Suppose first that the segment $[(x_u,y_u),(x_1,y_1)]$ of $P_{uv}$ is horizontal, say it lies to the right of $(x_u,y_u)$ (the other case is symmetric). Further suppose that $y_1 < y_v$ (the other case is symmetric). Then, we add the following six paths corresponding to the other vertices in $d_{uv}$: $s(P_1) = s(P_2) = (x_u,y_u-\sigma), (x_u,y_u), (x_u + 2\sigma,y_u)$, $s(P_3) = (x_u+\sigma,y_u), (x_1,y_1), (x_1,y_1+2\sigma)$, $s(P_4) = s(P_5) = (x_1,y_1+\sigma), (x_v,y_v-2\sigma)$ and $s(P_6) = (x_v,y_v-3\sigma), (x_v,y_v-\sigma)$ (see \Cref{fig:1bendhori}). Suppose now that the segment $[(x_u,y_u)(x_1,y_1)]$ is vertical, say it lies above $(x_u,y_u)$ (the other case is symmetric). Further suppose that $x_1 < x_v$ (the other case is symmetric). Then, we add the following six paths corresponding to the other vertices in $d_{uv}$: $s(P_1) = s(P_2) = (x_u,y_u+\sigma), (x_u,y_u+3\sigma)$, $s(P_3) = (x_u,y_u+2\sigma), (x_1,y_1), (x_1+2\sigma,y_1)$, $s(P_4) = s(P_5) = (x_1+\sigma,y_1), (x_v-\sigma,y_v)$ and $s(P_6) = (x_v-2\sigma,y_v), (x_v,y_v), (x_v,y_v+\sigma)$ (see \Cref{fig:1bendvert}).

\begin{figure}[htb]

\centering

\begin{subfigure}[b]{.48\textwidth}

\centering

\begin{tikzpicture}

\draw[step=.2,dotted] (-.8,-.2) grid (1.2,1.8);

\node[circ,label=left:{\footnotesize $u$}] (u) at (-.6,0) {};

\node[circ,label=right:{\footnotesize $v$}] (v) at (1,1.6) {};

\draw[thick] (u) -- (1,0) node[midway,below] {\footnotesize $P_{uv}$} -- (v);

\node[draw=none] at (1.7,.8) {$\Rightarrow$};

\draw[step=.5,dotted] (2.2,-1.2) grid (5.2,2.7);

\node[circ] (u1) at (2.5,-.5) {};

\node[circ] (v1) at (5,2.5) {};

\draw[thick,blue] (2.5,-1) -- (2.5,0) node[midway,left] {\footnotesize $P_u$};

\draw[thick,blue] (v1) -- (5,1.5) node[midway,right] {\footnotesize $P_v$};

\draw[thick] (2.6,-1) -- (2.6,-.4) -- (3.5,-.4) node[midway,above] {\footnotesize $P_1 = P_2$};

\draw[thick] (3,-.5) -- (3.6,-.5);
\draw[thick,dashed] (3.6,-.5) -- (4.6,-.5);
\draw[thick] (4.6,-.5) -- (5,-.5) node[pos=.9,below] {\footnotesize $P_3$} -- (5,.5);

\draw[thick] (5.1,0) -- (5.1,.5);
\draw[thick,dashed] (5.1,.5) -- (5.1,1) node[midway,right] {\footnotesize $P_4 = P_5$};
\draw[thick] (5.1,1) -- (5.1,1.5); 

\draw[thick] (4.9,1) -- (4.9,2) node[midway,left] {\footnotesize $P_6$};

\end{tikzpicture}

\caption{The segment $[(x_u,y_u),(x_1,y_1)]$ of $P_{uv}$ is horizontal.}

\label{fig:1bendhori}

\end{subfigure}
\hspace*{.5cm}
\begin{subfigure}[b]{.45\textwidth}

\centering

\begin{tikzpicture}

\draw[step=.2,dotted] (-.8,-.2) grid (1.2,1.8);

\node[circ,label=left:{\footnotesize $u$}] (u) at (-.6,0) {};

\node[circ,label=right:{\footnotesize $v$}] (v) at (1,1.6) {};

\draw[thick] (u) -- (-.6,1.6) -- (v) node[midway,above] {\footnotesize $P_{uv}$};

\node[draw=none] at (1.7,.8) {$\Rightarrow$};

\draw[step=.5,dotted] (2.2,-.7) grid (5.7,2.7);

\node[circ] (u1) at (2.5,-.5) {};

\node[circ] (v1) at (5.5,2) {};

\draw[thick,blue] (u1) -- (2.5,.45) node[midway,left] {\footnotesize $P_u$};

\draw[thick,blue] (5.5,2.5) -- (5.5,1.5) node[midway,right] {\footnotesize $P_v$};

\draw[thick] (2.6,0) -- (2.6,1) node[midway,right] {\footnotesize $P_1=P_2$};

\draw[thick] (2.5,.55) -- (2.5,1.1);
\draw[thick,dashed] (2.5,1.1) -- (2.5,1.6);
\draw[thick] (2.5,1.6) -- (2.5,2) -- (3.5,2) node[pos=.2,above] {\footnotesize $P_3$};

\draw[thick] (3,1.9) -- (3.6,1.9);
\draw[thick,dashed] (3.6,1.9) -- (4.3,1.9) node[midway,below] {\footnotesize $P_4 = P_5$};
\draw[thick] (4.3,1.9) -- (5,1.9);

\draw[thick] (4.5,2) -- (5.4,2) node[midway,above] {\footnotesize $P_6$} --(5.4,2.5);

\end{tikzpicture}

\caption{The segment $[(x_u,y_u),(x_1,y_1)]$ of $P_{uv}$ is vertical.}

\label{fig:1bendvert}

\end{subfigure}

\caption{$P_{uv}$ has one bend.}

\end{figure}
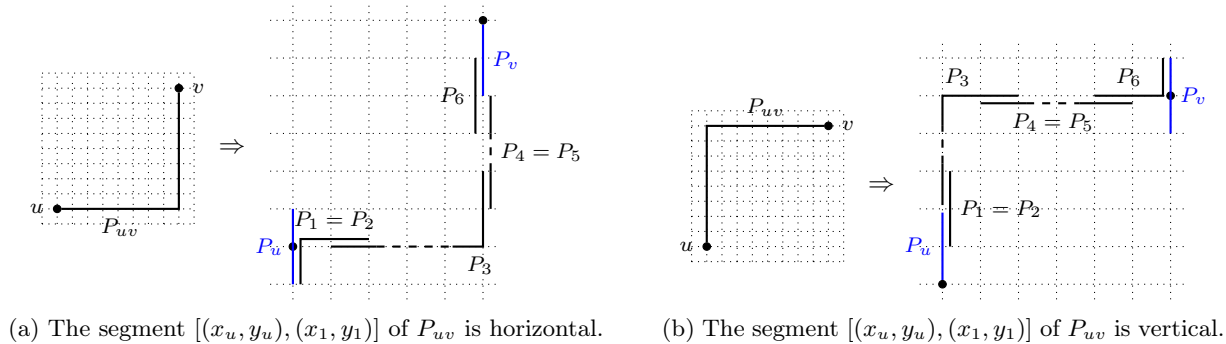

By proceeding in this way for every edge of $G$, we obtain a $1$-bend EPG representation of $G'$. Since in any 3-coloring of the diamond the vertices of degree 2 have the same color, it is then easy to see that $G$ is $3$-colorable if and only if $G'$ is 3-colorable.
\end{proof}


\section{Concluding remarks}

In this paper, we showed that it is $\mathsf{NP}$-complete to recognize CPG graphs and $B_{k}$-CPG graphs for any $k \geq 0$ (the case $k = 0$ was addressed in \citep{cpg}). Moreover, the problem remains $\mathsf{NP}$-hard even within the class of planar graphs for $k \geq 3$. We leave as an open problem to determine the complexity of recognizing planar $B_{k}$-CPG graphs for $k \leq 2$. 

\bibliographystyle{plainnat}
\bibliography{references}

\begin{thebibliography}{35}
\providecommand{\natexlab}[1]{#1}
\providecommand{\url}[1]{\texttt{#1}}
\expandafter\ifx\csname urlstyle\endcsname\relax
  \providecommand{\doi}[1]{doi: #1}\else
  \providecommand{\doi}{doi: \begingroup \urlstyle{rm}\Url}\fi

\bibitem[Aerts and Felsner(2015)]{aerts}
N.~Aerts and S.~Felsner.
\newblock Vertex contact representations of paths on a grid.
\newblock \emph{Journal of Graph Algorithms and Applications}, 19\penalty0
  (3):\penalty0 817--849, 2015.

\bibitem[Alc{\'o}n et~al.(2017)Alc{\'o}n, Bonomo, and Mazzoleni]{alcon}
L.~Alc{\'o}n, F.~Bonomo, and M.~P. Mazzoleni.
\newblock Vertex intersection graphs of paths on a grid: Characterization
  within block graphs.
\newblock \emph{Graphs and Combinatorics}, 33\penalty0 (4):\penalty0 653--664,
  2017.

\bibitem[Alc{\'o}n et~al.(2018)Alc{\'o}n, Bonomo, Dur{\'a}n, Gutierrez,
  Mazzoleni, Ries, and Valencia-Pabon]{NCA}
L.~Alc{\'o}n, F.~Bonomo, G.~Dur{\'a}n, M.~Gutierrez, M.~P. Mazzoleni, B.~Ries,
  and M.~Valencia-Pabon.
\newblock On the bend number of circular-arc graphs as edge intersection graphs
  of paths on a grid.
\newblock \emph{Discrete Applied Mathematics}, 234:\penalty0 12--21, 2018.

\bibitem[Asinowski et~al.(2012)Asinowski, Cohen, Golumbic, Limouzy, Lipshteyn,
  and Stern]{asinowski}
A.~Asinowski, E.~Cohen, M.~C. Golumbic, V.~Limouzy, M.~Lipshteyn, and M.~Stern.
\newblock Vertex intersection graphs of paths on a grid.
\newblock \emph{Journal of Graph Algorithms and Applications}, 16\penalty0
  (2):\penalty0 129--150, 2012.

\bibitem[Biedl and Stern(2010)]{biedl}
T.~C. Biedl and M.~Stern.
\newblock On edge-intersection graphs of k-bend paths in grids.
\newblock \emph{Discrete Mathematics {\&} Theoretical Computer Science},
  12\penalty0 (1):\penalty0 1--12, 2010.

\bibitem[Bonomo et~al.(2017)Bonomo, Mazzoleni, and Stein]{bonomo}
F.~Bonomo, M.~P. Mazzoleni, and M.~Stein.
\newblock Clique coloring {B1-EPG} graphs.
\newblock \emph{Discrete Mathematics}, 340\penalty0 (5):\penalty0 1008--1011,
  2017.

\bibitem[Booth and Lueker(1976)]{BL76}
K.~S. Booth and G.~S. Lueker.
\newblock Testing for the consecutive ones property, interval graphs, and graph
  planarity using {PQ}-tree algorithms.
\newblock \emph{Journal of Computer and System Sciences}, 13\penalty0
  (3):\penalty0 335--379, 1976.

\bibitem[Chaplick and Ueckerdt(2013)]{chaplick12}
S.~Chaplick and T.~Ueckerdt.
\newblock Planar graphs as {VPG}-graphs.
\newblock \emph{Journal of Graph Algorithms and Applications}, 17\penalty0
  (4):\penalty0 475--494, 2013.

\bibitem[Chaplick et~al.(2012)Chaplick, Jel{\'i}nek, Kratochv{\'i}l, and
  Vysko{\v{c}}il]{CJKV12}
S.~Chaplick, V.~Jel{\'i}nek, J.~Kratochv{\'i}l, and T.~Vysko{\v{c}}il.
\newblock Bend-bounded path intersection graphs: Sausages, noodles, and waffles
  on a grill.
\newblock In M.~C. Golumbic, M.~Stern, A.~Levy, and G.~Morgenstern, editors,
  \emph{Graph-Theoretic Concepts in Computer Science}, pages 274--285. Springer
  Berlin Heidelberg, 2012.

\bibitem[Chaplick et~al.(2013)Chaplick, Kobourov, and Ueckerdt]{chaplick13}
S.~Chaplick, S.~G. Kobourov, and T.~Ueckerdt.
\newblock Equilateral {L}-contact graphs.
\newblock In A.~Brandst{\"a}dt, K.~Jansen, and R.~Reischuk, editors,
  \emph{Graph-Theoretic Concepts in Computer Science}, pages 139--151, 2013.

\bibitem[Clark et~al.(1990)Clark, Colbourn, and Johnson]{clark}
B.~N. Clark, C.~J. Colbourn, and D.~S. Johnson.
\newblock Unit disk graphs.
\newblock \emph{Discrete Mathematics}, 86\penalty0 (1):\penalty0 165--177,
  1990.

\bibitem[Cohen et~al.(2016)Cohen, Golumbic, Trotter, and Wang]{cohen}
E.~Cohen, M.~C. Golumbic, W.~T. Trotter, and R.~Wang.
\newblock Posets and {VPG} graphs.
\newblock \emph{Order}, 33\penalty0 (1):\penalty0 39--49, 2016.

\bibitem[Dahlhaus et~al.(1994)Dahlhaus, Johnson, Papadimitriou, Seymour, and
  Yannakakis]{Dahlhaus}
E.~Dahlhaus, D.~S. Johnson, C.~H. Papadimitriou, P.~D. Seymour, and
  M.~Yannakakis.
\newblock The complexity of multiterminal cuts.
\newblock \emph{SIAM Journal on Computing}, 23:\penalty0 864--894, 1994.

\bibitem[Deniz et~al.(2018)Deniz, Galby, Munaro, and Ries]{cpg}
Z.~Deniz, E.~Galby, A.~Munaro, and B.~Ries.
\newblock On contact graphs of paths on a grid.
\newblock In T.~Biedl and A.~Kerren, editors, \emph{Graph Drawing and Network
  Visualization}, pages 317--330, 2018.

\bibitem[Diestel(2005)]{diestel}
R.~Diestel.
\newblock \emph{{G}raph {T}heory}.
\newblock Graduate Texts in Mathematics. Springer, 2005.

\bibitem[Epstein et~al.(2013)Epstein, Golumbic, and Morgenstern]{epstein}
D.~Epstein, M.~C. Golumbic, and G.~Morgenstern.
\newblock Approximation algorithms for {$B_1$-EPG} graphs.
\newblock In F.~Dehne, R.~Solis-Oba, and J.-R. Sack, editors, \emph{Algorithms
  and Data Structures}, pages 328--340, 2013.

\bibitem[Felsner et~al.(2016)Felsner, Knauer, Mertzios, and Ueckerdt]{felsner}
S.~Felsner, K.~Knauer, G.~B. Mertzios, and T.~Ueckerdt.
\newblock Intersection graphs of {L}-shapes and segments in the plane.
\newblock \emph{Discrete Applied Mathematics}, 206:\penalty0 48--55, 2016.

\bibitem[Francis and Lahiri(2016)]{francis}
M.~C. Francis and A.~Lahiri.
\newblock {VPG} and {EPG} bend-numbers of {H}alin graphs.
\newblock \emph{Discrete Applied Mathematics}, 215:\penalty0 95--105, 2016.

\bibitem[Fraysseix et~al.(1991)Fraysseix, Mendez, and Pach]{planbip}
H.~De Fraysseix, P.~Ossona~De Mendez, and J.~Pach.
\newblock Representation of planar graphs by segments.
\newblock \emph{Intuitive Geometry}, 63:\penalty0 109--117, 1991.

\bibitem[Garey et~al.(1976)Garey, Johnson, and Stockmeyer]{garey}
M.~R. Garey, D.~S. Johnson, and L.~Stockmeyer.
\newblock Some simplified {NP}-complete graph problems.
\newblock \emph{Theoretical Computer Science}, 1\penalty0 (3):\penalty0
  237--267, 1976.

\bibitem[Golumbic et~al.(2009)Golumbic, Lipshteyn, and Stern]{golumbic}
M.~C. Golumbic, M.~Lipshteyn, and M.~Stern.
\newblock Edge intersection graphs of single bend paths on a grid.
\newblock \emph{Networks}, 54\penalty0 (3):\penalty0 130--138, 2009.

\bibitem[Gon\c{c}alves et~al.(2018)Gon\c{c}alves, Isenmann, and Pennarun]{gonc}
D.~Gon\c{c}alves, L.~Isenmann, and C.~Pennarun.
\newblock Planar graphs as {L}-intersection or {L}-contact graphs.
\newblock In \emph{Proceedings of the Twenty-Ninth Annual ACM-SIAM Symposium on
  Discrete Algorithms}, SODA '18, pages 172--184, 2018.

\bibitem[Heldt et~al.(2014{\natexlab{a}})Heldt, Knauer, and Ueckerdt]{heldt1}
D.~Heldt, K.~Knauer, and T.~Ueckerdt.
\newblock On the bend-number of planar and outerplanar graphs.
\newblock \emph{Discrete Applied Mathematics}, 179:\penalty0 109--119,
  2014{\natexlab{a}}.

\bibitem[Heldt et~al.(2014{\natexlab{b}})Heldt, Knauer, and Ueckerdt]{heldt2}
D.~Heldt, K.~Knauer, and T.~Ueckerdt.
\newblock Edge-intersection graphs of grid paths: {T}he bend-number.
\newblock \emph{Discrete Applied Mathematics}, 167:\penalty0 144--162,
  2014{\natexlab{b}}.

\bibitem[{Hlin\u{e}n\'y}(1998)]{Hlineny}
Petr {Hlin\u{e}n\'y}.
\newblock Classes and recognition of curve contact graphs.
\newblock \emph{Journal of Combinatorial Theory, Series B}, 74\penalty0
  (1):\penalty0 87--103, 1998.

\bibitem[Kobourov et~al.(2013)Kobourov, Ueckerdt, and Verbeek]{kobourov}
S.~Kobourov, T.~Ueckerdt, and K.~Verbeek.
\newblock Combinatorial and geometric properties of planar {L}aman graphs.
\newblock In \emph{Proceedings of the Twenty-fourth Annual ACM-SIAM Symposium
  on Discrete Algorithms}, SODA '13, pages 1668--1678, 2013.

\bibitem[Kratochv\'{i}l(1991)]{Kra91}
J.~Kratochv\'{i}l.
\newblock String graphs. {II}. {R}ecognizing string graphs is
  $\mathsf{NP}$-hard.
\newblock \emph{Journal of Combinatorial Theory, Series B}, 52\penalty0
  (1):\penalty0 67--78, 1991.

\bibitem[Kratochv\'{i}l(1994)]{Kra94}
J.~Kratochv\'{i}l.
\newblock A special planar satisfiability problem and a consequence of its
  $\mathsf{NP}$-completeness.
\newblock \emph{Discrete Applied Mathematics}, 52\penalty0 (3):\penalty0
  233--252, 1994.

\bibitem[Kratochv\'{i}l and Matou\v{s}ek(1991)]{KM91}
J.~Kratochv\'{i}l and J.~Matou\v{s}ek.
\newblock String graphs requiring exponential representations.
\newblock \emph{Journal of Combinatorial Theory, Series B}, 53\penalty0
  (1):\penalty0 1--4, 1991.

\bibitem[Mohar(2001)]{mohar}
B.~Mohar.
\newblock Face covers and the genus problem for apex graphs.
\newblock \emph{Journal of Combinatorial Theory, Series B}, 82\penalty0
  (1):\penalty0 102--117, 2001.

\bibitem[Munaro(2017{\natexlab{a}})]{munaro}
A.~Munaro.
\newblock Bounded clique cover of some sparse graphs.
\newblock \emph{Discrete Mathematics}, 340\penalty0 (9):\penalty0 2208--2216,
  2017{\natexlab{a}}.

\bibitem[Munaro(2017{\natexlab{b}})]{munaro1}
A.~Munaro.
\newblock Boundary classes for graph problems involving non-local properties.
\newblock \emph{Theoretical Computer Science}, 692:\penalty0 46--71,
  2017{\natexlab{b}}.

\bibitem[Pergel and Rz\k{a}\.{z}ewski(2017)]{PR17}
M.~Pergel and P.~Rz\k{a}\.{z}ewski.
\newblock On edge intersection graphs of paths with 2 bends.
\newblock \emph{Discrete Applied Mathematics}, 226:\penalty0 106--116, 2017.

\bibitem[Schaefer et~al.(2003)Schaefer, Sedgwick, and
  \v{S}tefankovi\v{c}]{SSS03}
M.~Schaefer, E.~Sedgwick, and Daniel \v{S}tefankovi\v{c}.
\newblock Recognizing string graphs in $\mathsf{NP}$.
\newblock \emph{Journal of Computer and System Sciences}, 67\penalty0
  (2):\penalty0 365--380, 2003.

\bibitem[Tamassia and Tollis(1989)]{tamassia}
R.~Tamassia and I.~G. Tollis.
\newblock Planar grid embedding in linear time.
\newblock \emph{IEEE Transactions on Circuits and Systems}, 36\penalty0
  (9):\penalty0 1230--1234, 1989.

\end{thebibliography}

\end{document}